\DeclareMathOperator{\Th}{Th}
\DeclareMathOperator{\pd}{pd}
\def\primes{{\cal P}}
\def\Enn{\mathbb{N}}
\def\Zee{\mathbb{Z}}
\def\Que{\mathbb{Q}}
\def\Quep{\Que^{\geq 0}}
\def\lmod#1 #2{#1\ ({\rm mod}\ #2)}
\def\quo{{\rm quo}}
\def\pref{{\rm pref}}
\newcommand{\monus}{\stackrel{{}^{\scriptstyle .}}{\smash{-}}}
\def\rela{\vartriangleleft}
\def\divides{{ \  | \  }}
\def\nodiv{{\, |\kern-3.9pt/}}
\theoremstyle{plain}
\newtheorem{theorem}{Theorem}
\newtheorem{corollary}[theorem]{Corollary}
\newtheorem{lemma}[theorem]{Lemma}
\theoremstyle{definition}
\newtheorem{definition}[theorem]{Definition}
\newtheorem{example}[theorem]{Example}
\newtheorem{conjecture}[theorem]{Conjecture}
\theoremstyle{remark}
\newtheorem{remark}[theorem]{Remark}
\begin{document}

\title{Automatic Sets of Rational Numbers}

\author{Eric Rowland\\
Universit\'e de Li\`ege \\
D\'epartement de Math\'ematiques \\
Grande Traverse 12 (B37) \\
4000 Li\`ege \\
Belgium \\
{\tt erowland@ulg.ac.be} \\
\ \\
Jeffrey Shallit\footnote{Corresponding author.} \\
School of Computer Science \\
University of Waterloo \\
Waterloo, ON  N2L 3G1 \\
Canada \\
{\tt shallit@cs.uwaterloo.ca} \\
}

\maketitle

\begin{abstract}
The notion of a $k$-automatic set of integers is well-studied.
We develop a new notion --- the $k$-automatic set of rational numbers --- and
prove basic properties of these sets, including closure properties and
decidability.  
\end{abstract}

\section{Introduction}
\label{intro}

Let $k$ be an integer $\geq 2$, and let $\Enn = \lbrace 0, 1, 2, \ldots
\rbrace$ denote the set of non-negative integers.  Let
$\Sigma_k = \lbrace 0, 1, \ldots, k-1 \rbrace$ be an alphabet of $k$ letters.
Given a word $w = a_1 a_2 \cdots a_t \in \Sigma_k^*$, we let
$[w]_k$ denote the integer that it represents in
base $k$; namely,
\begin{equation}
[w]_k = \sum_{1 \leq i \leq t} a_i k^{t-i},
\label{wk}
\end{equation}
where (as usual) an empty sum is equal to $0$.
For example, $[0101011]_2 = 43$.  

Note that in this framework, every element of $\Enn$ has infinitely many
distinct representations as words, each having a certain number of
\textit{leading zeroes}.  Among all such representations, the one
with no leading zeroes is called the \textit{canonical representation};
it is an element of $C_k := \lbrace \epsilon \rbrace \ \cup \ 
(\Sigma_k - \lbrace 0 \rbrace)\Sigma_k^*$.    For an integer $n \geq 0$,
we let $(n)_k$ denote its canonical representation.  Thus, for
example, $(43)_2 = 101011$.  Note that
$\epsilon$ is the canonical representation of $0$.

Given a language $L \subseteq \Sigma_k^*$, we can define the set of
integers it represents, as follows:
\begin{equation}
[L]_k = \lbrace [w]_k \, : \, w \in L \rbrace .
\end{equation}

We now recall a well-studied concept, that of $k$-automatic set
(see, e.g.,
\cite{Cobham:1969,Cobham:1972,Allouche&Shallit:2003}):

\begin{definition}
We say that a set $S \subseteq \Enn$ is \textit{$k$-automatic} if there exists
a regular language $L \subseteq \Sigma_k^*$ such that $S = [L]_k$.
\label{def1}
\end{definition}

Many properties of these sets are known.  For example, it is possible
to state an equivalent definition involving only canonical
representations:

\begin{definition}
A set $S \subseteq \Enn$ is \textit{$k$-automatic} if the language
$$(S)_k := \lbrace (n)_k \, : \, n \in S \rbrace$$
is regular.
\label{def2}
\end{definition}

To see the equivalence of Definitions~\ref{def1} and \ref{def2},
note that if $L$ is a regular language, then so
is the language $L'$ obtained by removing all leading zeroes from
each word in $L$.

A slightly more general concept is that of $k$-automatic \textit{sequence}.
Let $\Delta$ be a finite alphabet.  Then a sequence (or infinite word)
$(a_n)_{n \geq 0}$ over $\Delta$ is said to be $k$-automatic if, for every
$c \in \Delta$, the set of \textit{fibers}
$ F_c = \lbrace n \in \Enn \, : \, a_n = c \rbrace$
is a $k$-automatic set of natural numbers.
Again, this class of sequences has been
widely studied \cite{Cobham:1969,Cobham:1972,Allouche&Shallit:2003}.
The following result is well-known \cite{Eilenberg:1974}:

\begin{theorem}
The sequence $(a_n)_{n \geq 0}$ is $k$-automatic if and only if its
$k$-kernel,
the set of its subsequences
$ K = \lbrace (a_{k^e n + f})_{n \geq 0} \, : \, e \geq 0,\ 0 \leq f < k^e
	\rbrace,$
is finite.
\label{kernel}
\end{theorem}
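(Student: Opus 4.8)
The plan is to prove both implications by translating between the $k$-kernel and a deterministic finite automaton with output (DFAO). Recall that a DFAO is a tuple $M=(Q,\Sigma_k,\delta,q_0,\Delta,\tau)$ with output map $\tau\colon Q\to\Delta$, computing the sequence $n\mapsto\tau(\delta(q_0,(n)_k))$, where $\delta$ is extended to words in the usual way. First I would record the standard reduction that, by the equivalence of Definitions~\ref{def1} and \ref{def2} already noted in the excerpt together with the Boolean closure properties of regular languages, a sequence $(a_n)_{n\geq 0}$ over $\Delta$ is $k$-automatic in the sense defined above if and only if it is computed by some DFAO; assembling the DFAO from the regular languages $(F_c)_k$ uses only that the fibers $F_c$ partition $\Enn$. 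It is also convenient to assume without loss of generality that $\delta(q_0,0)=q_0$, so that $M$ produces the same output on every representation of a given integer, canonical or not; this lets me freely use $(kn+d)_k=(n)_k\,d$ for all $n\geq 0$ and $d\in\Sigma_k$.

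For the direction ``finite kernel $\Rightarrow$ automatic'', I would build a DFAO whose state set is the kernel $K$ itself. The start state is $(a_n)_{n\geq 0}$; from a state $(b_n)_{n\geq 0}\in K$ and a digit $d\in\Sigma_k$ the transition goes to $(b_{kn+d})_{n\geq 0}$, which again lies in $K$; the output of state $(b_n)_{n\geq 0}$ is $b_0$. An induction on $|w|$ shows that, reading the digits of $(n)_k$ starting from the least significant one, this automaton ends in a state of the form $(a_{k^{|w|}n+f})_{n\geq 0}$ whose output at $n=0$ is $a_n$; since $K$ is finite by hypothesis, this is a genuine finite automaton, so $(a_n)_{n\geq 0}$ is computed by a DFAO and hence is $k$-automatic. (Processing the least-significant digit first is a harmless convention: it gives an equivalent automaton model, obtainable from the usual one by reversing representations.)

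For the converse, suppose $(a_n)_{n\geq 0}$ is computed by a DFAO $M$ as above. I would consider the finite family $\mathcal{F}$ of all sequences of the form $n\mapsto h(\delta(q_0,(n)_k))$ with $h\colon Q\to\Delta$ arbitrary; then $|\mathcal{F}|\leq|\Delta|^{|Q|}$, and $(a_n)_{n\geq 0}\in\mathcal{F}$ by taking $h=\tau$. The key computation is that $\mathcal{F}$ is closed under the kernel operations $(b_n)_{n\geq 0}\mapsto(b_{kn+d})_{n\geq 0}$: if $b_n=h(\delta(q_0,(n)_k))$, then using $(kn+d)_k=(n)_k\,d$ we get $b_{kn+d}=h(\delta(q_0,(n)_k\,d))=h'(\delta(q_0,(n)_k))$ with $h'(q):=h(\delta(q,d))$, so $(b_{kn+d})_{n\geq 0}\in\mathcal{F}$. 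Since the $k$-kernel of $(a_n)_{n\geq 0}$ is exactly what is generated from $(a_n)_{n\geq 0}$ by iterating these operations, $K\subseteq\mathcal{F}$, and therefore $K$ is finite.

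The routine parts are the induction in the first direction and the Boolean-closure bookkeeping that converts ``all fibers regular'' into ``computed by a DFAO''. The step I expect to require the most care is the treatment of non-canonical representations and the base case $n=0$, where $(0)_k=\epsilon$: both the identity $(kn+d)_k=(n)_k\,d$ and the claim that $M$ gives the same output on all representations of $n$ rest on the normalization $\delta(q_0,0)=q_0$, so arranging (or justifying as harmless) that normalization at the outset is the real technical point.
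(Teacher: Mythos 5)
The paper does not actually prove Theorem~\ref{kernel}; it is quoted as a well-known result with a citation to Eilenberg, so there is no in-paper argument to compare against. Your proposal is the standard proof of that result and is correct: the kernel-as-state-set DFAO (read least significant digit first) for one direction, and for the converse the observation that the finite family $\lbrace n \mapsto h(\delta(q_0,(n)_k)) : h\colon Q \rightarrow \Delta\rbrace$ contains $(a_n)_{n\geq 0}$ and is closed under the operations $(b_n)\mapsto(b_{kn+d})$, which generate the kernel. The one point to tighten is the normalization $\delta(q_0,0)=q_0$, which you correctly flag as the technical crux: merely redefining that single transition can alter the computed sequence if the run on some canonical $(n)_k$ re-enters $q_0$ before reading a later $0$; the clean fix is to adjoin a fresh start state $q_0'$ with $\delta(q_0',0)=q_0'$, $\delta(q_0',d)=\delta(q_0,d)$ for $d\neq 0$, and $\tau(q_0')=\tau(q_0)$, after which $\delta(q_0',(kn+d)_k)=\delta(q_0',(n)_k\,d)$ holds for all $n\geq 0$ and $d\in\Sigma_k$, including the case $n=0$, $d=0$ where the word identity $(kn+d)_k=(n)_k\,d$ itself fails.
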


In previous papers \cite{Shallit:2011,Schaeffer&Shallit:2012},
the second author extended
the notion of $k$-automatic sets over
$\Enn$ to subsets of
$\Quep$, the non-negative rational numbers.  The motivation was to
study the ``critical exponent'' of automatic sequences.
In this paper,
we will obtain some basic results about this new class.
Our principal results are Theorem~\ref{integers} (characterizing
those $k$-automatic sets of rationals consisting entirely of integers),
Theorem~\ref{inter} and Corollary~\ref{notclosed} (showing that the class of $k$-automatic sets of
rationals is not closed under intersection or complement),
Theorem~\ref{infinitec}
(showing that it is decidable if a $k$-automatic set of rationals is infinite),
and Theorem~\ref{algo} (showing that it is decidable if a $k$-automatic
set of rationals equals $\Enn$).

The class of sets we study has some similarity to another class studied by
Even \cite{Even:1964} and
Hartmanis and Stearns \cite{Hartmanis&Stearns:1967}; their class 
corresponds to the topological closure of a small subclass of
our $k$-automatic sets, in which
the possible denominators are restricted to powers of $k$.

Yet another model of automata accepting real numbers was studied
in \cite{Adamczewski&Bell:2011,Boigelot&Brusten&Bruyere:2010,Boigelot&Brusten:2009,Boigelot&Brusten&Leroux:2009}.  In this model real numbers are represented
by their (possibly infinite) base-$k$ expansions, and the model of automaton
used is a nondeterministic B\"uchi automaton.  However, even when restricted
to rational numbers, this model does not define the same class of sets,
as we will show below in Corollary~\ref{notbuchi}.

A preliminary version of this paper appeared in \cite{Rowland&Shallit:2012}.

\section{Representing rational numbers}

A natural representation for the non-negative rational number $p/q$ is the
pair $(p,q)$ with $q \not= 0$.  Of course, this representation has the
drawback that every element of $\Quep$ has infinitely many representations,
each of the form $(jp/d, jq/d)$ for some $j \geq 1$, where
$d = \gcd(p,q)$. 

We might try to ensure uniqueness of representations by considering
only ``reduced'' representations (those in ``lowest terms''),
which amounts to representing $p/q$ by the pair $(p/d, q/d)$ where
$d = \gcd(p,q)$.  In other words, the only valid pairs are $(p,q)$ with
$\gcd(p,q) = 1$. However, the condition $\gcd(p,q) = 1$ cannot be checked,
in general,
by finite or even pushdown automata --- see Remark~\ref{rem1} below
--- and it is not currently known if it is decidable whether
a given regular language consists entirely of 
reduced representations (see Section~\ref{open}).
Furthermore, insisting on only reduced representations
means that some ``reasonable'' sets of rationals, such 
as $\lbrace (k^m-1)/(k^n - 1) \, : \, m, n \geq 1 \rbrace$ (see
Corollary~\ref{lowestterms}), have no representation as a regular
language.  For these reasons,
\textbf{we allow the rational number
$p/q$ to be represented by \textit{any} pair of
non-negative integers $(p',q')$ with $p/q = p'/q'$}.

Next, we need to see how to represent a pair of integers as a word over
a finite alphabet.  Here, we follow the ideas of Salon 
\cite{Salon:1986,Salon:1987,Salon:1989}.
Consider the alphabet $\Sigma_k^2$.  A finite word $w$ over $\Sigma_k^2$ can
be considered as
a sequence of pairs $w = [a_1, b_1][a_2,b_2] \cdots [a_n, b_n]$ where
$a_i, b_i \in \Sigma_k$ for $1 \leq i \leq n$.  We can now define
the projection maps $\pi_1$, $\pi_2$ from $(\Sigma_k^2)^*$ to
$\Sigma_k^*$, as follows:
$$ \pi_1 (w) = a_1 a_2 \cdots a_n; \quad\quad\quad 
\pi_2 (w) = b_1 b_2 \cdots b_n . $$
Then we define $[w]_k = ([\pi_1(w)]_k, [\pi_2(w)]_k)$.
Thus, for example, if 
$$w = [0,0][1,0] [0,1] [1,0] [0,0] [1,1] [1,0],$$
then $[w]_2 = (43, 18)$.    
We also define $\times$, which allows us to
join two words $w, x \in \Sigma_k^*$ of the same length to create a single
word $w \times x \in (\Sigma_k^2)^*$ whose $\pi_1$ projection is $w$
and $\pi_2$ projection is $x$.

In this framework, every pair of integers $(p,q)$ again
has infinitely many distinct representations,
arising from padding on the left by leading pairs of zeroes, that is, by
$[0,0]$.  Among all such representations, the \textit{canonical
representation} is the one having no leading pairs of zeroes.
We write it as $(p,q)_k$.   For example,
$(43,18)_2 =  [1,0] [0,1] [1,0] [0,0] [1,1] [1,0]$.

We now state the fundamental definitions of this paper:

\begin{definition}
Given a word $w \in (\Sigma_k^2)^*$ with $[\pi_2(w)]_k \not= 0$, we define
$$\quo_k(w) := {{[\pi_1(w)]_k} \over {[\pi_2(w)]_k}}.$$
If $[\pi_2(w)]_k = 0$, then $\quo_k(w)$ is not defined.
Further, if $[\pi_2(w)]_k \not= 0$ for all $w \in L$, then
$ \quo_k(L) := \lbrace \quo_k(w) \, : \, w \in L \rbrace$.
A set of rational numbers $S \subseteq \Quep$ is 
\emph{$k$-automatic} if there exists a regular language 
$L \subseteq (\Sigma_k^2)^*$
such that $S = \quo_k(L)$.  
\label{def4}
\end{definition}

We reiterate that for a rational number 
$\alpha$ to be in $\quo_k(L)$, only a single,
possibly non-reduced, representation of $\alpha$ need be in $L$.
Furthermore, $L$ may contain
multiple representations for $\alpha$ in two different ways:
$L$ could contain non-canonical
representations that begin with leading zeroes, and $L$ could contain
``unreduced'' representations $(p,q)$ where $\gcd(p,q) > 1$.

We could have adopted a different definition, by insisting that every
rational in $\quo_k (L)$ must have every possible representation in
$L$.  However, this would have the unpleasant consequence that some
very simple subsets of $\Quep$ would not have a regular
representation.  For example, the language of all representations of
$\Enn$ is given by $L_d = \{ (p,q)_k \, : \, q \divides p \}$, which is
not even context-free; see Remark~\ref{rem1}.

Another issue is the following:
given a set $S \subseteq \Quep$, if $S$ contains a non-integer, then
by calling $S$ $k$-automatic, it is clear that we intend this to mean
that $S$ is $k$-automatic in the sense of Definition~\ref{def4}.
But what if $S \subseteq \Enn$?  
Then calling it ``automatic'' might mean {\it either\/} automatic
in the usual sense, as in Definition~\ref{def1}, {\it or\/} in the
extended sense introduced in this section, treating $S$ as a subset
of $\Quep$.  In Theorem~\ref{integers} we will see that these two 
interpretations actually \textit{coincide} for subsets of $\Enn$,
but in order to prove this, we need some notation to distinguish between
the two types of representations.   So, from now on,
by $(\Enn,k)$-automatic we
mean the interpretation in Definition~\ref{def1}, and by
$(\Quep,k)$-automatic
we mean the interpretation in Definition~\ref{def4}.

Yet another issue is the order of the representations.
So far we have only considered representations where the leftmost digit
is the most significant digit.  However, sometimes it is simpler to deal with
\textit{reversed representations} where the leftmost digit is the
least significant digit.  In other words, sometimes it is easier to
deal with the reversed word $w^R$ and reversed language $L^R$ instead
of $w$ and $L$, respectively.  Since the regular languages are (effectively)
closed
under reversal, for most of our results it does not matter which
representation we choose, and we omit extended discussion of this point.

We use the following notation for intervals:  $I[\alpha, \beta]$ denotes
the closed interval 
$$\lbrace x \, : \, \alpha \leq x \leq \beta \rbrace,$$
and similarly for open- and half-open intervals.

\section{Examples}

To build intuition, we give some examples of $k$-automatic sets of rationals.

\begin{example}
Consider the regular language 
$$L_0 = \lbrace w \in (\Sigma_k^2)^* \, : \, \pi_1(w) \in C_k \ \cup \ 
\lbrace 0 \rbrace
\text{ and } [\pi_2(w)]_k = 1 \rbrace.$$
Then $\quo_k (L_0) = \Enn$, as $L_0$ contains words with arbitrary numerator,
but denominator equal to $1$.
\label{example2}
\end{example}

\begin{example}
Let $k = 2$, and consider the regular language $L_1$ defined by
the regular expression 
$ A^* \lbrace [0,1], [1,1] \rbrace A^*,$
where $A = \lbrace [0,0],[0,1],[1,0],[1,1] \rbrace$.
This regular expression specifies all pairs of integers where the
second component has at least one nonzero digit --- the point being to avoid 
division by $0$.  Then $\quo_k(L_1) = \Quep$, the
set of all non-negative rational numbers. In fact all possible representations
of all rational numbers are included in $L_1$.
\end{example}

\begin{example}
Consider the regular language 
$$L_2 = 
\lbrace w \in (\Sigma_k^2)^* \ : \ \pi_2 (w) \in 0^* 1^+ 0^* \rbrace. $$
Then we claim that  $\quo_k(L_2) = \Quep$.    To see this,
consider an arbitrary non-negative rational number $p/q$ with $q \geq 1$.
Let $i$ be the least non-negative integer  such that
$\gcd(k^i, q) = \gcd(k^{i+1}, q)$. Let $d = \gcd(k^i, q)$ and
write $q = d q'$; note that $\gcd(k, q') = 1$ and hence
$\gcd(k, (k-1)q') = 1$.  
By the Fermat-Euler theorem, there is an
integer $j \geq 1$ such that $k^j \equiv \lmod{1} {(k-1)q'}$.
(For example, we can take $j = \varphi( (k-1) q' )$, where
$\varphi$ is Euler's totient function.)
Define $d' = k^i/d$
and $t = {{k^j - 1} \over {(k-1)q'}} $; then
$$ {p \over q} = {{d'tp} \over {d'tq}} = {{d'tp} \over {k^i {{k^j-1}\over{k-1}}}} ,$$
which expresses $p/q$ as a number with denominator of the required form,
with base-$k$ representation $1^j \, 0^i$.  In Theorems~\ref{dd1} and
\ref{dd2} below we will
see that $L_2$ achieves the minimum subword complexity of $\pi_2(L)$
over all regular languages $L$ representing $\Quep$.
\label{six}
\end{example}

\begin{example}
For a word $x$ and letter $a$ let $|x|_a$ denote the number
of occurrences of $a$ in $x$.
Consider the regular language
$$ L_3 = \lbrace w \in (\Sigma_2^2) \ : \ |\pi_1 (w)|_1 
\equiv \lmod{0} {2} \text{ and }
|\pi_2 (w)|_1 \equiv \lmod{1} {2} \rbrace.$$
Then it follows from a result of Schmid \cite{Schmid:1984}
that $\quo_2(L_3) = \Quep - \lbrace 2^n \ : \ n \in \Zee \rbrace$.
\label{three}
\end{example}

\begin{example}
Let $k = 3$, and consider the regular language $L_4$ defined by the
regular expression $[0,1] \lbrace [0,0],[2,0] \rbrace^*$.
Then $\quo_k(L_4)$ is the \textit{$3$-adic Cantor set}, the set of
all rational numbers in the ``middle-thirds'' Cantor set 
with denominators a power of $3$ \cite{Cantor:1883}.
\end{example}

\begin{example}
Let $k = 2$, and consider the regular language $L_5$ defined
by the regular expression 
$ [0,1] \lbrace [0,0], [0,1] \rbrace^* \lbrace [1,0],[1,1] \rbrace.$
Then the numerator encodes the integer $1$, while the denominator
encodes all integers that start with $1$.  Hence
$\quo_k(L_5) = \lbrace {1 \over n} \, : \, n \geq 1 \rbrace$.
\end{example}

\begin{example}
Let $k = 4$, and consider the set $S = \lbrace 0, 1, 3, 4, 5, 11, 12, 13,
\ldots \rbrace$ of all non-negative integers
that can be represented using only the digits $0, 1, -1$ in base $4$.
Consider the set $L_6 = \lbrace (p,q)_4 \, : \, p, q \in S \rbrace$. It is
not hard to see that $L_6$ is $(\Quep, 4)$-automatic.
The main result in \cite{Loxton&vanderPoorten:1987} can be
phrased as follows:  $\quo_4 (L_6)$ contains every odd integer.
In fact, an integer $t$ is in $\quo_4 (L_6)$ if and only if
the exponent of the largest power of $2$ dividing $t$ is even.
\end{example}

\begin{example}
Let $K, L$ be arbitrary regular languages over the alphabet $\Sigma_k^2$.
Note that $\quo_k (K \, \cup \, L) = \quo_k(K) \, \cup \, \quo_k(L)$
but the analogous identity involving intersection need not hold.  As
an example, consider $K = \lbrace [2,1] \rbrace$ and
$L = \lbrace [4,2] \rbrace$.  Then $\quo_{10} (K \, \cap \, L)
= \emptyset \not= \lbrace 2 \rbrace = \quo_{10} (K) \, \cap \, \quo_{10}(L)$.
\end{example}

\section{Basic results}

In this section we obtain some basic results about automatic sets of
rationals.

\begin{theorem}
Let $r, s$ be integers $\geq 1$.
Then $S$ is a $k^r$-automatic set of rational numbers 
if and only if $S$ is $k^s$-automatic.
\end{theorem}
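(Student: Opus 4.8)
The plan is to mimic the classical proof for $(\Enn,k)$-automatic sets: when $S$ is $k^r$-automatic, it is automatically $k^{rs}$-automatic (any base-$k^r$ representation can be regrouped into a base-$k^{rs}$ representation), and symmetrically $k^s$-automatic implies $k^{rs}$-automatic; so it suffices to prove that $k^r$-automatic implies $k^s$-automatic whenever $s \mid r$ — say $r = s t$ — since then $k^r$-automatic $\Rightarrow$ $k^{rs}$-automatic $\Leftarrow$ $k^s$-automatic gives one direction and a symmetric argument gives the reverse, and chaining through the common multiple $k^{rs}$ closes the equivalence. Actually, the cleanest route is: (1) $k^r$-automatic $\Rightarrow$ $k^{rs}$-automatic; (2) $k^{rs}$-automatic $\Rightarrow$ $k^r$-automatic; then $S$ is $k^r$-automatic iff $S$ is $k^{rs}$-automatic iff $S$ is $k^s$-automatic, where the middle equivalence applies (1)+(2) with the roles of $r$ and $s$ swapped.

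First I would set up the digit-grouping bijection. Fix the base change from $k^a$ to $k^{ab}$. A word $u$ over $\Sigma_{k^a}$ of length a multiple of $b$ factors uniquely into blocks of $b$ letters, and each block $c_1 c_2 \cdots c_b$ with $c_i \in \Sigma_{k^a}$ encodes a single letter $\sum_{i=1}^b c_i (k^a)^{b-i} \in \Sigma_{k^{ab}}$; this gives a length-scaling bijection $\phi$ between $(\Sigma_{k^a}^b)^*$-words (i.e., $\Sigma_{k^a}^*$-words whose length is divisible by $b$) and $\Sigma_{k^{ab}}^*$-words that preserves the integer value, $[\,u\,]_{k^a} = [\phi(u)]_{k^{ab}}$, provided we pad $u$ on the left with zeros to make its length divisible by $b$. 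Since our rationals are represented by \emph{pairs}, $\phi$ acts coordinatewise on words over $\Sigma_{k^a}^2$: a length-$b$ block of pairs encodes one pair of base-$k^{ab}$ digits, and $\quo$ is preserved because both numerator and denominator values are preserved. The key automata-theoretic facts I would invoke: (i) if $L$ is regular over $\Sigma_{k^a}^2$, then $\phi(L \cap (\text{length} \equiv 0 \bmod b))$ is regular — a DFA for the image simply reads $b$ input symbols at a time, which is a standard power/block construction on the transition monoid; (ii) conversely, given a regular $L'$ over $\Sigma_{k^{ab}}^2$, the preimage $\phi^{-1}(L')$ is regular — decode each super-digit into its $b$ constituent digit-pairs, a letter-to-block substitution, under which regular languages are closed.

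The remaining subtlety is the left-padding, because different words in $L$ have different lengths, and grouping into blocks of $b$ requires the length to be divisible by $b$; words not of that length must first be prefixed by the right number of $[0,0]$'s. Concretely, for direction (1) I would replace $L$ by $L'' = (\,[0,0]^* L\,) \cap \{w : |w| \equiv 0 \bmod b\}$ — padding on the left by $[0,0]$'s never changes $\quo_k$ of a word (leading zeros in both coordinates are harmless) and never destroys regularity, and intersecting with the regular ``length divisible by $b$'' language keeps it regular — so $\quo_{k^a}(L) = \quo_{k^a}(L'') = \quo_{k^{ab}}(\phi(L''))$, and $\phi(L'')$ is regular by (i). For direction (2), starting from $L'$ over $\Sigma_{k^{ab}}^2$ with $S = \quo_{k^{ab}}(L')$, I take $\phi^{-1}(L')$, which is regular by (ii) and satisfies $\quo_{k^a}(\phi^{-1}(L')) = S$ directly since $\phi$ preserves $\quo$; no padding is needed going this way. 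One should check the division-by-zero side condition is respected: $[\pi_2(w)]_{k^a} \neq 0$ for all $w$ in the padded/grouped language iff it held before, since padding with $[0,0]$ and regrouping preserve the denominator value. The main obstacle I anticipate is purely bookkeeping — making the block/length arithmetic airtight and confirming the constructions are effective — rather than any genuine conceptual difficulty; the result is the expected ``$k$-automatic depends only on the multiplicatively-closed set of primes dividing $k$'' phenomenon, and none of the pair-versus-single-integer wrinkles actually interfere because $\phi$ acts on the two coordinates independently and in lockstep.
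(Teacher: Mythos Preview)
Your proposal is correct and takes essentially the same approach as the paper: the paper's one-line proof simply cites the classical base-change result for automatic sequences, whose proof \emph{is} the digit-block regrouping argument you spell out, and your observation that $\phi$ acts coordinatewise on $(\Sigma_k^2)^*$ and preserves $\quo_k$ is exactly why the single-integer result transfers to the pair setting without further work. Your version is more self-contained, but there is no substantive difference in strategy.
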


\begin{proof}
Follows easily from the same result for automatic sequences
\cite{Cobham:1969,Cobham:1972}.
\end{proof}

Next we state a useful result from \cite{Shallit:2011,Schaeffer&Shallit:2012}:

\begin{lemma}
Let $\beta$ be a non-negative real number and define the
languages
$$L_{\leq \beta} = \lbrace x \in (\Sigma_k^2)^* \ : \ \quo_k(x) \leq \beta \rbrace,$$
and analogously for the relations $<, =, \geq, >, \not=$.
\begin{itemize}
\item[(a)] If $\beta$ is a rational number, then the language
$L_{\leq \beta}$ (resp., $L_{<\beta}$, $L_{=\beta}$, $L_{\geq \beta}$,
$L_{>\beta}$, $L_{\not=\beta}$) is regular.

\item[(b)] If $L_{\leq \beta}$ (resp., $L_{<\beta}$, 
$L_{\geq \beta}$, $L_{>\beta}$) is regular,
then $\beta$ is a rational number.
\end{itemize}
\label{lem1}
\end{lemma}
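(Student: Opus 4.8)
The proof proposal is as follows. For part (a), fix a rational $\beta = a/b$ with $a,b$ integers, $b \geq 1$. The key observation is that for a word $x \in (\Sigma_k^2)^*$ with $[\pi_2(x)]_k \neq 0$, the condition $\quo_k(x) \leq \beta$ is equivalent to $b\,[\pi_1(x)]_k \leq a\,[\pi_2(x)]_k$ (and similarly for the other relations, with the appropriate inequality or equality between $b\,[\pi_1(x)]_k$ and $a\,[\pi_2(x)]_k$). So it suffices to build a finite automaton that, reading $x$ from most significant digit to least significant, tracks enough information to decide this linear inequality. I would process the word with a deterministic automaton whose state, after reading a prefix, records the quantity $b\,[\pi_1(\text{prefix})]_k - a\,[\pi_2(\text{prefix})]_k$, but only up to a bounded window: since the remaining suffix of length $m$ can change $[\pi_i]_k$ by at most $k^m - 1$, once this running difference is large enough in absolute value (say exceeds $a k^{\,}\cdot$ something, concretely once $|b\,[\pi_1]_k - a\,[\pi_2]_k|$ exceeds $\max(a,b)\cdot(k-1)/(k-1)$-type bound, i.e.\ a constant depending only on $a$, $b$, $k$) its sign can no longer be reversed by any suffix, so we collapse into an absorbing "definitely $>0$" or "definitely $<0$" state. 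This gives finitely many states; reading $[0,0]$ does not change the difference, so leading zeroes are handled automatically, and the accepting condition is read off from the final state (with care about the boundary where the difference is near $0$ and the $=$ versus $<$ distinction matters — there one needs the state to remember the exact value). For $L_{=\beta}$ one keeps exact track of the difference within the bounded window and accepts only if it ends at $0$; $L_{\neq \beta}$ is then the complement. Regular languages being closed under complement, all six cases follow.

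For part (b), the strategy is the contrapositive: assume $\beta$ is irrational and show $L_{\leq\beta}$ (and the analogous one-sided languages) is not regular. I would invoke Lemma~\ref{lem1}(a)-style reasoning in reverse together with an approximation argument. The idea: if $L_{\leq\beta}$ were regular, then for each fixed denominator word $v$ with $[\pi_2]_k = q$, the set of numerator words $u$ (of matching length, via $u \times v$) with $[u]_k / q \leq \beta$, i.e.\ $[u]_k \leq \beta q$, would be recognized; letting the length grow, regularity would force the "threshold" $\lfloor \beta q \rfloor$ to be eventually describable by a finite-state process across all $q$ of a given shape, and one can then extract a rational value for $\beta$ by a pumping argument — pump a cycle in the $v$-part and a cycle in the $u$-part and compare the resulting linear relations, which pins $\beta$ down to a ratio of integers. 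Concretely, I expect to use the Myhill–Nerode theorem: exhibit infinitely many pairwise inequivalent left quotients of $L_{\leq\beta}$ indexed by the continued-fraction convergents $p_n/q_n$ of $\beta$, using that $|\beta - p_n/q_n|$ is smaller than $1/q_n^2$ so the words representing $(p_n, q_n)$ and $(p_n+1, q_n)$ straddle $\beta$ and must be separated, and that no single finite automaton can do this for all $n$.

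The main obstacle I anticipate is part (b): part (a) is a routine (if slightly fiddly) automaton construction, but part (b) requires producing, from mere regularity of $L_{\leq\beta}$, a genuine proof that $\beta$ is rational, and the pumping/Myhill–Nerode argument has to be set up so that the pumped words still represent \emph{legal} pairs (denominator nonzero) and so that the two pumping cycles interact to yield a \emph{nontrivial} linear equation in $\beta$ rather than a vacuous one. Handling the four one-sided relations uniformly — and noting that $=$ and $\neq$ are genuinely excluded from part (b), since $L_{=\beta}$ can be regular (indeed empty, when $\beta \notin \quo_k((\Sigma_k^2)^*)$) even for irrational $\beta$ — is a point that must be stated carefully. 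Since this lemma is quoted from \cite{Shallit:2011}, I would in practice cite it; but the sketch above is the self-contained argument.
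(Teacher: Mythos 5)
The paper never proves Lemma~\ref{lem1} itself --- it is quoted from \cite{Shallit:2011} --- so the comparison is against the source's argument rather than an in-text proof. Your part (a) is the standard construction and is correct: reading most-significant-digit first and tracking $D = b[\pi_1(\cdot)]_k - a[\pi_2(\cdot)]_k$ exactly while $|D| \leq \max(a,b)$, and collapsing to absorbing ``sign'' states beyond that window, works because a suffix of length $m$ perturbs the difference by at most $\max(a,b)(k^m-1) < |D|\,k^m$; the only omission is that you must also carry one bit recording whether $\pi_2$ has yet seen a nonzero digit, since words with $[\pi_2(x)]_k = 0$ have $\quo_k(x)$ undefined and belong to none of the six languages. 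For part (b) your route is genuinely different from the source's. The argument behind \cite{Shallit:2011} is the one visible in this paper as Lemma~\ref{mainl}: pumping any regular $L$ produces monotone sequences $\quo_k(uv^iw)$ converging to the explicitly rational limit $U$, so the supremum of a bounded nonempty $k$-automatic set of rationals is rational; since $\quo_k(L_{\leq\beta}) = \Quep \cap I[0,\beta]$ has supremum exactly $\beta$, regularity of $L_{\leq\beta}$ forces $\beta \in \Que$ (and symmetrically via infima for the other three relations). Your Myhill--Nerode plan via convergents also works, but the ``straddling'' step must be upgraded to a distinguishing-extension argument before it proves anything: writing $\epsilon_n = \beta q_n - p_n$ and letting $x_n$ encode $(p_n,q_n)$, a suffix $z$ of length $m$ with $\pi_2(z) = 0^m$ and $[\pi_1(z)]_k = s$ satisfies $x_n z \in L_{\leq\beta}$ iff $s \leq \epsilon_n k^m$; since the $\epsilon_n$ of a fixed sign are distinct and lie in $(0,1)$, for any $n \neq n'$ one can pick $m$ large and an integer $s$ strictly between $\epsilon_{n'}k^m$ and $\epsilon_n k^m$, so the prefixes $x_n$ are pairwise inequivalent and $L_{\leq\beta}$ is not regular. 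The trade-off is that the paper's route is shorter here because it reuses Lemma~\ref{mainl} and is effective (one can compute the rational $\beta$), whereas yours is self-contained but requires the Diophantine bookkeeping just described; mere membership of the words for $(p_n,q_n)$ and $(p_n+1,q_n)$ on opposite sides of $\beta$ does not by itself contradict regularity.
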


Suppose $S$ is a set of real numbers, and $\alpha$ is a real number.  We
introduce the following notation:
\begin{align*}
S+\alpha & :=   \lbrace x+\alpha \, : \, x \in S \rbrace  \\
S \monus \alpha  & :=  \lbrace \max(x-\alpha,0) \, : \, x \in S \rbrace \\
\alpha \monus S & :=  \lbrace \max(\alpha-x, 0) \, : \, x \in S \rbrace \\ 
\alpha S  & :=   \lbrace \alpha x \, : \, x \in S \rbrace.
\end{align*}

\begin{theorem}
The class of $k$-automatic sets of rational numbers
is closed under the following operations:
\begin{itemize}
\item[(i)] union;
\item[(ii)] $S \rightarrow S + \alpha$ for 
$\alpha \in \Quep$;
\item[(iii)] $S \rightarrow S \monus \alpha$ for 
$\alpha \in \Quep$;
\item[(iv)] $S \rightarrow \alpha \monus S$ for 
$\alpha \in \Quep$;
\item[(v)] $S \rightarrow \alpha S$ for 
$\alpha \in \Quep$;
\item[(vi)]
$S \rightarrow \lbrace 1/x \, : \, x \in S \setminus \lbrace 0 \rbrace 
\rbrace$.
\end{itemize}
\end{theorem}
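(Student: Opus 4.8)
The overall strategy is to construct, from a given regular language $L$ with $S = \quo_k(L)$, a new regular language realizing each operation; since regular languages are effectively closed under the usual operations (union, intersection, homomorphism, inverse homomorphism, concatenation), it suffices to describe each target language and check it is obtained from $L$ by such operations. For (i), union is immediate from the identity $\quo_k(K \cup L) = \quo_k(K) \cup \quo_k(L)$ noted in the examples. The interesting cases are (ii)--(vi), all of which amount to performing rational arithmetic on the pair $([\pi_1(w)]_k,[\pi_2(w)]_k)$ \emph{positionally}, which is exactly what finite automata can do when numerator and denominator are carried in parallel.

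For the affine and scaling operations, write $\alpha = a/b$ with $a,b \in \Enn$, $b \geq 1$. For (v), $\alpha S$: given $w$ with $\quo_k(w) = p/q$, we want a word encoding $(ap, bq)$ (or any equivalent pair); this is achieved by a length-preserving transduction that reads the digit pairs of $w$ and outputs the digit pairs of the base-$k$ representations of $ap$ and $bq$ — but multiplication by the fixed constants $a$ and $b$ is a sequential (finite-state) operation on base-$k$ digit strings, so one designs a finite transducer, or equivalently intersects with a suitable regular ``multiplication-by-$a$'' and ``multiplication-by-$b$'' relation and projects. One technical wrinkle: $ap$ and $bq$ may be longer than $p$ and $q$, so one first pads every word of $L$ on the left with enough $[0,0]$'s (a regular operation, e.g. $(\{[0,0]\}^{c})^{*} \cdot L$ suffices for a bounded amount, or more cleanly one works with reversed representations and appends high-order digits), then applies the transducer. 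For (ii), $S + \alpha = S + a/b$: from $p/q$ we need $(p/q) + (a/b) = (bp + aq)/(bq)$, again a positional computation combining multiplications by the constants $a,b$ and an addition, all finite-state. For (iii) and (iv), the monus operations, one first splits $L$ via Lemma~\ref{lem1}: intersect with $L_{\geq \alpha}$ (resp.\ $L_{\leq \alpha}$), which is regular, apply the subtraction $p/q - a/b = (bp - aq)/(bq)$ on that part (note $bp - aq \geq 0$ there, and it is again finite-state), and union in $\{0\}$ (represented by, say, $[0,0]$) if the complementary part $L_{< \alpha}$ (resp.\ $L_{> \alpha}$) is nonempty, which is decidable. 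For (vi), reciprocal: intersect $L$ with $L_{\neq 0} = L_{> 0}$ (regular by Lemma~\ref{lem1}), then apply the projection swap $\pi_1 \leftrightarrow \pi_2$, i.e.\ the homomorphism sending $[a,b] \mapsto [b,a]$; this sends a word encoding $(p,q)$ with $p,q \geq 1$ to one encoding $(q,p)$, hence $\quo_k \mapsto q/p = 1/\quo_k$, which is exactly the desired set.

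The routine parts are the verifications that ``multiply the base-$k$ digit string by a fixed constant $c$'' and ``add two base-$k$ digit strings read in parallel'' are finite-state relations on $\Sigma_k^*$ (processing from least-significant digit, the carry is bounded by a function of $c$, resp.\ by $1$); these are classical. The one genuine obstacle is \textbf{length alignment}: the naive digit-by-digit transducer produces outputs of the wrong length, and one must argue that prepending a bounded or unbounded block of leading $[0,0]$'s (or, in the reversed convention, appending a bounded block of high-order zero digits) keeps the language regular while making the transduction well-defined — equivalently, that the constant-multiplication relation $\{\, x \times y : [y]_k = c\,[x]_k \,\}$ is regular even though $|y|$ may exceed $|x|$ by up to $\lceil \log_k(c+1) \rceil$. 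I would handle this cleanly by working with reversed representations throughout (as the excerpt already licenses), where each arithmetic operation extends a word only by a bounded suffix, so each transduction is a genuine finite-state relation and closure under these relations, together with the reversal closure of regular languages, finishes every case.
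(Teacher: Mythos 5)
Your proposal is correct and takes essentially the same approach as the paper: the paper proves only case (ii), by a digit-by-digit finite-state transduction on reversed (least-significant-digit-first) representations that computes $p'q-pq'$ and $qq'$ while simulating the given DFA, and declares the other cases similar, exactly the positional-arithmetic-with-bounded-carries idea you describe. One small slip to fix: the word $[0,0]$ does not represent $0$ (its denominator is $0$, so $\quo_k$ is undefined there); use $[0,1]$ instead when adjoining $\lbrace 0 \rbrace$ in cases (iii) and (iv).
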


\begin{proof}
We prove only item (ii), with the others being similar.
We will use the reversed representation, with the least significant
digit appearing first.
Write $\alpha = p/q$.  Let $M$ be a DFA with $\quo_k(L(M)) = S$.
To accept $S + \alpha$, on input
a base-$k$ representation of $x = p'/q'$, we transduce the numerator to
$p'q - pq'$ and the denominator to $qq'$
(hence effectively computing a representation for $x - \alpha$), and 
simultaneously simulate $M$ on this input, digit-by-digit, accepting
if $M$ accepts.  
\end{proof}

The next theorem shows that if $p/q$ has a representation in a regular
language, then it has a representation where the numerator and
denominator are not too large.  Here we are using the least-significant-digit
first representation.

\begin{theorem}
Let $L \subseteq (\Sigma_k^2)^*$ be a regular language, accepted
by an NFA $M$ with $n$ states.  
If $p/q \in \quo_k (L)$, there exists 
$w \in L$ with
$\pi_1 (w) = p'$, 
$\pi_2 (w) = q'$,
$p'/q' = p/q$, and
$p', q' < k^{pqn}$.
\end{theorem}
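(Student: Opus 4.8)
The plan is to reduce the claim to finding a \emph{short} word of $L$ whose quotient is exactly $p/q$, and to produce such a word by intersecting $M$ with a small auxiliary automaton whose size is controlled by $p$ and $q$. By closure under reversal I may and do assume for this proof that representations are written most-significant-digit first (reversing a word does not change the value it encodes, and at the end I reverse the word I produce, landing back in the original $L$); I also assume $p\ge1$, the case $p=0$ needing a separate trivial remark. By Lemma~\ref{lem1}(a) the language $\{v\in(\Sigma_k^2)^*:\quo_k(v)=p/q\}$ is regular, so $L\cap\{v:\quo_k(v)=p/q\}$ is regular and nonempty; the whole problem is to bound its state complexity well enough.

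First I would make Lemma~\ref{lem1}(a) quantitative by exhibiting the automaton $E$ with integer states, start state and unique accepting state $0$, and transition sending a state $e$ to $ke+qx-py$ on reading the letter $[x,y]$. Reading $v$ drives $E$ from $0$ to $q[\pi_1(v)]_k-p[\pi_2(v)]_k$, so $v$ is accepted by $E$ precisely when $\quo_k(v)=p/q$. The decisive observation is that only finitely many states of $E$ are \emph{co-reachable} (able to reach $0$): if $e\to\cdots\to 0$ along digits $[x_1,y_1]\cdots[x_t,y_t]$, then unwinding the recurrence gives $k^t e=-\sum_{j=1}^{t}k^{t-j}(qx_j-py_j)$, and since each $qx_j-py_j$ lies in $[-p(k-1),\,q(k-1)]$ and $\sum_{j=1}^{t}k^{t-j}=(k^t-1)/(k-1)$, this forces $-q<e<p$. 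Hence $E$ has at most $p+q-1$ co-reachable states, and $p+q-1\le pq$ since $(p-1)(q-1)\ge 0$; let $E'$ be $E$ cut down to its co-reachable states.

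Next I would form the product NFA $M':=M\times E'$, running the two machines in lockstep on the same word; it has at most $(p+q-1)n\le pqn$ states and accepts exactly $\{v\in L:\quo_k(v)=p/q\}$. Since $p/q\in\quo_k(L)$, there is $w_0\in L$ with $\quo_k(w_0)=p/q$; its run in $E$ ends at $0$, so every state along that run is co-reachable, so $w_0$ is accepted by $M'$ and in particular $L(M')\ne\emptyset$. Therefore $M'$ accepts some word $v$ along a loop-free accepting path, so $|v|\le(p+q-1)n-1<pqn$. For this $v$ we have $\quo_k(v)=p/q$ and $[\pi_1(v)]_k,[\pi_2(v)]_k<k^{|v|}<k^{pqn}$. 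Taking $w$ to be the reversal of $v$ (which lies in the original $L$) and setting $p'=[\pi_1(w)]_k$, $q'=[\pi_2(w)]_k$ completes the argument.

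I expect the only real obstacle to be the boundedness statement for the co-reachable states of $E$, i.e.\ nailing the estimate $-q<e<p$ with exactly the right constants so that the final exponent comes out as $pqn$ rather than some larger polynomial in $p$, $q$, $n$. The remaining ingredients --- the product construction, the loop-removal that shrinks the witness, the reversal bookkeeping, and the remark that the given witness $w_0$ already lives inside the co-reachable part of $E$ --- are all routine.
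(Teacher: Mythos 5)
Your proof is correct, and its overall strategy is the same as the paper's: intersect $M$ with a small automaton recognizing $\{v : q[\pi_1(v)]_k = p[\pi_2(v)]_k\}$, bound the product's state count by $pqn$, and extract a loop-free accepted word of length $< pqn$. The difference lies entirely in the equality-checking gadget. The paper reads least-significant-digit first and simulates schoolbook multiplication of $p'$ by $q$ and of $q'$ by $p$, tracking the pair of carries, which range over $\{0,\dots,p-1\}\times\{0,\dots,q-1\}$ and hence give $pq$ auxiliary states. You read most-significant-digit first and track the single integer $q[\pi_1(v)]_k - p[\pi_2(v)]_k$ via the Horner-style update $e \mapsto ke + qx - py$, then prune to co-reachable states, which your estimate confines to the interval $(-q,p)$. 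Your version is slightly sharper: it gives $(p+q-1)n$ states rather than $pqn$, hence a witness with $p',q' < k^{(p+q-1)n}$, which in particular implies the stated bound via $(p-1)(q-1)\ge 0$. Two small remarks, neither a gap: the concern about words $v$ with $[\pi_2(v)]_k=0$ (which your $E$ would accept, since $0=0$) does not arise because the paper's definition of $\quo_k(L)$ already presupposes that no word of $L$ has zero second projection, so the intersection really is $\{v\in L: \quo_k(v)=p/q\}$; and the case $p=0$ you set aside is genuinely degenerate, since then $k^{pqn}=1$ and the statement cannot be read literally --- the paper's proof silently assumes $p,q\ge 1$ as well.
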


\begin{proof}
On input $w$ representing the pair $(p', q')$ 
we compute $qp'$ and $pq'$ simultaneously, digit-by-digit, and test
if they are equal.  To carry out multiplication by $q$, we need to
keep track of carries, which could be as large as $q-1$, and similarly
for $p$.  An NFA $M'$ to do this can be built using
triples $(i,j,q_l)$, where $i$ is a carry in the multiplication by
$p$, $j$ is a carry in the multiplication by $q$, and $q_l$ is the
state in $M$ arising from processing a prefix of $w$.
Thus $M'$ has $pqn$ states.  If $M'$ accepts any word, then it accepts
a word of length at most $pqn-1$.  From this the inequality follows.
\end{proof}

We now state one of our main results.

\begin{theorem}
Let $S \subseteq \Enn$.  Then $S$ is $(\Enn,k)$-automatic if and only if
it is $(\Quep,k)$-automatic.
\label{integers}
\end{theorem}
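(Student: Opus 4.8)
The plan is to prove both directions by passing between the two kinds of representations. For the easy direction, suppose $S \subseteq \Enn$ is $(\Enn,k)$-automatic, so that $(S)_k$ is a regular language over $\Sigma_k$. I would build a regular language $L \subseteq (\Sigma_k^2)^*$ by taking each $w \in (S)_k$ and forming the word $w \times 0^{|w|-1}1$ (pairing the numerator digits with the base-$k$ digits of $1$, padded with leading zeroes to the same length), together with a word representing $0/1$. Since the map $w \mapsto w \times 0^{|w|-1}1$ is implemented by a finite transducer (or, after reversal, even more simply), $L$ is regular, and $\quo_k(L) = \{[w]_k/1 : w \in (S)_k\} = S$. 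Hence $S$ is $(\Quep,k)$-automatic.

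The substantive direction is the converse: assume $S = \quo_k(L)$ for some regular $L \subseteq (\Sigma_k^2)^*$ with $S \subseteq \Enn$, and show $(S)_k$ is regular. The key idea is to intersect $L$ with the "integer-valued" constraint and then project. First, I would observe that the set of words $w$ with $[\pi_1(w)]_k$ divisible by $[\pi_2(w)]_k$ — equivalently $\quo_k(w) \in \Enn$ — is \emph{not} obviously regular, so a direct approach fails; instead I want to use that $S$ itself is a set of integers together with the structure of $L$. The cleaner route is: for each $w \in L$ with $\quo_k(w) = n \in S$, the integer $n$ satisfies $[\pi_1(w)]_k = n \cdot [\pi_2(w)]_k$. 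Using the preceding theorem, every element of $S$ has a representation $w \in L$ with $\pi_1(w), \pi_2(w) < k^{pqn'}$ for the relevant small $p,q$ — but this bound depends on $n$, which is unbounded, so boundedness alone does not finish it. So the real tool must be an automaton-theoretic transduction: design an NFA $M'$ that, on input the canonical base-$k$ word $(n)_k$, guesses a word $w \in L$ and digit-by-digit verifies $[\pi_1(w)]_k = n[\pi_2(w)]_k$ by tracking $M$'s state together with a carry; the point is that because $\quo_k(w)$ ranges over a \emph{subset of $\Enn$}, one can instead read $n$ as input and multiply $n$ by the guessed denominator $\pi_2(w)$, comparing against the guessed numerator $\pi_1(w)$. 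The carry in multiplying an unknown-length denominator by $n$ is the obstruction, since $n$ is unbounded.

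To get around the unbounded carry, I would use Lemma~\ref{lem1}: the set $S$ is $k$-automatic as a set of \emph{rationals}, and it happens to lie in $\Enn$; one shows the set of words $x \in (\Sigma_k^2)^*$ with $\quo_k(x) \in S$ is regular by writing $\quo_k(L)$-membership through the finite automaton for $L$, then uses the identity, for $n \in \Enn$, that $n \in S$ iff there is some denominator $m$ with $nm \in [\pi_1 \text{-image}]$ and $m \in [\pi_2\text{-image}]$ simultaneously realized by a single accepting path — and this is exactly an NFA computation with the numerator of the \emph{input} word equal to $(n)_k$, the denominator guessed, carries bounded by the guessed denominator digit times $k$, which is bounded. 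That is: swap roles so the unknown factor is the bounded-alphabet denominator, not $n$. Concretely, $M'$ has states $(q_M, c)$ with $q_M$ a state of $M$ and $c$ a carry in $[0, k)$, reading pairs (digit of $(n)_k$, guessed $\pi_2$-digit) while also guessing the $\pi_1$-digit to check consistency.

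The main obstacle I expect is precisely this carry-management argument: making rigorous that multiplication of the \emph{input} integer $n$ by a guessed finite-alphabet denominator can be checked by a finite automaton, with carries staying bounded (they do, because at each step the carry is at most roughly $(k-1)$ times the current digit's contribution divided by $k$, hence $O(k)$), and ensuring the lengths match up via leading-zero padding handled by $M$'s closure under removal of leading $[0,0]$'s. Once that NFA $M'$ is built, $(S)_k = L(M') \cap C_k$ is regular, completing the proof. I would also need to double-check the edge case $0 \in S$ and the correspondence of canonical representations, but those are routine given the leading-zero remarks in Section~\ref{intro}.
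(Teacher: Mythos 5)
Your easy direction is fine and matches the paper's. The converse, however, has a fatal gap at its core: the NFA you describe cannot exist. You want an automaton that reads $(n)_k$, guesses the digits of a denominator $q=[\pi_2(w)]_k$ and a numerator $p=[\pi_1(w)]_k$ along an accepting path of $M$, and verifies $p = n\cdot q$ digit by digit. But $L$ is an arbitrary regular language, so the guessed denominator $q$ is \emph{not} of bounded length or bounded value --- its base-$k$ expansion can contain arbitrarily many nonzero digits. Verifying $p=nq$ when both $n$ and $q$ are unbounded is full integer multiplication, which is not an automatic (regular) ternary relation: at position $\ell$ of the product the convolution $\sum_{i+j=\ell} n_i q_j$ has up to $\ell+1$ terms, so the carry grows without bound. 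Your claim that ``swapping roles'' bounds the carry only works when one factor is a \emph{fixed} integer (as in the paper's Theorem 8, where $p$ and $q$ are constants); it does not work when the guessed factor has unbounded length. The fallback you sketch (``one shows the set of words $x$ with $\quo_k(x)\in S$ is regular'') is circular, since the regularity of such divisibility-type languages is exactly what is in question (cf.\ Remark~\ref{rem1}, where $L_d=\{(p,q)_k : q\divides p\}$ is shown not even to be context-free).

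What is missing is the actual content of the theorem: one must first prove that, after normalizing $L$ (no leading or trailing $[0,0]$'s, no representations of $0$), the set of denominators $[\pi_2(L)]_k$ occurring in representations of large elements is \emph{finite}. The paper does this by a pumping argument showing every such denominator factors as $d_1 d_2$ with $d_1<k^n$ and $d_2\divides k^m$, hence all denominators lie in $\pi(D)$ for a finite set of primes $D$; then the substantial Theorem~\ref{dkf} (a $k$-automatic subset of $\pi(D)$ is $k$-finite, proved via Lemma~\ref{dick} and the kernel characterization) combined with the fact that no denominator is divisible by $k$ forces the denominator set to be finite. Only then does division become a finite-state operation: one partitions $L$ by the finitely many fixed denominators $d$ and divides each piece through by its constant $d$ with a transducer. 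Without establishing this finiteness first, no finite-state construction of the kind you propose can succeed.
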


The proof requires a number of preliminary results.
First, we introduce some terminology and notation.
We say a set $S \subseteq \Enn$ is {\it ultimately periodic} if there exist
integers $n_0 \geq 0, p \geq 1$ such that 
$n \in S \iff n+p \in S$, provided $n \geq n_0$.
In particular, every finite set is ultimately periodic.

We let $\primes = \lbrace 2,3,5, \ldots \rbrace$ denote the set
of prime numbers.  Given a positive integer $n$, we let
$\pd(n)$ denote the set of its prime divisors.  For example,
$\pd(60) = \lbrace 2, 3, 5 \rbrace$.  Given a subset
$D \subseteq \primes$, we let $\pi(D) = \lbrace n \geq 1 \, : \, \pd(n) \subseteq D \rbrace$, the set of all integers that can be factored completely using
only the primes in $D$.  
Finally, recall that
$\nu_k (n)$ denotes the exponent of the largest power of $k$ dividing $n$.

First, we prove two useful lemmas.

\begin{lemma}
Let $S \subseteq \Enn- \lbrace 0 \rbrace$.  Then the following
are equivalent:
\begin{itemize}
\item[(a)]  There exist an integer $n \geq 0$, and $n$ integers $g_i \geq 1$,
$1 \leq i \leq n$, and $n$ ultimately periodic subsets $W_i \subseteq \Enn$,
$1 \leq i \leq n$, such that
$$ S = \bigcup_{1 \leq i \leq n} g_i \lbrace k^j \ : \ j \in W_i \rbrace;$$

\item[(b)]  There exist an integer $m \geq 0$, and $m$ integers $f_i$
with $k \nodiv f_i$, $1 \leq i \leq m$, and $m$ ultimately periodic
subsets $V_i \subseteq \Enn$, $1 \leq i \leq m$, such that
\begin{equation}
S = \bigcup_{1 \leq i \leq m} f_i \lbrace k^j \ : \ j \in V_i \rbrace,
\label{b1}
\end{equation}
and the union is disjoint.

\item[(c)]  Define $F = \lbrace s/{k^{\nu_k (s)}} \ : \ s \in S \rbrace$.
The set $F$ is finite, and for all $f \in F$, the set 
$U_f = \lbrace j \ : \ k^j f \in S \rbrace$ is ultimately periodic.
\end{itemize}
\label{kfinite}
\end{lemma}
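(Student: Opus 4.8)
The plan is to prove Lemma~\ref{kfinite} by establishing the cycle of implications (a)$\Rightarrow$(b)$\Rightarrow$(c)$\Rightarrow$(a), exploiting the fact that each of the three conditions describes $S$ as a finite union of ``scaled power-of-$k$'' sets but organized differently: (a) allows arbitrary coefficients $g_i$, (b) forces the coefficients to be coprime to $k$ and the union to be disjoint, and (c) is a canonical normal form indexed by the quotients $s/\nu_k(s)$. Throughout I would use the notation $\nu_k(n)$ for the exponent; note the slight abuse in (c), where $s/\nu_k(s)$ really means $s/k^{\nu_k(s)}$, i.e. the largest factor of $s$ coprime to $k$, and I would make that reading explicit at the outset.

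For (a)$\Rightarrow$(b): given $S=\bigcup_i g_i\{k^j:j\in W_i\}$, write each $g_i=k^{e_i}f_i$ with $k\nodiv f_i$. Then $g_i\{k^j:j\in W_i\}=f_i\{k^j:j\in e_i+W_i\}$, and $e_i+W_i$ is again ultimately periodic. So $S$ is a finite union of sets $f_i\{k^j:j\in V_i\}$ with $k\nodiv f_i$. To make the union disjoint, I would group the indices by the value of $f_i$: for a fixed coprime-to-$k$ value $f$, the union of the corresponding $f\{k^j:j\in V_i\}$ equals $f\{k^j:j\in\bigcup_{f_i=f}V_i\}$, a single term with ultimately periodic index set (a finite union of ultimately periodic sets is ultimately periodic). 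Distinct values of $f$ give disjoint terms because an element $f k^j=f'k^{j'}$ with $k\nodiv f,f'$ forces $f=f'$ (and $j=j'$) by uniqueness of the factorization into a power of $k$ times a coprime part. This yields (b) with the $f_i$ distinct and the union disjoint.

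For (b)$\Rightarrow$(c): from the disjoint representation, every $s\in S$ lies in exactly one term $f_i\{k^j:j\in V_i\}$, so $s=f_i k^j$ with $k\nodiv f_i$, whence the coprime-to-$k$ part of $s$ is exactly $f_i$; thus $F\subseteq\{f_1,\dots,f_m\}$ is finite. For $f=f_i\in F$, the set $U_f=\{j:k^jf\in S\}$ equals $V_i$ (using disjointness to see no other term contributes, since a different $f_l$ is a different coprime part), which is ultimately periodic. For (c)$\Rightarrow$(a): writing $S=\bigcup_{f\in F}\{k^j f:j\in U_f\}=\bigcup_{f\in F}f\{k^j:j\in U_f\}$ gives precisely form (a) with $n=|F|$, $g_i$ ranging over $F$, and $W_i=U_f$. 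The main obstacle, such as it is, will be keeping the bookkeeping around $\nu_k$ straight --- in particular being careful that $W_i$ may contain indices $j$ with $k^j f \notin S$ if the $f$'s were not grouped, so the clean equalities $U_{f_i}=V_i$ and $F=\{f_i\}$ genuinely require the disjointness clause in (b); and verifying that shifting an ultimately periodic set by a constant, or taking a finite union of ultimately periodic sets, again gives an ultimately periodic set (both routine: take the lcm of the periods and the max of the thresholds). None of the steps requires more than elementary number theory, so I would present it tersely.
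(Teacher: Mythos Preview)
Your proposal is correct and follows essentially the same route as the paper: the cycle (a)$\Rightarrow$(b)$\Rightarrow$(c)$\Rightarrow$(a), with (a)$\Rightarrow$(b) done by factoring out the maximal power of $k$ from each $g_i$, shifting the exponent sets, and then merging terms with equal coprime part; (b)$\Rightarrow$(c) by reading off $F$ and the $U_f$ from the disjoint decomposition; and (c)$\Rightarrow$(a) by taking the $f\in F$ as the $g_i$ and $U_f$ as the $W_i$. Your explicit remark that the statement's $s/\nu_k(s)$ must be read as $s/k^{\nu_k(s)}$, and your slightly more careful $F\subseteq\{f_1,\dots,f_m\}$ (allowing for possibly empty $V_i$), are welcome refinements but do not change the argument.
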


\begin{proof}
(a) $\implies$ (b):  For each $g_i$ define $x_i = \nu_k (g_i)$ and
$f_i = g_i / k^{x_i}$.  Note that $k \nodiv f_i$.  Then
\begin{align*}
S &= \bigcup_{1 \leq i \leq n} g_i \lbrace k^j \ : \ j \in W_i \rbrace \\
&= \bigcup_{1 \leq i \leq n} f_i k^{x_i} \lbrace k^j \ : \ j \in W_i \rbrace \\
&= \bigcup_{1 \leq i \leq n} f_i \lbrace k^{x_i + j} \ : \ j \in W_i \rbrace \\
&= \bigcup_{1 \leq i \leq n} f_i \lbrace k^j \ : \ j \in W'_i \rbrace,
\end{align*}
where $W'_i = x_i + W_i$.  Note that each $W'_i$ is ultimately 
periodic.
If any of the $f_i$ coincide, we take the union of the corresponding $W'_i$
and call it $V_i$.  Since the union of a finite number of ultimately
periodic sets is still ultimately periodic (see, e.g., \cite{Matos:1994}), we can choose
a subset of the indices $i$ so that each $f_i$ appears once, expressing
$S$ as  
$$ \bigcup_{1 \leq i \leq m} f_i \lbrace k^j \ : \ j \in V_i \rbrace$$
for some $m \leq n$.  The union is now disjoint, for if
(say) $f_1 k^j = f_2 k^{j'}$ for $j \in V_1$ and $j' \in V_2$, then 
since $k \nodiv f_1, f_2$ we must have $j = j'$ and so $f_1 = f_2$,
which is a contradiction.

\bigskip
\noindent (b) $\implies$ (c):

Let $s \in S$.  Then $s = f_i k^j$ for some $j$.  Since
$k \nodiv f_i$, we have $s/{k^{\nu_k (s)}} = f_i$.  Thus
$F = \lbrace f_i \ : \ 1 \leq i \leq m \rbrace$ and hence is finite.
By the disjointness of the union (\ref{b1}) (and the fact that $k \nodiv f_i$)
we have $k^j f_i \in S \iff j \in V_i$.  So $U_{f_i} = V_i$ and hence
$U_{f_i}$ is ultimately periodic.

\bigskip
\noindent (c) $\implies$ (a):

Let $g_1, g_2, \ldots, g_n$ be distinct elements such that
$F = \lbrace g_1, g_2, \ldots, g_n \rbrace$.  Take $W_i = U_{g_i}$.
\end{proof}

If any of the conditions (a)--(c) above hold, we say that the set $S$ is
{\it $k$-finite}.

\begin{lemma}
Let $D$ be a finite set of prime numbers, and let $S \subseteq \pi(D)$.
Let $s_1, s_2, \ldots $ be an infinite sequence of (not necessarily
distinct) elements of $S$.  Then there is an infinite increasing
sequence of indices $i_1 < i_2 < \cdots$ such that
$s_{i_1} \divides s_{i_2} \divides \cdots $.  
\label{dick}
\end{lemma}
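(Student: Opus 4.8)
The plan is to exploit the fact that each $s_i$, lying in $\pi(D)$ for a \emph{finite} set $D = \lbrace p_1, \ldots, p_r \rbrace$, is completely determined by its vector of exponents $(a_1, \ldots, a_r) \in \Enn^r$, where $s_i = p_1^{a_1} \cdots p_r^{a_r}$. Under this correspondence, the divisibility relation $s \divides s'$ becomes the coordinatewise (product) partial order on $\Enn^r$: $s \divides s'$ iff every coordinate of the exponent vector of $s$ is $\leq$ the corresponding coordinate for $s'$. So the statement reduces to a purely combinatorial claim about sequences in $\Enn^r$: from any infinite sequence of points in $\Enn^r$ one can extract an infinite chain that is nondecreasing in every coordinate.

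The key tool is Dickson's Lemma (equivalently, the fact that $(\Enn^r, \leq)$ is a well-partial-order, or that it has no infinite antichain and no infinite strictly descending chain). First I would map the sequence $s_1, s_2, \ldots$ to the associated sequence of exponent vectors $v_1, v_2, \ldots$ in $\Enn^r$. Then I would argue that an infinite sequence in a well-partial-order always contains an infinite nondecreasing subsequence: this is the standard Ramsey-type argument. Color each pair $\lbrace i, j \rbrace$ with $i < j$ by whether $v_i \leq v_j$ holds or not; if there were an infinite monochromatic set in the ``not $\leq$'' color, those vectors would form either an infinite antichain or an infinite strictly descending chain along some coordinate, both impossible in $\Enn^r$ (the first by Dickson's Lemma, the second because $\Enn$ is well-ordered, applied coordinatewise). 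By the infinite Ramsey theorem, the other color class must then contain an infinite set $i_1 < i_2 < \cdots$ with $v_{i_1} \leq v_{i_2} \leq \cdots$, and translating back gives $s_{i_1} \divides s_{i_2} \divides \cdots$.

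Alternatively, and perhaps more cleanly, I would avoid Ramsey entirely and induct on $r = |D|$. For $r = 1$ the claim is just that $\Enn$ is well-ordered: pick any nondecreasing subsequence of the exponents. For the inductive step, look at the first coordinates of $v_1, v_2, \ldots$; by the $r=1$ case there is an infinite subsequence on which the first coordinate is nondecreasing, and then apply the induction hypothesis to the remaining $r-1$ coordinates of that subsequence to get a further infinite subsequence nondecreasing in coordinates $2$ through $r$ as well. The first coordinate remains nondecreasing on a sub-subsequence, so the whole vector is nondecreasing, as required.

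I do not anticipate a serious obstacle here; the main point is simply recognizing that divisibility within $\pi(D)$ is order-isomorphic to the product order on $\Enn^{|D|}$ and invoking the appropriate well-partial-order fact. The only mild care needed is the remark that the $s_i$ need not be distinct — but this is harmless, since a constant (or eventually repeating) value yields a trivial divisibility chain, and in any case repeated exponent vectors satisfy $v \leq v$, so the extraction arguments above go through verbatim.
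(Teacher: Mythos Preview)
Your proposal is correct. Your second (inductive) argument is essentially the paper's own proof: both induct on $|D|$ and, after controlling one prime's exponent, pass to a subsequence and apply the induction hypothesis to the remaining $|D|-1$ primes. The paper's version differs only cosmetically---it first splits into bounded/unbounded cases and, in the unbounded case, selects a prime whose exponent is unbounded and extracts a subsequence with \emph{strictly} increasing exponent of that prime, whereas you simply take the first coordinate and extract a nondecreasing subsequence (which is a bit cleaner, since it absorbs the bounded case automatically).

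Your first route, via the infinite Ramsey theorem together with Dickson's Lemma, is a genuinely different and somewhat heavier argument. It works, but the sentence claiming that an infinite ``not $\leq$'' monochromatic set must yield either an infinite antichain or an infinite strictly descending chain is a bit glib: ruling out such a ``bad sequence'' in $\Enn^r$ is precisely the content of Dickson's Lemma (that $(\Enn^r,\leq)$ is a well-quasi-order), so you are really invoking that lemma directly rather than reducing to the two special cases you name. Once stated that way the Ramsey argument is clean, and it has the advantage of generalizing verbatim to any well-quasi-order in place of $\Enn^r$; the paper's elementary induction, by contrast, is self-contained and avoids citing either Ramsey or Dickson.
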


\begin{proof}
Case 1:  The sequence $(s_i)$ is bounded.
In this case infinitely many of the $s_i$ are the same, so we can
take the indices to correspond to these $s_i$.

Case 2:  The sequence $(s_i)$ is unbounded.   In this case we prove
the result by induction on $|D|$.
If $|D| = 1$, then we can choose a strictly increasing subsequence
of the $(s_i)$; since all are powers of some prime $p$, this subsequence
has the desired property.

Now suppose the result is true for all sets $D$ of cardinality $t-1$.  We
prove it for $|D| = t$.
Since only
$t$ distinct primes figure in the factorization of the $s_i$,
some prime, say $p$, must appear with unbounded exponent in the $(s_i)$. 
So there is some
subsequence of $(s_i)$, say $(t_i)$, with strictly increasing exponents
of $p$.  Now consider the infinite sequence $(u_i)$ given by
$u_i = t_i/p^{\nu_p (t_i)}$.  Each $u_i$ has a prime factorization
in terms of the primes in $D - \lbrace p \rbrace$, so by induction
there is an infinite increasing sequence of indices 
$i_1, i_2, \ldots$ such that $u_{i_1} \divides u_{i_2} \divides \cdots$.
Then $p^{\nu_p (t_{i_1})} u_{i_1} \divides
p^{\nu_p (t_{i_2})} u_{i_2} \divides \cdots$, which corresponds
to an infinite increasing sequence of indices of the original
sequence $(s_i)$.
\end{proof}

We now state an essential part of the proof, which is of independent 
interest.

\begin{theorem}
Let $D \subseteq \primes$ be a finite set of prime numbers, and
let $S \subseteq \pi(D)$.  Then $S$ is $k$-automatic if and only if it is
$k$-finite.
\label{dkf}
\end{theorem}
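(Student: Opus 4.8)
The plan is to prove both directions separately, using Lemma~\ref{kfinite} (the equivalence of $k$-finiteness with the structural descriptions (a)--(c)) to translate between ``$k$-finite'' and ``a finite union of sets $f_i \lbrace k^j : j \in V_i \rbrace$ with $V_i$ ultimately periodic.'' The easy direction is ($k$-finite $\implies$ $k$-automatic). Given the disjoint representation $S = \bigcup_{1 \le i \le m} f_i \lbrace k^j : j \in V_i \rbrace$ from condition (b), it suffices (by closure of automatic sets under finite union) to show each $f_i \lbrace k^j : j \in V_i \rbrace$ is $k$-automatic. But an integer of the form $f_i k^j$ has canonical base-$k$ representation $(f_i)_k$ followed by $j$ zeroes; since $V_i$ is ultimately periodic, the set of admissible values of $j$ is recognized by a small automaton, so $\lbrace (f_i)_k 0^j : j \in V_i \rbrace$ is regular, and this is an $(\Enn,k)$-automatic description. (If we want a $(\Quep,k)$-automatic witness directly, pair the numerator word with a denominator word encoding $1$.)

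The substantive direction is ($k$-automatic $\implies$ $k$-finite). Assume $S \subseteq \pi(D)$ is $k$-automatic and, by way of checking condition (c), consider $F = \lbrace s/k^{\nu_k(s)} : s \in S\rbrace$; I must show $F$ is finite and each $U_f = \lbrace j : k^j f \in S\rbrace$ is ultimately periodic. Both claims will come from a pumping argument on a DFA $M$ accepting $(S)_k$ (switching to canonical representations via Definition~\ref{def2}; if $S$ is only known to be $(\Quep,k)$-automatic we first restrict to integer representations, but Theorem~\ref{integers} is exactly what this is feeding into, so I would phrase everything in the $(\Enn,k)$ world and invoke Lemma~\ref{kfinite} freely). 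For ultimate periodicity of $U_f$: the elements $k^j f$ for $j \in U_f$ have representations $(f)_k 0^j$, and a standard pumping argument on the trailing-zero part of $M$'s computation shows $U_f$ is eventually periodic with period dividing the relevant cycle length. For finiteness of $F$: this is where Lemma~\ref{dick} (the Dickson-type result) does the real work. If $F$ were infinite, pick one $s_i \in S$ for each distinct element of $F$; since $S \subseteq \pi(D)$ with $D$ finite, Lemma~\ref{dick} gives an infinite chain $s_{i_1} \divides s_{i_2} \divides \cdots$. Because the $s/k^{\nu_k(s)}$ parts are all distinct, this chain is strictly increasing in a way that forces arbitrarily long ``detours'' through $M$; more precisely, write $s_{i_{r+1}} = s_{i_r} \cdot m_r$ with $m_r > 1$, and analyze how the base-$k$ representation of $s_{i_r}$ sits inside that of $s_{i_{r+1}}$. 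Multiplication by $m_r$ shifts and smears digits, but using that $S \subseteq \pi(D)$ one can arrange (passing to a further subsequence, and possibly working with the reversed/least-significant-digit-first representation) that the representations share a common prefix or suffix structure that a finite automaton cannot sustain, yielding the contradiction.

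I expect the main obstacle to be precisely this last point: extracting a genuine automaton contradiction from an infinite divisibility chain. The chain gives $s_{i_1} \divides s_{i_2} \divides \cdots$, but divisibility alone says little about base-$k$ representations — multiplication can completely scramble digits. The trick will be to exploit that every $s_i$ has \emph{all} prime factors in the fixed finite set $D$, so the multipliers $m_r$ also have all factors in $D$; then one argues that the map $j \mapsto k^j f$ being in $S$, combined with ultimate periodicity of each $U_f$ (already established) and the Myhill--Nerode structure of $M$, bounds how many distinct $k$-free parts $f$ can occur — essentially because each new $f$ forces a new Myhill--Nerode class, and there are only finitely many. Concretely, I would show: if $f, f' \in F$ are distinct then some suffix $0^j$ distinguishes the states of $M$ reached after reading $(f)_k$ versus $(f')_k$ (using that the $U_f$'s are distinct ultimately periodic sets, or that $f \ne f'$ already makes exactly one of them lie in $S$), so $|F| \le$ (number of states of $M$), giving finiteness directly and bypassing Lemma~\ref{dick} entirely — though I suspect the authors use Lemma~\ref{dick} precisely because the clean Myhill--Nerode argument breaks down when $S$ is only $(\Quep,k)$-automatic and one cannot yet assume canonical integer representations. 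I would attempt the direct Myhill--Nerode argument first and fall back on the Dickson-chain route if the representations resist control.
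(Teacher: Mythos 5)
The easy direction and the final step (ultimate periodicity of each $U_f$ via pumping on trailing zeroes) match the paper. The genuine gap is in the crux: proving that $F = \lbrace s/k^{\nu_k(s)} : s \in S\rbrace$ is finite. Your primary argument --- that distinct $f, f' \in F$ force distinct states of $M$, so $|F|$ is bounded by the number of states --- is false, and the proposed justifications do not hold: distinct $f \neq f'$ in $F$ can have \emph{identical} sets $U_f = U_{f'}$, so no suffix $0^j$ (indeed, no suffix at all need) distinguish the states reached after $(f)_k$ and $(f')_k$. More tellingly, this argument never uses the hypothesis $S \subseteq \pi(D)$, and without that hypothesis the conclusion is simply wrong: for $k = 10$ the $k$-automatic set of repunits $\lbrace 1, 11, 111, \ldots \rbrace$ has $F = S$ infinite while its minimal DFA has a bounded number of states. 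So no state-counting argument can work.

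Your fallback (apply Lemma~\ref{dick} to one representative per element of $F$ and argue that the resulting divisibility chain forces representations ``a finite automaton cannot sustain'') correctly identifies the relevant lemma but stops exactly where the real work begins: divisibility chains do not by themselves control base-$k$ digits, as you yourself note. The paper's proof needs two further ideas that are absent from your sketch. First, it splits on whether the sets $H_i = \lbrace f \in F : k^i f \in S\rbrace$ are all finite or some $H_i$ is infinite. In the first case it applies Lemma~\ref{dick} to the maxima $u_i = \max H_i$ to get a strictly increasing divisibility chain, and then --- rather than analyzing digits --- invokes the $k$-kernel characterization (Theorem~\ref{kernel}): the kernel sequences $n \mapsto f(k^{i_j} n)$ are pairwise distinct because the largest $k$-free $n$ with $f_j(n) = 1$ is $u_{i_j}$, so the kernel is infinite and $S$ is not $k$-automatic. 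In the second case it pumps an infinite regular language of elements of $H_i$, writes $[uv^jw]_k = p_1^{f_{1,j}} \cdots p_t^{f_{t,j}}$, and runs a counting argument on the exponents $f_{r,j}$ (some $p_r$ must occur with exponent growing linearly in $j$ along infinitely many $j$ with a fixed gap $\delta$) to force the ``error term'' $E = [v^\delta w]_k - [w]_k\, k^{\delta|v|}$ to vanish, whence two elements of $H_i$ differ by a factor $k^{\delta|v|}$ --- impossible since $k \nodiv f$ for $f \in H_i$. Neither the case split, nor the kernel argument, nor the exponent-counting argument appears in your proposal, so the finiteness of $F$ remains unproved.
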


\begin{proof}
$\Longleftarrow$:  If $S$ is $k$-finite, then
by Lemma~\ref{kfinite}, we can write it as the disjoint finite union
$$S = \bigcup_{1 \leq i \leq m} f_i \lbrace k^j \ : \ j \in V_i \rbrace ,$$
where each $V_i$ is an ultimately periodic set of integers.
For each $i$, the set  
$(f_i)_k \lbrace 0^t \, : \, t \in V_i \rbrace$ of base-$k$ representations of
$f_i \lbrace k^j \ : \ j \in V_i \rbrace$ is a regular language.
It follows that $S$ is $k$-automatic.

\bigskip

$\Longrightarrow$:  Suppose $S$ is $k$-automatic.
Define $F = \lbrace s/{k^{\nu_k (s)}} \ : \ s \in S \rbrace$.
Suppose $F$ is infinite.  
Clearly no element of $F$ is divisible by $k$.
Therefore we can write
$S$ as the disjoint union $\bigcup_{i \geq 0} k^i H_i$, where 
$H_i := \lbrace f \in F \, : \, k^i f \in S \rbrace$.
Then there are two possibilities:  either (a) the sets
$H_i$ are finite for all $i \geq 0$, or (b) at least one $H_i$ is infinite.

In case (a), define $u_i := \max H_i$ for all $i \geq 0$.
Then the set $\lbrace u_0, u_1, \ldots \rbrace$ must be infinite,
for otherwise $F$ would be finite.  Choose an infinite subsequence
of the $u_i$ consisting of distinct elements, and apply Lemma~\ref{dick}.
Then there is an infinite increasing subsequence of indices
$i_1 < i_2 < \cdots$ 
such that $u_{i_1} \divides u_{i_2} \divides \cdots$.  So the 
sequence $(u_{i_j})_{j \geq 1}$ is strictly increasing.

Now consider the characteristic sequence of $S$, say
$(f(n))_{n \geq 0}$, taking values $1$ if $n \in S$ and $0$ otherwise.
Consider the subsequences $(f_j)$ in the $k$-kernel of
$f$ defined by $f_j (n) = f(k^{i_j} n)$ for $n \geq 0$, $j \geq 1$.
By our construction, the largest $n$ in $\pi(D)$ such that
$k \nodiv n$ and 
$f_j (n) = 1$ is $n = u_{i_j}$.   Since the $u_{i_j}$ are strictly
increasing, this shows the (infinitely many) sequences $(f_j)$
are pairwise distinct.  Hence, by Theorem~\ref{kernel},
$f$ is not $k$-automatic and neither is $S$.

In case (b),
we have $H_i$ is infinite for some $i$. 
As mentioned above, $S$ can be written as the disjoint union 
$\bigcup_{i \geq 0} k^i H_i$.
Let $L$ be the
language of canonical base-$k$ expansions of elements of $H_i$ (so that,
in particular, no element of $L$ starts with $0$).
The base-$k$ representation of elements of $k^i H_i$ end in
exactly $i$ $0$'s, and no other member of $S$ has this
property.  Since $S$ is assumed to be $k$-automatic, 
it follows that $L$
is regular.    Note that no two elements of $H_i$ have a quotient which 
is divisible by $k$, because if they did, the numerator would be divisible
by $k$, which is ruled out by the condition.

Since $L$ is infinite and regular,
by the pumping lemma, there must be words $u, v, w$, with
$v$ nonempty, such that $u v^j w \in L$ for all $j \geq 0$.  
Note that for all integers $j \geq 0$ and $c \geq 0$ we have
\begin{equation}
[u v^{j+c} w]_k = [u v^j w]_k \cdot k^{c|v|} + ([v^c w]_k - [w]_k \cdot
	k^{c|v|}).
\label{uvjc}
\end{equation}
Let $D = \lbrace p_1, p_2, \ldots, p_t \rbrace$.
Since $[u v^j w]_k \in S \subseteq \pi(D)$,
it follows that there exists a double sequence
$(f_{r,j})_{1 \leq r \leq t; j \geq 1}$ of non-negative integers such that 
\begin{equation}
[u v^j w]_k = p_1^{f_{1,j}} \cdots p_t^{f_{t,j}}
\label{factor}
\end{equation}
for all $j \geq 0$.
From (\ref{factor}), we see that
$k^{|uw|+j|v|} <  p_1^{f_{1,j}} \cdots p_t^{f_{t,j}}$,
and hence (assuming $p_1 < p_2 < \cdots < p_t$) we get
$$ (|uw| + j|v|) \log k < \left( \sum_{1 \leq r \leq t} f_{r,j} \right)
\log p_t.$$
Therefore, there are constants
$0 < c_1$ and $J$ such that
$ c_1 j < \sum_{1 \leq r \leq t} f_{r,j}$
for $j \geq J$.

For each $j\geq J$ now consider the indices $r$ such that $f_{r,j} > c_1 j/t$;
there must be at least one such index, for otherwise
$f_{r,j} \leq c_1 j/t$ for each $r$ 
and hence $\sum_{1 \leq r \leq t} f_{r,j} \leq c_1 j$, a contradiction.
Now consider $t+1$ consecutive $j$'s; for each $j$ there is an index $r$
with $f_{r,j} > c_1 j/t$, and since there are only $t$ possible indices,
there must be an $r$ and
two integers $l'$ and $l$, with $0 \leq l< l' \leq t$, such that
$f_{r,j+l} > c_1 (j+l)/t$ and $f_{r,j+l'} > c_1 (j+l')/t$.  This is true in any
block of $t+1$ consecutive $j$'s that are $\geq J$.  
Now there are infinitely many disjoint blocks of $t+1$ consecutive $j$'s,
and so there must be some pair $(r, l'-l)$ that occurs infinitely often.  Put $\delta = l' - l$.

Now use (\ref{uvjc}) and take $c = \delta$.  We get infinitely many $j$ such 
that 
$$ p_1^{f_{1,j+\delta}} \cdots p_t^{f_{t,j+\delta}} = k^{\delta|v|} p_1^{f_{1,j}} \cdots
	p_t^{f_{t,j}} + E,$$
where $E = [v^\delta w]_k - [w]_k \cdot k^{\delta|v|}$ is a constant that is
independent of $j$.
Now focus attention on the exponent of $p_r$ on both sides.
On the left it is $f_{r,j+\delta}$, which we know to be at least
$c_1 (j+\delta)/t$.  On the right the exponent of $p_r$ dividing the
first term is $f_{r,j} + \delta|v|e_r$ (where $k = p_1^{e_1} \cdots p_t^{e_t}$);
this is at least $c_1 j/t$.  So $p_r^h$ divides $E$, where
$h \geq c_1 j/t$.  But this quantity goes to $\infty$, and $E$ is a
constant.  So $E = 0$.  But then 
$$ {{[u v^{j+\delta} w]_k} \over {[uv^j w]_k}} = k^{\delta|v|} .$$
which is impossible, since, as we observed above, two elements of
$H_i$ cannot have a quotient which is a power of $k$.
This contradiction shows that $H_i$ cannot be infinite.

So now we know that $F$ is finite.  Fix some $f \in F$ and consider
$T_f = \lbrace k^i \, : \, k^i f \in S \rbrace$.  Since $S$ is $k$-automatic,
and the set of base-$k$
expansions $(T_f)_k$ is essentially
formed by stripping off the bits corresponding
to $(f)_k$ from the front of each element of $S$ of which
$(f)_k$ is a prefix, and replacing it with ``1'',
this is just a left quotient followed by concatenation, and 
hence $(T_f)_k$ is regular.  Let $M'$ be a DFA for $(T_f)_k$, and
consider an input of $1$ followed by
$l$ $0$'s for $l = 0, 1, \ldots$ in $M'$.
Evidently we eventually get into a cycle,
so this says that $U_f = \lbrace i \, : \, k^i f \in S \rbrace$ is
ultimately periodic.  Hence $S$ is $k$-finite.

This completes the proof of Theorem~\ref{dkf}.
\end{proof}
We can now prove Theorem~\ref{integers}.

\begin{proof}
One direction is easy, since if $S$ is $(\Enn,k)$-automatic, then
there is an automaton accepting $(S)_k$.  We can now easily modify this
automaton to accept all words over $(\Sigma_k^2)^*$ whose $\pi_2$ projection
represents the integer $1$ and whose $\pi_1$ projection is an element of
$(S)_k$.  Hence $S$ is $(\Quep, k)$-automatic.

Now assume $S \subseteq \Enn$ is $(\Quep,k)$-automatic.   
If $S$ is finite, then the result is clear, so assume $S$ is infinite.
Let $L$ be a regular language with $\quo_k(L) = S$.
Without loss of generality we may assume
every representation in $L$ is canonical; there are no leading $[0,0]$'s.
Furthermore, by first
intersecting with $L_{\not=0}$ we may assume that $L$
contains no representations of the integer $0$.  Finally, we can also
assume, without loss of generality, that no representation contains
\textit{trailing} occurrences of $[0,0]$,
for removing all trailing $[0,0]$'s from all words in $L$ preserves regularity,
and it does not change the set of numbers represented, as it has the effect
of dividing the numerator and denominator by the same power of $k$.
Since the words in
$L$ represent integers only, the denominator of every representation must
divide the numerator, and hence if the denominator is divisible by $k$,
the numerator must also be so divisible.  Hence removing trailing zeroes
\textit{also} ensures that no denominator is divisible by $k$.
Let $M$ be a DFA of $n$ states accepting $L$.

We first show that the set $[\pi_2(L)]_k$ of possible
denominators represented by $L$ is finite.
Write $S = S_1 \ \cup \ S_2$, where 
$S_1 = S \ \cap \ I[0, k^{n+1})$ and
$S_2 = S \ \cap \ I[k^{n+1}, \infty)$.  
Let $L_1 = L \ \cap \ L_{< k^{n+1}}$, the representations of all numbers
$< k^{n+1}$ in $L$, and
$L_2 = L \ \cap \ L_{\geq k^{n+1}}$.  Both $L_1$ and $L_2$ are regular,
by Lemma~\ref{lem1}.
It now suffices to show
that $S_2$ is $(\Enn,k)$-automatic.

Consider any $t \in S_2$. 
Let $z \in L_2$ be a 
representation of $t$.   Since $t \geq k^{n+1}$, clearly $|z| \geq n$,
and so 
$\pi_2(z)$ must begin with at least $n$
$0$'s.   Then, by the pumping lemma, we can write $z = uvw$ with $|uv| \leq n$
and $|v| \geq 1$ such that $uv^i w \in L$ for all $i \geq 0$.  However,
by the previous remark about $\pi_2(z)$, we see that $\pi_2(v) = 0^j$ for
$1 \leq j \leq n$. Hence 
$[\pi_2(z)]_k = [\pi_2(uvw)]_k = [\pi_2(uw)]_k$.
Since $uw$ must also represent a member of $S$, it must be an integer,
and hence
$[\pi_2(z)]_k \divides [\pi_1(uw)]_k$ as well as
$[\pi_2(z)]_k \divides [\pi_1(uvw)]_k$.  Hence
$$[\pi_2(z)]_k \divides [\pi_1(uvw)]_k - [\pi_1(uw)]_k
= ([\pi_1(uv)]_k - [\pi_1(u)]_k)\cdot k^{|w|}.$$

The previous reasoning applies to any $z \in L_2$.
Furthermore, $0 < [\pi_1(uv)]_k - [\pi_1(u)]_k  < k^n$.
It follows that every possible denominator $d$ of elements in $L_2$
can be expressed as $d = d_1 \cdot d_2$, where $1 \leq d_1 < k^n$ and
$d_2 \divides k^m$ for some $m$.   It follows that the set of primes
dividing all denominators $d$ is finite, and we can therefore apply
Theorem~\ref{dkf}.   Since $k$ divides no denominator, the set of
possible denominators is finite.

We can therefore decompose $L_2$ into a finite disjoint union corresponding
to each possible denominator $d$.  Next, we use a finite-state transducer to
divide the numerator and denominator of the corresponding representations
by $d$.  For each $d$, this gives a new regular language $A_d$ where the
denominator is $1$.  Writing $T := \bigcup_{d} A_d$, we have
$S_2 = \quo_k(T) = \bigcup_{d} \quo_k(A_d)$.  Now we project, throwing
away the second coordinate of elements of $T$; the result is regular and
hence $S$ is a $k$-automatic set of integers.
\end{proof}

\begin{corollary}
Let $L \subseteq (\Sigma_k^2)^*$ be a regular language of words
with no leading or trailing $[0,0]$'s, and suppose
$\quo_k (L) \subseteq \Enn$.     Then $L$ can (effectively)
be expressed as the
finite union
$$ \bigcup_{1 \leq i \leq A}
\lbrace (m a_i, m)_k \ : \ m \in S_i \rbrace \ \cup \ 
\bigcup_{1 \leq j \leq B} \lbrace (b_j n, b_j)_k \ : \ n \in T_j \rbrace,$$
where $A, B \geq 0$ are integers, and $S_1, S_2, \ldots, S_A$
and $T_1, T_2, \ldots T_B$ are $(\Enn, k)$-automatic sets of integers,
and $a_1, a_2, \ldots, a_A$ and $b_1, b_2, \ldots, b_B$ are 
non-negative integers.
\label{deco}
\end{corollary}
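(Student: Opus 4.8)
The plan is to refine the proof of Theorem~\ref{integers}. As there, fix a DFA $M$ with $n$ states accepting $L$ and split $L = L_1 \cup L_2$ with $L_1 = L \cap L_{<k^{n+1}}$ and $L_2 = L \cap L_{\geq k^{n+1}}$; both are regular and effectively obtainable by Lemma~\ref{lem1}. Two facts from that proof get reused: (i) $\quo_k(L_1) \subseteq \Enn \cap I[0,k^{n+1})$, so it is finite; and (ii) the set of denominators $[\pi_2(L_2)]_k$ is finite. A small but essential observation is that, since no word of $L$ has a leading $[0,0]$, each $w \in L$ is the \emph{canonical} representation of the pair $([\pi_1(w)]_k,[\pi_2(w)]_k)$; hence a word of $L$ is determined by that pair, which is what turns the sets constructed below into honest set-builder descriptions.

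For the ``numerator'' part I would start from $L_1$. Compute $\quo_k(L_1) = \{a_1,\dots,a_A\}$ effectively, by testing, for each $a < k^{n+1}$, whether $L_1 \cap L_{=a}$ is empty (Lemma~\ref{lem1} makes $L_{=a}$ effective, and emptiness is decidable). Set $L_1^{(i)} := L_1 \cap L_{=a_i}$ and $S_i := [\pi_2(L_1^{(i)})]_k$. Since $\pi_2$ is a letter-to-letter morphism, $\pi_2(L_1^{(i)})$ is regular, so $S_i$ is $(\Enn,k)$-automatic by Definition~\ref{def1}. Any $w \in L_1^{(i)}$ has $[\pi_1(w)]_k = a_i\,[\pi_2(w)]_k$ and, being canonical, equals $(a_i m, m)_k$ with $m := [\pi_2(w)]_k \in S_i$; conversely every $m \in S_i$ is realized this way, and (by canonicity) uniquely so. Hence $L_1^{(i)} = \{(a_i m, m)_k : m \in S_i\}$, and $L_1 = \bigcup_{1 \leq i \leq A} L_1^{(i)}$.

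For the ``denominator'' part I would do the dual thing to $L_2$. Fact (ii) gives finiteness of $[\pi_2(L_2)]_k$; to obtain it effectively, strip leading zeroes from the regular language $\pi_2(L_2)$ (regularity is preserved), getting the regular language $\{(d)_k : d \in [\pi_2(L_2)]_k\}$, which is therefore finite and can be enumerated; call the resulting values $b_1,\dots,b_B$. Set $L_2^{(j)} := L_2 \cap \{w : [\pi_2(w)]_k = b_j\}$; this is regular because $\{u \in \Sigma_k^* : [u]_k = b_j\} = 0^\ast (b_j)_k$ is regular and $\pi_2$ is a morphism. On $L_2^{(j)}$ the denominator is constantly $b_j$, so $T_j := \quo_k(L_2^{(j)})$ satisfies $T_j \subseteq \quo_k(L) \subseteq \Enn$ and is $(\Quep,k)$-automatic, whence $T_j$ is $(\Enn,k)$-automatic by Theorem~\ref{integers} (effectively, since the proof of that theorem is constructive). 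As before, each $w \in L_2^{(j)}$ is canonical with denominator $b_j$ and therefore equals $(b_j n, b_j)_k$ for $n := \quo_k(w) \in T_j$, and conversely; so $L_2^{(j)} = \{(b_j n, b_j)_k : n \in T_j\}$ and $L_2 = \bigcup_{1 \leq j \leq B} L_2^{(j)}$. Combining, $L = L_1 \cup L_2$ is precisely the union displayed in the statement.

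The only real content lies in fact (ii) --- that a regular language of integer-valued quotients, all of which are at least $k^{n+1}$, uses only finitely many denominators --- and this is exactly the pumping argument in the middle of the proof of Theorem~\ref{integers} (feeding into Theorem~\ref{dkf}), so it can be cited rather than reproved. Everything else is bookkeeping: the canonicity remark that fixes the set-builder descriptions, and routine checks that each step (intersections with $L_{=a}$, projections under $\pi_i$, leading-zero removal, and the numerator/denominator division transducer from the proof of Theorem~\ref{integers}) is effective.
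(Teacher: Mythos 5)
Your proposal is correct and follows essentially the same route as the paper's own (very terse) proof: split $L$ into $L_1$ (quotients $<k^{n+1}$), partitioned by quotient value, and $L_2$ (quotients $\geq k^{n+1}$), partitioned by the finitely many denominators guaranteed by the argument in Theorem~\ref{integers}. Your added details --- the canonicity observation that identifies each word with its pair, the effective enumeration of the $a_i$ and $b_j$, and the verification that the $S_i$ and $T_j$ are $(\Enn,k)$-automatic --- are all sound and merely make explicit what the paper leaves implicit.
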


\begin{proof}
Following the proof of Theorem~\ref{integers}, we see that $L$ can be written
as the union of 
$L_1$, the representations of integers $<k^{n+1}$ and
$L_2$, the representations of those
$ \geq k^{n+1}$.   For each integer $ a < k^{n+1}$ we can consider the
words of $L$ whose quotient gives $a$; this provides a partition of $L_1$ into
regular subsets corresponding to each $a$, and gives the first term
of the union.
For $L_2$, the proof of Theorem~\ref{integers} shows that there are only
a finite number of possible denominators, and the sets of corresponding
numerators are $k$-automatic.  
\end{proof}

As corollaries, we get that the $k$-automatic sets of rationals are
(in contrast with sets of integers) not necessarily closed under the
operations of intersection and complement.

\begin{theorem}
Let $S_1 = \lbrace (k^n-1)/(k^m-1) \, : \, 1 \leq m < n \rbrace$
and $S_2 = \Enn$.  Then $S_1$ and $S_2$ are both $k$-automatic sets of
rationals, but $S_1 \ \cap \ S_2$ is not.
\label{inter}
\end{theorem}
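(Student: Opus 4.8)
The plan is to show each of $S_1$ and $S_2$ is $k$-automatic, and then to show $S_1 \cap S_2$ is not. That $S_2 = \Enn$ is $k$-automatic is already established in Example~\ref{example2} (via the language $L_0$). For $S_1$, I would write down an explicit regular language over $(\Sigma_k^2)^*$ whose quotient set is $\{(k^n-1)/(k^m-1) : 1 \le m < n\}$: since $k^n - 1$ has base-$k$ representation $(k-1)^n$ and $k^m - 1$ has representation $(k-1)^m$, the language
$$ L = \{\, w \in (\Sigma_k^2)^* \ : \ \pi_1(w) \in (k-1)^+ 0^*,\ \pi_2(w) \in 0^*(k-1)^+,\ |\pi_1(w)|_{k-1} > |\pi_2(w)|_{k-1} \,\} $$
(or more simply, a language of words $[k-1,0]^{*}[k-1,k-1]^{+}[0,k-1]^{*}$ with at least one more top $(k-1)$-digit than bottom) is regular and has $\quo_k(L) = S_1$. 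This uses that $[(k-1)^n]_k = k^n - 1$.

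The heart of the theorem is showing $S_1 \cap \Enn$ is not $k$-automatic. First I would identify which elements of $S_1$ are integers: $(k^n-1)/(k^m-1)$ is an integer iff $m \mid n$, and in that case it equals $1 + k^m + k^{2m} + \cdots + k^{(n/m - 1)m}$, i.e.\ a \textit{repunit} in base $k^m$. So $S_1 \cap \Enn = \{\, (k^{m\ell}-1)/(k^m-1) : m \ge 1,\ \ell \ge 2 \,\}$. Every such integer is coprime to $k$ (its base-$k$ representation ends in the digit $1$). So if $S_1 \cap \Enn$ were $(\Enn,k)$-automatic, Theorem~\ref{dkf} would not directly apply since this set is not contained in $\pi(D)$ for a finite set of primes $D$ --- instead I would use Theorem~\ref{kernel} (the $k$-kernel characterization) directly. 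The characteristic sequence $f$ of $S = S_1 \cap \Enn$ takes value $1$ precisely on the repunits $R_{m,\ell} = (k^{m\ell}-1)/(k^m - 1)$. I would exhibit infinitely many distinct kernel subsequences: consider the subsequences $g_m(n) = f(k^{m}n + 1)$ for $m \ge 1$ (these lie in the $k$-kernel, taking $e = m$, $f = 1$). The idea is that $g_m$ detects membership of $k^m n + 1$ in $S$, and by looking at the smallest $n \ge 1$ with $g_m(n) = 1$ --- which should be $n$ corresponding to $R_{m,2} = 1 + k^m$, so $n = k^{m-1} + \cdots$, hmm, I need $k^m n + 1 = 1 + k^m$, giving $n = 1$; that is the same for all $m$, so I must look deeper. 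Instead I would compare, for the first value of $n$ at which $g_m$ and $g_{m'}$ disagree: $k^m \cdot k^{m} + 1 = 1 + k^{2m} = R_{2m,2} \in S$ but is $k^{m'}\cdot k^{m} + 1 = 1 + k^{m+m'}$ in $S$? It is iff $m+m'$ is a multiple of some $d$ with the right form --- $1 + k^{j}$ is in $S$ iff $j = m'$ for some $m' \ge 1$ with... actually $1 + k^j = R_{j,2}$ is always in $S$ (take $m = j$, $\ell = 2$). So that fails too. The main obstacle, then, is choosing the right separating family.

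The cleanest route around this obstacle: I would instead invoke Corollary~\ref{deco} (or Theorem~\ref{integers}) to get a contradiction via \textbf{density/structure}. If $S_1 \cap \Enn$ were $(\Quep,k)$-automatic, then by Theorem~\ref{integers} it is $(\Enn,k)$-automatic, so $(S)_k$ is regular. I would then analyze the base-$k$ language of $S = \{R_{m,\ell}\}$: the canonical base-$k$ representation of $R_{m,\ell}$ is $(0^{m-1}1)^{\ell-1}$ preceded appropriately, i.e.\ a word of the form $1(0^{m-1}1)^{\ell-1}$. So $(S)_k = \{\, 1(0^{m-1}1)^{\ell - 1} : m \ge 1, \ell \ge 2 \,\}$. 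A regular language cannot equal this set: by a standard pumping argument, pumping within a long block $0^{m-1}$ of a word $1 0^{m-1} 1 0^{m-1} 1$ (taking $\ell = 3$) produces a word $1 0^{m-1} 1 0^{m'-1} 1$ with $m' \ne m$, which is not of the required form (the two interior gaps must be equal); alternatively, intersecting $(S)_k$ with the regular language $1 0^* 1 0^* 1$ yields $\{1 0^{m-1} 1 0^{m-1} 1 : m \ge 1\}$, which is non-regular since it encodes $\{(m,m)\}$. This contradicts regularity of $(S)_k$, completing the proof. I expect the pumping/intersection step to be routine once the explicit form of $(S)_k$ is pinned down; the one genuine subtlety is the number-theoretic fact that $(k^n-1)/(k^m-1) \in \Enn \iff m \mid n$, which I would prove by the division-with-remainder computation $k^n - 1 \equiv k^{n \bmod m} - 1 \pmod{k^m - 1}$.
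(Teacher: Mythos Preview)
Your approach is essentially the paper's: identify $S_1\cap\Enn$ as the set of base-$k^m$ repunits, write down its canonical base-$k$ language $\{1(0^{m-1}1)^{\ell-1}:m\ge 1,\ \ell\ge 2\}$ (equivalently the paper's $\{(10^{m-1})^{d-1}1\}$), invoke Theorem~\ref{integers} to reduce to $(\Enn,k)$-automaticity, and then intersect with a simple regular language to extract a non-regular set. The paper intersects with $10^*10^*10^*1$ (four $1$'s) rather than your $10^*10^*1$ (three $1$'s); both yield non-regularity, but the paper's choice also gives non-context-freeness, which it reuses in the corollary about the normalization map $N$.

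One small slip to fix: your proposed language $[k-1,0]^*[k-1,k-1]^+[0,k-1]^*$ does \emph{not} represent $S_1$. A word $[k-1,0]^a[k-1,k-1]^b[0,k-1]^c$ has $\pi_1$-value $k^c(k^{a+b}-1)$ and $\pi_2$-value $k^{b+c}-1$, so the quotient carries an unwanted $k^c$ factor whenever $c\ge 1$; moreover the ``more top digits than bottom'' constraint $a>c$ is not regular as stated. Simply drop the suffix and use $[k-1,0]^+[k-1,k-1]^+$, or do what the paper does: divide numerator and denominator by $k-1$ so that $p=(k^n-1)/(k-1)$ and $q=(k^m-1)/(k-1)$ have representations $1^n$ and $1^m$, giving the regular expression $[1,0]^+[1,1]^+$. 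The detour through the $k$-kernel is unnecessary once you have the explicit description of $(S_1\cap\Enn)_k$.
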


\begin{proof}
We can write every element of $S_1$ as $p/q$, where $p = (k^n-1)/(k-1)$ and
$q = (k^m-1)/(k-1)$.  The base-$k$ representation of $p$ is
$1^n$ and the base-$k$ representation of $q$ is $1^m$.  Thus
a representation for $S_1$ is given by the regular
expression $[1,0]^+[1,1]^+$.    We know that $\Enn$ is $k$-automatic
from Example~\ref{example2}.  

From a classical result we know that $(k^m -1) \divides (k^n-1)$ if and
only if $m \divides n$.  It follows that $S_1 \ \cap \ S_2 = T$, where
$$T = \lbrace (k^n-1)/(k^m-1) \, : \, 1 < m < n
\text{ and } m \divides n \rbrace.$$
If the $k$-automatic sets of rationals were closed under intersection,
then $T$ would be $(\Enn,k)$-automatic.
Writing $ n = md$, 
we have 
$${{k^n-1}\over {k^m-1}} = k^{(d-1)m} + \cdots + k^m + 1,$$ whose
base-$k$ representation is $(10^{m-1})^{d-1} 1$.  
Hence $(T)_k = \lbrace (10^{m-1})^{d-1} 1 \, : \, m \geq 1, d > 1 \rbrace$.
Assume this is regular.
Intersecting with the regular language $10^*1 0^*1 0^*1$ we get
$\lbrace 1 0^n 1 0^n 1 0^n 1 \, : \, n \geq 1 \rbrace$.  But a routine
argument using the pumping lemma shows this is not even
context-free, a contradiction.
\end{proof}

From this result we can obtain several corollaries of interest.

\begin{corollary}
The class of $(\Quep,k)$-automatic sets is not closed under the operations
of intersection or complement.
\label{notclosed}
\end{corollary}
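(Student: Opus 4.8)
The plan is to derive Corollary~\ref{notclosed} directly from Theorem~\ref{inter}, using only elementary closure properties already available. First I would observe that Theorem~\ref{inter} exhibits two $(\Quep,k)$-automatic sets $S_1$ and $S_2 = \Enn$ whose intersection $S_1 \cap S_2$ is not $(\Quep,k)$-automatic; this is precisely a witness to non-closure under intersection, so that half of the corollary is immediate.

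For non-closure under complement, I would argue by contradiction: suppose the class of $(\Quep,k)$-automatic sets were closed under complement. It is closed under union (this is item (i) of the closure theorem, and also follows trivially since $\quo_k(K \cup L) = \quo_k(K) \cup \quo_k(L)$). But by De Morgan's law, a class closed under complement and finite union is automatically closed under finite intersection: $S_1 \cap S_2 = \overline{\,\overline{S_1} \cup \overline{S_2}\,}$, where the complement is taken within $\Quep$. Applying this to the $S_1$ and $S_2$ of Theorem~\ref{inter} would make $S_1 \cap S_2$ a $(\Quep,k)$-automatic set, contradicting Theorem~\ref{inter}. Hence the class is not closed under complement either.

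One small technical point to be careful about is what ``complement'' means here: the complement must be taken within $\Quep$ (the ambient universe of all non-negative rationals), so that $\overline{S} = \Quep \setminus S$. Since $\Quep$ itself is $(\Quep,k)$-automatic (witnessed by the language $L_1$ of Example~\ref{six}, or by $L_2$), there is no ambiguity in the De Morgan manipulation, and each of $\overline{S_1}$, $\overline{S_2}$, and their union lies in the class under the hypothetical closure assumption.

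There is no real obstacle here: the entire content of the corollary is packaged in Theorem~\ref{inter}, and the only work is the routine De Morgan observation plus noting that the universe $\Quep$ is itself in the class. The proof is therefore three or four lines. I would write:

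\begin{proof}
Non-closure under intersection is immediate from Theorem~\ref{inter}, which exhibits $(\Quep,k)$-automatic sets $S_1$ and $S_2 = \Enn$ whose intersection is not $(\Quep,k)$-automatic.

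For complement, note first that $\Quep$ is itself $(\Quep,k)$-automatic (see Example~\ref{six}), and the class is closed under union, since $\quo_k(K \cup L) = \quo_k(K) \cup \quo_k(L)$.  If the class were also closed under complement (taken within $\Quep$), then for any $(\Quep,k)$-automatic sets $S_1, S_2$ the set
$$S_1 \cap S_2 = \Quep \setminus \bigl( (\Quep \setminus S_1) \cup (\Quep \setminus S_2) \bigr)$$
would be $(\Quep,k)$-automatic as well, contradicting Theorem~\ref{inter}.  Hence the class is not closed under complement.
\end{proof}
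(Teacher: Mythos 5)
Your proposal is correct and matches the paper's own argument: non-closure under intersection is read off from Theorem~\ref{inter}, and non-closure under complement follows by De Morgan from closure under union. The extra remark that the complement is taken within $\Quep$ (itself $(\Quep,k)$-automatic) is a harmless elaboration of the same proof.
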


\begin{proof}
We have just shown that this class is not closed under intersection.
But since it is closed under union, if it were closed under complement,
too, it would be closed under intersection, a contradiction.
\end{proof}

\begin{corollary}
Define a ``normalization operation'' $N$ that maps
a word $w$ to its canonical expansion in lowest terms
$(p/d,q/d)_k$, where $d = \gcd(p,q)$ and
$p = \pi_1 (w)$, $q = \pi_2 (w)$, and define
$N(L) = \lbrace N(w) \ : \ w \in L \rbrace$.  
Then the operation $N$ does not, in general, preserve regularity.
\label{lowestterms}
\end{corollary}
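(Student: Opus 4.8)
The plan is to exploit the example already constructed for Theorem~\ref{inter}. Recall from that proof that $L := [1,0]^+[1,1]^+$ is a regular language over $(\Sigma_k^2)^*$ on which $\quo_k$ is everywhere defined, with $\quo_k(L) = S_1 = \{(k^n-1)/(k^m-1) \, : \, 1 \le m < n\}$, and that Theorem~\ref{inter} tells us $S_1 \cap \Enn$ is not a $k$-automatic set of rationals. The point is that applying $N$ to $L$ forces every representation into lowest terms, which then makes it possible to isolate, by a regular operation, exactly those elements of $S_1$ that happen to be integers --- and that subset is forbidden.

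So I would argue by contradiction: suppose $N(L)$ is regular. First note that $N(L)$ consists of precisely the canonical lowest-terms representation of each element of $S_1$, so in particular $\quo_k(N(L)) = \quo_k(L) = S_1$ (normalization changes the representation but not the value). Next, intersect $N(L)$ with the language $E := \{w \in (\Sigma_k^2)^* \, : \, \pi_2(w) \in 0^*1\}$ of words whose denominator equals $1$; since $E = \pi_2^{-1}(0^*1)$ is the preimage of a regular language under the length-preserving morphism $\pi_2$, it is regular, and hence $L' := N(L) \cap E$ is regular. A reduced pair $(p',q')$ satisfies $q' = 1$ if and only if $p'/q' \in \Enn$, so $\quo_k(L') = \{\alpha \in S_1 \, : \, \alpha \in \Enn\} = S_1 \cap \Enn$. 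Thus $S_1 \cap \Enn$ would be $(\Quep, k)$-automatic, contradicting Theorem~\ref{inter}; alternatively, projecting $L'$ onto its first coordinate shows $(S_1 \cap \Enn)_k$ is regular, which contradicts Theorem~\ref{inter} directly. This contradiction shows $N(L)$ is not regular.

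There is no real obstacle here; the only points needing care are the two routine checks: that $N(L)$ is genuinely a representation of all of $S_1$ (normalization does not silently add or delete rationals), and that ``reduced with denominator $1$'' coincides exactly with ``integer-valued'', which is what lets us recover $S_1 \cap \Enn$ and invoke Theorem~\ref{inter}. Conceptually the corollary is just Theorem~\ref{inter} in disguise: normalization is dangerous precisely because it reveals which unreduced representations were concealing integer values, and the set of those values need not be automatic.
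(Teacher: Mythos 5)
Your proof is correct and follows essentially the same route as the paper's: take $L$ to be the regular language of representations of $S_1$ from Theorem~\ref{inter}, intersect $N(L)$ with the regular language of reduced pairs with denominator $1$, and observe that this recovers $S_1\cap\Enn=T$, whose non-automaticity was established in that theorem. The only cosmetic difference is that the paper states the stronger conclusion that $N(L)$ is not even context-free, whereas you argue by contradiction with regularity, which is all the corollary requires.
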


\begin{proof}
We give an example of a regular language $L$ where $N(L)$ is not even
context-free.    It suffices to take
$L = (S_1)_k$, where $S_1$ is the set defined above in
the statement of Theorem~\ref{inter}.  Consider $N(L)$; then
we know from the argument above that
$S_1 \ \cap \ \Enn = T$, where
$$T = \lbrace (k^n-1)/(k^m-1) \, : \, 1 < m < n
\text{ and } m \divides n \rbrace.$$
Hence $N(L) \ \cap \ (\Enn)_k = (T)_k$, but from the argument above
we know that $(T)_k = 
\lbrace (10^{m-1})^{d-1} 1 \, : \, m \geq 1, d > 1 \rbrace$
is not context-free.
\end{proof}

\begin{corollary}
There is a $k$-automatic set of rationals whose base-$k$ expansions
(as real numbers) are not accepted by any B\"uchi automaton.
\label{notbuchi}
\end{corollary}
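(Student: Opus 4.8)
The plan is to exhibit such a set explicitly. Take $S_1 = \lbrace (k^n-1)/(k^m-1) \ : \ 1 \le m < n \rbrace$ from Theorem~\ref{inter}, which we already know to be a $(\Quep,k)$-automatic set of rationals, and show that the set $X$ of all base-$k$ expansions (as real numbers) of the elements of $S_1$ is not $\omega$-regular, i.e.\ is accepted by no B\"uchi automaton. I would argue by contradiction, assuming $X = \mathcal{L}(\mathcal{B})$ for a B\"uchi automaton $\mathcal{B}$ with initial state $q_0$, and writing $\star$ for the radix marker separating the integer part of an expansion from the fractional part.

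First I would pass to the integers in $S_1$. By the classical fact $(k^m-1)\divides(k^n-1) \iff m\divides n$ used in the proof of Theorem~\ref{inter}, the set $S_1 \cap \Enn$ equals $\lbrace (k^{md}-1)/(k^m-1) \ : \ m \ge 1,\ d \ge 2\rbrace$, an element of which has base-$k$ representation $(10^{m-1})^{d-1}1$ and hence has $(10^{m-1})^{d-1}1 \star 0^\omega$ among its real-number expansions. Since the set of base-$k$ expansions of non-negative integers is $\omega$-regular, as is $\Sigma_k^* \star 0^\omega$, and since $\omega$-regular languages are closed under intersection, the assumption $X = \mathcal{L}(\mathcal{B})$ forces the language $Y := \lbrace 0^j (a)_k \star 0^\omega \ : \ j \ge 0,\ a \in S_1 \cap \Enn \rbrace$ to be $\omega$-regular.

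The crux is then an elementary lemma: if $R \subseteq \Sigma_k^*$ and $\lbrace v \star 0^\omega \ : \ v \in R \rbrace$ is accepted by a B\"uchi automaton, then $R$ is regular. This holds because $v \star 0^\omega$ is accepted if and only if $\widehat\delta(q_0, v)$ meets the set $G$ of states from which the fixed $\omega$-word $\star 0^\omega$ admits an accepting run; and $G$ depends only on $\mathcal{B}$ --- one reads $\star$ and then searches for a reachable $0$-labelled cycle through an accepting state --- so $R$ is recognized by the transition structure of $\mathcal{B}$ restricted to $\Sigma_k$ with final set $G$. Applied to $Y$, this gives that $0^* \cdot \lbrace (10^{m-1})^{d-1}1 \ : \ m \ge 1,\ d \ge 2\rbrace$ is regular, hence (intersecting with $C_k$) so is $\lbrace (10^{m-1})^{d-1}1 \ : \ m \ge 1,\ d \ge 2\rbrace$ itself. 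But, exactly as in the proof of Theorem~\ref{inter}, intersecting the latter with the regular language $1 0^* 1 0^* 1 0^* 1$ yields $\lbrace 1 0^n 1 0^n 1 0^n 1 \ : \ n \ge 0\rbrace$, and a routine application of the pumping lemma for context-free languages shows this set is not even context-free, a fortiori not regular. This contradiction completes the proof.

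The step I expect to demand the most care is the interface with the B\"uchi model of real numbers rather than any genuinely new mathematics: one must deal with the radix marker, with the two expansions of $k$-adic rationals (terminating versus ending in $(k-1)^\omega$), and with the model-dependent handling of leading zeros in the integer part. The device that makes this painless is intersecting with $\Sigma_k^* \star 0^\omega$ at the very start, after which the lemma on $\lbrace v \star 0^\omega : v \in R\rbrace$ isolates an honest regular language of finite words and the remainder is precisely the finite-automaton argument already carried out for Theorem~\ref{inter}.
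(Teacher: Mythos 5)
Your proposal is correct, but it takes a genuinely different route from the paper. The paper's witness is $S_3 = \lbrace (k^m-1)/(k^n-1) \, : \, 1 \leq m < n \rbrace$, whose elements all lie in $I(0,1)$ and have the purely periodic expansions $0.(0^{n-m}\,(k-1)^m)^\omega$; the non-$\omega$-regularity of this language of infinite words is then dispatched by a direct pumping argument on the B\"uchi automaton. You instead take the set $S_1$ of Theorem~\ref{inter}, cut down to the \emph{terminating} expansions $v \star 0^\omega$ (which, since the reduced denominators of elements of $S_1$ are coprime to $k$, correspond exactly to the integers in $S_1$), and invoke a transfer lemma reducing B\"uchi recognizability of $\lbrace v \star 0^\omega : v \in R\rbrace$ to ordinary regularity of $R$; this hands the problem back to finite-word automata, where the non-context-freeness of $\lbrace (10^{m-1})^{d-1}1 \, : \, m \geq 1, d > 1\rbrace$ has already been established in Theorem~\ref{inter}. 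Both arguments are sound. Yours costs a little more bookkeeping at the interface with the real-number model (the radix marker, leading zeros, the two expansions of $k$-adic rationals), but it buys a proof with no new pumping argument on $\omega$-words: everything reduces to the finite-automaton computation already done, and it makes explicit that the obstruction is the same integer set $T$ that witnesses non-closure under intersection. The paper's choice of $S_3$ buys a shorter, self-contained statement of the offending $\omega$-language at the price of an unwritten $\omega$-pumping argument. One small remark: your intersection with the expansions of $\Enn$ is redundant, since intersecting $X$ with $\Sigma_k^* \star 0^\omega$ already isolates the integers of $S_1$; and the computability of the state set $G$ in your transfer lemma, while correct as you describe it, is not even needed --- mere existence of $G$ suffices for the contradiction.
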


\begin{proof}
Define $S_3 = 
\lbrace (k^m-1)/(k^n-1) \, : \, 1 \leq m < n \rbrace$; this is 
easily seen to be a $k$-automatic set of rationals.
However, the set of its base-$k$ expansions is of the form
$$ \bigcup_{0 < m < n < \infty} 0.(0^{n-m} \, (k-1)^m)^\omega ,$$
where by $x^\omega$ we mean the infinite word $xxx\cdots$.
A simple argument using the pumping lemma shows that no
B\"uchi automaton can accept this language.
\end{proof}

It follows that the class of $k$-automatic sets of rational numbers we
study in this paper is not the same as that studied in
\cite{Adamczewski&Bell:2011,Boigelot&Brusten&Bruyere:2010,Boigelot&Brusten:2009,Boigelot&Brusten&Leroux:2009}.

\begin{remark}
The technique above also allows us to prove that the languages
$L_d =  \lbrace (p,q)_k \, : \, q \divides p \rbrace$,
$L_r = \lbrace (p,q)_k \, : \, \gcd(p,q) > 1 \rbrace$,
and
$L_g = \lbrace (p,q)_k \, : \, \gcd(p,q) = 1 \rbrace$
are not context-free.

For the first, suppose $L_d$ is context-free and is
accepted by a PDA $M_d$.
Consider a PDA $M$ that
on input a unary word $x := 1^n$, $n \geq 2$, guesses
a word of the form $y := 0^{n-a} 1^a$ with $1 < a < n$ and simulates
$M_d$ on $x \times y$, accepting if and only if
$x \times y \in L_d$.  Then $M$ accepts 
the unary language $\lbrace 1^n \ : \ n \text{ composite} \rbrace$
which is well-known to be non-context-free
\cite[Ex.\ 6.1, p.\ 141]{Hopcroft&Ullman:1979}, a contradiction.

Notice that $L_d$ can be considered as the set of all possible
rational representations of $\Enn$.

For $L_r$, the same kind of construction works.

For $L_g$, a more complicated argument is needed.  First, we prove
that the language $L_c = \lbrace 0^i 1^j \ : \ i, j \geq 1 
\text{ and } \gcd(i,j) > 1 \rbrace$ is not context-free.
To see this, assume it is,
use the pumping lemma, let $n$ be the constant, and
let $p$ be a prime $> n$.
Choose $z = 0^{p^2} 1^p \in L_c$.  Then 
we can write $z = uvwxy$ where $|vwx| \leq n$ and $|vx| \geq 1$
and $u v^i w x^i y \in L_c$ for all $i \geq 0$.
Then $v$ and $x$ each contain only one type of letter, because
otherwise $uv^2wx^2y$ has $1$'s before $0$'s, a contradiction.
There are three possibilities:  
\begin{itemize}
\item[(a)] $vx = 0^r$, $1 \leq r \leq n$;
\item[(b)] $vx = 1^s$, $1 \leq s \leq n$;
\item[(c)] $v = 0^r$ and $x = 0^s$, $1 \leq r,s \leq n$.
\end{itemize}

In case (a) we pump with $i = 2$, obtaining $u v^2 x w^2 y = 0^{p^2+r} 1^p$.
But $p$ is a prime, so for this word to be in $L_c$ we must have
$r \equiv \lmod{0} {p}$, contradicting the
inequality $1 \leq r \leq n < p$.

Similarly, in case (b) we get $0^{p^2} 1^{p+s}$.  
But $1 \leq s < p$, so $p < p+s < 2p$.
Hence $p$ does not divide $p+s$, so $\gcd(p^2, p+s) = 1$, a contradiction.

Finally, in case (c) and pumping with $i = j+1$ we get $0^{p^2+rj} 1^{p+sj}$.
Since $p$ is a prime, and $1 \leq r, s < p$, by Dirichlet's theorem
there are infinitely many primes of the form $p^2 + rj$ and $p + sj$.
If $r \geq s$, choose $j$ so that $p^2 + rj$ is a prime.  Since
$p^2 + rj > p+sj$, we have $\gcd(p^2+rj, p+sj) = 1$, a contradiction.

If $r < s$, choose $j > p^2-p$ such that $p+sj$ is prime.  Then
$p+ j > p^2$, so by adding $(s-1)j$ to both sides we get
$p + sj > p^2 + (s-1)j \geq p^2 + rj$.  Thus $p+sj$ is a prime greater
than $p^2 + rj$ and so $\gcd(p^2+rj, p+sj) = 1$, a contradiction.
This completes the proof.

From this we also get that 
$$L_e = \lbrace 0^i 1^j \ : \ \gcd(i,j) = 1 \rbrace$$ 
is not context-free, since it is known that the class of
context-free languages that are subsets of $0^* 1^*$ are closed under
relative complement with $0^* 1^*$ \cite{Ginsburg:1966}.

We can now prove that
$L_g = \lbrace (p,q)_k \, : \, \gcd(p,q) = 1 \rbrace$ is not 
context-free.      Suppose it is.
Then, since the CFL's are closed under
intersection with a regular language, $L' := L_g \ \bigcap \ 
([1,0]^*[1,1]^+ \ \cup \ [0,1]^*[1,1]^+)$ is also context-free.
But the numerators are numbers of the
form $(k^n-1)/(k-1)$ and the denominators are numbers of the form
$(k^m - 1)/(k-1)$.  From a classical result, we know that
$\gcd(k^m - 1, k^n - 1) = k^{\gcd(m,n)} - 1$.  So it follows that
$$L' =
\lbrace [1,0]^s [1,1]^t \ : \ \gcd(s+t,t) = 1 \text{ and } s\geq 0, t \geq 1 \rbrace $$
$$
\ \cup \ \lbrace [0,1]^s [1,1]^t \ : \ \gcd(s+t,t) = 1 \text{ and } s\geq 0, t \geq 1 \rbrace.$$
Now apply the morphism $h$ that maps $[a,b]$ to $a+b-1$.  Since the CFL's are
closed under morphism, we get
$$h(L') = \lbrace 0^m 1^n \ : \ \gcd(m,n) = 1 \text{ and } m, n \geq 1  \rbrace.$$
But we already proved this is not context-free, 
a contradiction.  Hence $L_g$ is not context-free.
\label{rem1}
\end{remark}

\section{Solvability results}
\label{solvability}

In this section we show that a number of problems involving $k$-automatic
sets of integers and rational numbers are recursively solvable.

\begin{theorem}
The following problems are recursively solvable:  given
a DFA $M$, a rational number $\alpha$, and a relation
$\rela$ chosen from $=, \not=, <, \leq, >, \geq $, does
there exist $x \in \quo_k(L(M))$ with $x \rela \alpha$?
\end{theorem}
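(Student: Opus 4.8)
The plan is to reduce the decision problem to the emptiness problem for regular languages, using Lemma~\ref{lem1}. Given the DFA $M$, the rational $\alpha$, and the relation $\rela$, form the language $L(M) \cap L_{\rela\,\alpha}$, where $L_{\rela\,\alpha} = \lbrace x \in (\Sigma_k^2)^* : \quo_k(x) \rela \alpha \rbrace$. By Lemma~\ref{lem1}(a), since $\alpha$ is rational, $L_{\rela\,\alpha}$ is regular, and moreover the construction is effective: one can build a DFA for it from $\alpha$ and $\rela$. (One small point: words $w$ with $[\pi_2(w)]_k = 0$ must be discarded, since $\quo_k(w)$ is then undefined; intersecting first with the regular language $L_{\not=0}$ handles this, or equivalently the DFA for $L_{\rela\,\alpha}$ simply rejects such words.) Since the regular languages are effectively closed under intersection, we obtain a DFA $M'$ for $L(M) \cap L_{\rela\,\alpha}$.

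The key observation is then that there exists $x \in \quo_k(L(M))$ with $x \rela \alpha$ if and only if $L(M') \neq \emptyset$: a word $w \in L(M) \cap L_{\rela\,\alpha}$ is precisely a representation of some rational number lying in $\quo_k(L(M))$ and satisfying the relation $\rela$ with $\alpha$, and conversely any such rational number has at least one representation in $L(M)$, which necessarily also lies in $L_{\rela\,\alpha}$. Since emptiness of the language accepted by a DFA is decidable (check reachability of an accepting state), this gives the algorithm.

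I do not expect a serious obstacle here; the content of the theorem is essentially packaged into Lemma~\ref{lem1}(a), which guarantees both regularity and, on inspection of its (already-established) proof, effectiveness of the construction of $L_{\rela\,\alpha}$. The only mild care needed is the bookkeeping for $\not=$ and $\geq$, $\leq$: these are obtained as complements or unions of the strict/equality cases, all of which preserve effectiveness. Thus the mildly fiddly step — and it is routine — is just confirming that Lemma~\ref{lem1}(a) is stated effectively enough that a DFA for $L_{\rela\,\alpha}$ can actually be produced from the input data, rather than merely asserted to exist; given the explicit description in \cite{Shallit:2011}, this is immediate.
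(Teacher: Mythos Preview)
Your proposal is correct and follows essentially the same approach as the paper: build a DFA for $L_{\rela\,\alpha}$ via Lemma~\ref{lem1}(a), intersect with $L(M)$ via the direct product construction, and test emptiness by reachability. Your additional remarks on effectiveness and on handling the $[\pi_2(w)]_k = 0$ case are sound and slightly more careful than the paper's own write-up.
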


\begin{proof}
The following gives a procedure for deciding if
$x \rela \alpha$.  First, we create a DFA $M'$ accepting the
language $L_{\rela \alpha}$ as described in Lemma~\ref{lem1} above.
Next, using the usual direct product construction, we create
a DFA $M''$ accepting $L(M) \ \cap \ L_{\rela \alpha}$.
Then, using breadth-first or depth-first search, we check to see
whether there exists
a path from the initial state of $M''$ to some final
state of $M''$.  
Since by definition $L_{\rela \alpha}$ contains every representation of each rational $x \in \quo_k(L_{\rela \alpha})$, we have $\quo_k(L(M) \ \cap \ L_{\rela \alpha}) = \quo_k(L(M)) \ \cap \ \quo_k(L_{\rela \alpha})$, and therefore this procedure is correct.
\end{proof}

\begin{lemma}
Let $M$ be a DFA with input alphabet
$\Sigma_k^2$ and let $F \subseteq \Quep$ be a finite set of
non-negative rational numbers.
Then the following problems are recursively solvable:
\begin{enumerate}
\item Is $F \subseteq \quo_k (L(M))$?
\item Is $\quo_k(L(M)) \subseteq F$?
\end{enumerate}
\label{lemma1}
\end{lemma}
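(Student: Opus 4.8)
The plan is to reduce both questions to the already-established decidability of comparing a single rational with elements of $\quo_k(L(M))$, together with the regularity results of Lemma~\ref{lem1}. Write $F = \{\alpha_1, \alpha_2, \ldots, \alpha_r\}$ with each $\alpha_i \in \Quep$.

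For part (1), note that $F \subseteq \quo_k(L(M))$ holds if and only if, for every $i$ with $1 \le i \le r$, there exists $x \in \quo_k(L(M))$ with $x = \alpha_i$. By the preceding theorem (taking the relation $\rela$ to be $=$), each of these $r$ membership questions is recursively solvable: one builds a DFA for $L_{=\alpha_i}$ using Lemma~\ref{lem1}(a), intersects with $L(M)$ via the direct product construction, and checks for a path from the start state to a final state. Since $r$ is finite, running this test for each $\alpha_i$ and taking the conjunction of the answers decides part (1).

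For part (2), the question is whether $\quo_k(L(M))$ avoids every rational \emph{not} in $F$. Equivalently, $\quo_k(L(M)) \subseteq F$ fails precisely when some word of $L(M)$ has a quotient lying in $\Quep \setminus F$. The key observation is that $\Quep \setminus F$ is, at the level of representations, captured by a regular language: the set $L_{\ne \alpha_1} \cap L_{\ne \alpha_2} \cap \cdots \cap L_{\ne \alpha_r}$ consists exactly of those $w \in (\Sigma_k^2)^*$ with $\quo_k(w)$ defined and $\quo_k(w) \notin F$, and each $L_{\ne \alpha_i}$ is regular by Lemma~\ref{lem1}(a); a finite intersection of regular languages is regular. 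Thus $\quo_k(L(M)) \subseteq F$ if and only if the regular language $L(M) \cap L_{\ne \alpha_1} \cap \cdots \cap L_{\ne \alpha_r}$ is empty (one should first intersect with $L_{\ne 0}$, or otherwise restrict to words on which $\quo_k$ is defined, so that words representing no quotient do not interfere). Emptiness of a regular language given by a DFA is decidable by the same reachability search, so part (2) is recursively solvable.

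The only mild subtlety — and the closest thing to an obstacle — is bookkeeping around words $w$ with $[\pi_2(w)]_k = 0$, for which $\quo_k(w)$ is undefined: one must be careful that such words are excluded (via intersection with $L_{\ne 0}$, which is regular by Lemma~\ref{lem1}) before testing emptiness, so that the emptiness test is genuinely equivalent to the containment statement. Everything else is a routine assembly of closure properties of regular languages and the decidability of reachability in a finite automaton.
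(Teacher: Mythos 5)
Your proposal is correct and follows essentially the same route as the paper: part (1) is decided by testing membership of each $\alpha_i$ via the regular language $L_{=\alpha_i}$ from Lemma~\ref{lem1}, and part (2) by an emptiness test on the regular part of $L(M)$ whose quotients fall outside $F$. Your use of $L(M)\cap\bigcap_i L_{\neq\alpha_i}$ is just the complement-side formulation of the paper's set difference between $L(M)$ and the union of the $L_{=\alpha_i}$, and your care about words with zero denominator is a sensible (and harmless) extra precaution.
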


\begin{proof}
To decide if $F \subseteq \quo_k (L(M))$, we simply check, using
Lemma~\ref{lem1}, whether $x \in \quo_k (L(M))$ for each $x \in F$.

To decide if $\quo_k(L(M)) \subseteq F$, we create DFA's accepting
$L_{=x}$ for each $x \in F$, using Lemma~\ref{lem1}.  Now we
create an automaton accepting all representations of all elements of $F$
using the usual direct product construction for the
union of regular languages.
Since $F$ is finite, the resulting automaton $A$ is finite.
Now, using the usual
direct product construction, we create a DFA accepting
$L(M)-L(A)$ and check to see if its language is empty.
\end{proof}

\begin{theorem}
The following problem is recursively solvable:  given a 
DFA $M$, and an integer $k$,
is the set $\quo_k(L(M))$ infinite?
\label{infinitec}
\end{theorem}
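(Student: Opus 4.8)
The plan is to reduce the question ``is $\quo_k(L(M))$ infinite?'' to a finite search, using the pumping lemma to bound where a ``witness to infinitude'' must appear. First I would handle the pathological issue of division by zero: intersect $L(M)$ with $L_{\not= 0}$ (regular, by Lemma~\ref{lem1}) and also delete words whose $\pi_2$-projection represents $0$ — these contribute nothing to $\quo_k$ — so that from now on $\quo_k$ is defined on every accepted word. Call the resulting DFA $M'$, with $n$ states. The key observation is that if $\quo_k(L(M'))$ is infinite, then in particular $L(M')$ is infinite, so by the pumping lemma there are words $u,v,w$ with $1 \le |v|$, $|uv| \le n$, and $uv^iw \in L(M')$ for all $i \ge 0$. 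Conversely, any such pumpable triple either generates infinitely many distinct quotients or generates only finitely many; the heart of the argument is to decide, for a given pumpable loop, which case occurs — and to show that scanning all ``short'' pumpable loops suffices.

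The second step is therefore a lemma: for a fixed pumpable triple $(u,v,w)$, the set $\{\quo_k(uv^iw) : i \ge 0\}$ is either finite or infinite, and we can decide which. Here I would use the arithmetic identity already recorded as~(\ref{uvjc}): writing $p_i = [\pi_1(uv^iw)]_k$ and $q_i = [\pi_2(uv^iw)]_k$, each of $p_i$ and $q_i$ is of the form $\alpha k^{i|v|} + \beta$ for constants $\alpha,\beta$ depending only on $u,v,w$ (and on which projection). Thus $\quo_k(uv^iw) = (\alpha_1 k^{i|v|} + \beta_1)/(\alpha_2 k^{i|v|} + \beta_2)$, a Möbius-type function of $x = k^{i|v|}$; as $i \to \infty$ this tends to a limit $\alpha_1/\alpha_2$ (when $\alpha_2 \ne 0$; if $\alpha_2 = 0$ then the denominator is eventually the constant $\beta_2 \ne 0$ and the numerator grows, so the quotients are unbounded and the set is infinite). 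A rational Möbius function of $k^{i|v|}$ is eventually constant if and only if it is identically constant, which happens iff $\alpha_1\beta_2 = \alpha_2\beta_1$ — a decidable condition. So for each pumpable triple we can decide in finite time whether it contributes infinitely many quotients.

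The third step is to bound the search. A pumping argument gives: if $\quo_k(L(M'))$ is infinite, then there is a pumpable triple $(u,v,w)$ with $|uv| \le n$ that contributes infinitely many distinct quotients. The subtlety — and the step I expect to be the main obstacle — is that a pumpable loop reachable anywhere in the automaton need not have its ``entry prefix'' $u$ of bounded length, and two different loops might conspire; but since $\quo_k(L(M'))$ infinite forces $L(M')$ infinite, and an infinite regular language has a pumpable triple with $|uvw| \le n$ (indeed $|uv| \le n$), it is enough to argue that \emph{some} minimal-length pumpable triple already witnesses infinitude of the quotient set. If every short pumpable loop gave only finitely many quotients, one would want to conclude $\quo_k(L(M'))$ is finite; making this rigorous requires decomposing $L(M')$ by pumping into finitely many ``basic'' pieces of the form $u_0 v_1^* u_1 v_2^* \cdots$ and controlling the quotients contributed by nested loops, which is the technical core. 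The clean route is: (i) enumerate all triples $(u,v,w)$ with $|uvw| \le 2n$ (say) and $uv^iw \in L(M')$ for all $i$, obtainable by examining simple cycles in $M'$; (ii) for each, apply the Möbius test of step two; (iii) answer ``infinite'' iff some triple passes the test, and additionally fall back on the fact that $L(M')$ finite (checkable directly) forces $\quo_k(L(M'))$ finite. Since only finitely many triples are examined and each test is decidable, the whole procedure terminates, giving recursive solvability.
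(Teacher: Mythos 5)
Your steps 1 and 2 are fine --- the M\"obius/limit analysis of a single loop is precisely Lemma~\ref{mainl} --- but there is a genuine gap at the step you yourself flag as the obstacle, and the ``clean route'' you offer does not close it. The pumping lemma gives a witnessing triple with $|uv|\le n$ but with $w$ of \emph{unbounded} length, and whether a given loop is quotient-constant genuinely depends on $w$: by Lemma~\ref{mainl}, the sequence $\quo_k(uv^iw)$ is constant precisely when $\quo_k(uw)$ equals the limit value $U=\gamma_k(u,v):=\bigl([\pi_1(uv)]_k-[\pi_1(u)]_k\bigr)/\bigl([\pi_2(uv)]_k-[\pi_2(u)]_k\bigr)$, a condition on $w$ cut out by a language of the form $L_{=\gamma}$ whose automaton has size governed by the numerator and denominator of $\gamma$, i.e., up to $k^{O(n)}$. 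Hence the shortest suffix $w$ making a given loop non-constant can be far longer than $2n$, and it is entirely possible that every triple with $|uvw|\le 2n$ is quotient-constant while $\quo_k(L(M'))$ is infinite; your algorithm would then wrongly answer ``finite,'' and the fallback test ``is $L(M')$ finite?'' does not help, since $L(M')$ is infinite in this scenario. Your intermediate claim that infinitude of $\quo_k(L(M'))$ forces \emph{some} non-constant loop with $|uv|\le n$ is true (pump every long word down; if all loops were constant, every quotient would equal that of a word of length $<n$), but the $w$ in that witness is not bounded, so your finite enumeration does not find it.

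The paper closes exactly this gap by changing what is enumerated: instead of a finite set of witness \emph{words}, it forms the finite set of candidate \emph{values}
$T=\lbrace \quo_k(x) : x\in L(M),\ |x|<n\rbrace\ \cup\ \lbrace \gamma_k(u,v) : uv\in\pref(L),\ |v|\ge 1,\ |uv|\le n\rbrace$,
and decides the semantic containment $\quo_k(L(M))\subseteq T$ using Lemma~\ref{lemma1}, i.e., by testing emptiness of $L(M)$ minus the union of the regular languages $L_{=t}$ for $t\in T$. Correctness then follows because any word whose quotient escapes $T$ is long, pumps as $uvw$ with $|uv|\le n$, and cannot pump constantly (constancy would force its quotient to equal $\gamma_k(u,v)\in T$), so by Lemma~\ref{mainl} it pumps strictly monotonically and yields infinitely many distinct quotients. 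If you replace your bounded enumeration of triples by this containment test, your argument becomes the paper's proof.
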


Note that this is \textit{not} the same as asking whether the language
$L(M)$ itself is infinite, since a single number may have infinitely many 
representations.  

First, we need 
a useful lemma from
\cite{Shallit:2011,Schaeffer&Shallit:2012}.

\begin{lemma}
Let $u, v, w \in (\Sigma_k^2)^*$ such that
$|v| \geq 1$, and such that
$[\pi_1(uvw)]_k$ and
$[\pi_2(uvw)]_k$ are not both $0$.  
Define 
\begin{equation}
U := \begin{cases}
\quo_k(w), & \text{if $[\pi_1(uv)]_k = [\pi_2(uv)]_k = 0$}; \\
\infty, & \text{if $[\pi_1(uv)]_k > 0$ and $[\pi_2(uv)]_k = 0$}; \\
{{[\pi_1(uv)]_k - [\pi_1(u)]_k} \over
	{[\pi_2(uv)]_k - [\pi_2(u)]_k}},
	& \text{otherwise.}
\end{cases}
\end{equation}
(a) Then exactly one of the following cases occurs:
\begin{itemize}
\item[(i)] $\quo_k(uw) < \quo_k(uvw) < \quo_k(uv^2w) < \cdots < U$ ;
\item[(ii)] $\quo_k(uw) = \quo_k(uvw) = \quo_k(uv^2w) = \cdots = U$ ;
\item[(iii)] $\quo_k(uw) > \quo_k(uvw) > \quo_k(uv^2w) > \cdots > U$ .
\end{itemize}
(b) Furthermore, $\lim_{i \rightarrow \infty} \quo_k(uv^iw) = U$.
\label{mainl}
\end{lemma}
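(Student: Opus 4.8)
The plan is to reduce everything to a single algebraic identity for $\quo_k(uv^iw)$ as an explicit function of $i$, and then read off monotonicity and the limit. First I would set up notation: write $|v| = \ell$, and let $a = [\pi_1(u)]_k$, $b = [\pi_2(u)]_k$, and more usefully work with the reversed (least-significant-digit-first) convention so that prepending and appending blocks behaves linearly. The key computation is the analogue of equation~(\ref{uvjc}): for each $i \geq 0$,
\begin{equation}
[\pi_1(uv^iw)]_k = [\pi_1(u)]_k + k^{|u|}\left([\pi_1(v^i)]_k + k^{i\ell}[\pi_1(w)]_k\right),
\end{equation}
and since $[\pi_1(v^i)]_k = [\pi_1(v)]_k \cdot \frac{k^{i\ell}-1}{k^\ell - 1}$, the whole numerator has the shape $A k^{i\ell} + B$ for constants $A, B$ depending only on $u,v,w$ (not on $i$), and similarly the denominator has the shape $C k^{i\ell} + D$. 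Thus
\begin{equation}
\quo_k(uv^iw) = \frac{A k^{i\ell} + B}{C k^{i\ell} + D}.
\end{equation}
I would identify these constants explicitly: one finds $A = [\pi_1(uv)]_k - [\pi_1(u)]_k$ (up to the relevant power of $k$) and $B$ involves $[\pi_1(uw)]_k$, with the analogous statements for $C, D$ in the second coordinate. In particular $A/C = U$ in the "otherwise" case, $B/D = \quo_k(uw)$, and the degenerate cases $[\pi_2(uv)]_k = [\pi_2(u)]_k$ (i.e.\ $C = 0$) correspond exactly to the first two branches of the definition of $U$.

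Once the Möbius form $f(i) = \frac{Ak^{i\ell}+B}{Ck^{i\ell}+D}$ is in hand, part~(b) is immediate: as $i \to \infty$, $k^{i\ell} \to \infty$, so $f(i) \to A/C = U$ when $C \neq 0$; when $C = 0$ but $A \neq 0$ the denominator is the constant $D$ while the numerator grows, giving $f(i) \to \infty = U$; and when $A = C = 0$ (so $[\pi_1(uv)]_k = [\pi_1(u)]_k$ and likewise for $\pi_2$, which under the canonical/trailing-zero conventions forces $v$ to contribute nothing beyond what $u$ already has) we get $f(i) \equiv B/D = \quo_k(w)$. For part~(a), I would compute the sign of $f(i+1) - f(i)$. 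Writing $x = k^{i\ell} > 0$ and $r = k^\ell > 1$, a short calculation gives
\begin{equation}
f(i+1) - f(i) = \frac{(AD - BC)(rx - x)}{(Crx+D)(Cx+D)} = \frac{(AD-BC)\,x(r-1)}{(Crx+D)(Cx+D)}.
\end{equation}
Since $x(r-1) > 0$, the sign of each successive difference equals $\operatorname{sgn}(AD - BC)$, which is independent of $i$ — this is exactly why the sequence is strictly monotone (cases (i)/(iii)) or constant (case (ii)), and the three cases are mutually exclusive because $AD - BC$ has a single definite sign. Combining with part~(b): if $AD - BC > 0$ the sequence strictly increases toward its limit $U$, giving (i); if $< 0$ it strictly decreases toward $U$, giving (iii); if $= 0$ the Möbius map is constant and equals $U$, giving (ii). One must also check that no denominator $Cx + D$ vanishes along the way — but $[\pi_2(uv^iw)]_k = 0$ would make $\quo_k$ undefined, contradicting the hypothesis that the relevant quantities are not both zero together with the standing assumption that we only evaluate $\quo_k$ where it is defined; I would handle this by noting $Cx + D = [\pi_2(uv^iw)]_k / k^{|w|}$ (in the reversed convention) and invoking that hypothesis.

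The main obstacle I anticipate is bookkeeping rather than mathematics: pinning down the constants $A, B, C, D$ so that they match the three-case definition of $U$ in the statement exactly, including the boundary behavior when $[\pi_2(uv)]_k - [\pi_2(u)]_k = 0$ versus when the stronger $[\pi_1(uv)]_k = [\pi_2(uv)]_k = 0$ holds. The reversed-representation identity (\ref{uvjc}) already in the paper does most of the work, but I would need to be careful about whether $u$, $v$, or $w$ can be empty, about leading/trailing zeros in each projection, and about the case analysis matching up; the degenerate branch $U = \quo_k(w)$ in particular requires checking that $A = B$-type collapses really do force $f$ constant. None of this is deep, but it is the step where an error is most likely to creep in, so I would verify each of the three branches of the definition of $U$ against the corresponding $(A,B,C,D)$ configuration separately.
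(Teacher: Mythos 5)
The paper offers no proof of this lemma --- it is imported verbatim from \cite{Shallit:2011} --- so there is nothing in-paper to compare against; your M\"obius-transformation argument is the standard one and is correct in substance. Writing $\quo_k(uv^iw)=(Ak^{i\ell}+B)/(Ck^{i\ell}+D)$ and computing $f(i+1)-f(i)=(AD-BC)\,x(r-1)/\bigl((Crx+D)(Cx+D)\bigr)$ does give strict monotonicity or constancy according to the sign of $AD-BC$ (the denominator factors are positive because they equal $[\pi_2(uv^jw)]_k$, which is nonzero whenever the quotients in the statement are defined), and $\lim_{i\to\infty} f(i)=A/C$ (or $\infty$, or the constant $B/D$) matches the three branches of $U$. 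Two bookkeeping slips, neither fatal but worth fixing: (1) in the \emph{reversed} convention you chose, $A$ acquires a $k^{|u|}[\pi_1(w)]$ term, so $A/C$ is \emph{not} $([\pi_1(uv)]_k-[\pi_1(u)]_k)/([\pi_2(uv)]_k-[\pi_2(u)]_k)$ as the statement requires; that identity holds in the most-significant-digit-first convention, where $A=k^{|w|}([\pi_1(uv)]_k-[\pi_1(u)]_k)/(k^{\ell}-1)$, so the computation should be carried out there. (2) $B/D$ is not $\quo_k(uw)$ in general --- rather $(A+B)/(C+D)=\quo_k(uw)$ --- though you never actually use that identification.
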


Now we can prove Theorem~\ref{infinitec}.

\begin{proof}
Without loss of generality, we may assume that the representations
in $M$ are canonical (contain no leading $[0,0]$'s).
Define 
$$\gamma_k (u,v) = {{[\pi_1(uv)]_k - [\pi_1(u)]_k} \over
	{[\pi_2(uv)]_k - [\pi_2(u)]_k}},$$
and let $\pref(L)$ denote the language of all prefixes of all words
of $L$.   Let $n$ be the number of states in $M$.
We claim that the set $\quo_k(L(M))$ is finite if and only if
$\quo_k(L(M)) \subseteq T$, where
\begin{multline}
T = \lbrace \quo_k (x) \, : \, x \in L(M) \text{ and } |x| < n \rbrace
\ \cup \\
\lbrace \gamma_k(u,v) \, : \, uv \in \pref(L) \text{ and }
	|v| \geq 1 \text{ and } |uv| \leq n \rbrace .
\end{multline}

One direction is easy, since if $\quo_k(L(M)) \subseteq T$, then
clearly the set $\quo_k(L(M))$ is finite, since $T$ is.

Now suppose $\quo_k(L(M)) \subsetneq T$, so there exists
some $x \in L(M)$ with $\quo_k(x) \not\in T$.
Since $T$ contains all words of $L(M)$ of length $<n$, 
such an $x$ is of length $\geq n$.
So the pumping lemma applies, and there exists a decomposition
$x = uvw$ with $|uv| \leq n$ and $|v| \geq 1$ such that
$u v^i w \in L$ for all $i \geq 0$.  
Now apply Lemma~\ref{mainl}.  If case (ii) of that
lemma applies, then $\quo_k (x) = \gamma_k(u,v) \in T$, a 
contradiction.
Hence either case (i) or case (iii) must apply, and the lemma
shows that $\quo_k (u v^i w)$ for $i \geq 0$ gives
infinitely many distinct elements of $\quo_k(L(M))$.

To solve the decision problem, we can now simply enumerate the
elements of $T$ and use Lemma~\ref{lemma1}.
\end{proof}

\begin{theorem}
Given $p/q \in \Quep$, and a DFA $M$ accepting a $k$-automatic set
of rationals $S$, it is decidable if $p/q$ is an accumulation point
of $S$.
\end{theorem}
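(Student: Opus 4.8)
The plan is to reduce the question ``is $p/q$ an accumulation point of $S$?'' to a finite computation by using the pumping lemma together with Lemma~\ref{mainl}. An accumulation point of $S$ is a point $\alpha$ such that for every $\varepsilon > 0$ the intersection $S \cap I(\alpha - \varepsilon, \alpha + \varepsilon)$ contains a point other than $\alpha$ itself; equivalently (since $S \subseteq \Quep$), there is a sequence of \emph{distinct} elements of $S$ converging to $\alpha$. I claim that $p/q$ is an accumulation point of $S$ if and only if there is a decomposition $uvw$ of some word in $L(M)$, with $|uv| \leq n$ (where $n$ is the number of states of $M$) and $|v| \geq 1$, such that the quantity $U$ of Lemma~\ref{mainl} equals $p/q$ \emph{and} case (i) or (iii) of that lemma holds (so the pumped sequence is strictly monotone, hence gives infinitely many distinct values). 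Since there are only finitely many such decompositions $uvw$ of bounded length, and for each one we can compute $U$ exactly as a rational (or detect $U = \infty$) and determine which of cases (i)--(iii) holds by comparing $\quo_k(uw)$ with $U$ via Lemma~\ref{lem1}, this condition is decidable.

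The key steps, in order, are as follows. First I would establish the ``if'' direction: given such a decomposition, Lemma~\ref{mainl}(b) gives $\lim_{i\to\infty}\quo_k(uv^iw) = U = p/q$, and in cases (i) or (iii) the values $\quo_k(uv^iw)$ are pairwise distinct, so they form a sequence of distinct elements of $S$ converging to $p/q$, witnessing that $p/q$ is an accumulation point. Second, and harder, is the ``only if'' direction. Suppose $p/q$ is an accumulation point, so there are distinct $\alpha_1, \alpha_2, \ldots \in S$ with $\alpha_i \to p/q$. Pick representatives $x_i \in L(M)$ with $\quo_k(x_i) = \alpha_i$. The $\alpha_i$ are distinct, hence infinitely many of the $x_i$ are distinct, hence (as $L(M)$ is infinite) arbitrarily long $x_i$ occur; take one, say $x = x_i$ with $|x| \geq n$, and pump it as $x = uvw$ with $|uv| \leq n$, $|v| \geq 1$. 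By Lemma~\ref{mainl} the sequence $\quo_k(uv^jw)$ is monotone with limit $U = U(u,v,w)$. The subtlety is that this particular $x$ need not have $U = p/q$: pumping one convergent representative produces \emph{some} infinite monotone family inside $S$, but it may converge somewhere else. So I need a compactness/pigeonhole argument: there are only finitely many pairs $(u,v)$ with $|uv| \leq n$, hence only finitely many candidate limit values $U$ arising from pumpable decompositions of words in $L(M)$; if none of these equals $p/q$, I must derive a contradiction with $\alpha_i \to p/q$.

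To close that gap I would argue as follows. Let $\mathcal{U}$ be the (finite) set of all values $U(u,v,w)$ for pumpable decompositions $uvw$ with a word in $L(M)$, $|uv|\le n$, $|v|\ge 1$ (those with $U=\infty$ are irrelevant to a finite accumulation point and can be discarded). Suppose $p/q \notin \mathcal{U}$; choose $\varepsilon > 0$ so small that $I(p/q - \varepsilon, p/q + \varepsilon)$ contains no element of $\mathcal{U}$ and is bounded. For each representative $x_i$ with $\alpha_i$ in this interval and $|x_i| \geq n$, pump to get $x_i = u_iv_iw_i$ with limit $U_i \in \mathcal{U}$; by monotonicity (Lemma~\ref{mainl}), the entire sequence $\quo_k(u_iv_i^jw_i)$ lies weakly between $\alpha_i = \quo_k(u_iw_i)$ and $U_i$, and in particular every such term is bounded (since $\alpha_i$ is near $p/q$ and $U_i \in \mathcal{U}$ lies in a fixed finite set). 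This gives a uniform bound $B$ on the numerators and denominators of infinitely many elements of $\quo_k(L(M))$ near $p/q$, which bounds the complexity of their shortest representatives (by the ``small representation'' theorem stated earlier in the excerpt, using that membership near $p/q$ forces small $p',q'$); but only finitely many rationals have representatives of bounded size, contradicting that the $\alpha_i$ are infinitely many distinct points accumulating at $p/q$. Hence $p/q \in \mathcal{U}$, and a closer look (distinguishing whether the pumped sequence is constant, i.e.\ case (ii), versus strictly monotone, cases (i)/(iii)) shows that to \emph{accumulate} we need a genuinely monotone family, i.e.\ case (i) or (iii) with $U = p/q$, which is exactly the decidable condition above. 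The main obstacle is precisely this second direction --- ensuring that the finitely many pumping-derived limit values $\mathcal{U}$ capture \emph{every} accumulation point, which requires the boundedness/finiteness argument to rule out accumulation ``from outside'' $\mathcal{U}$; the ``if'' direction and the decidability of checking the condition are routine given Lemmas~\ref{lem1} and~\ref{mainl}.
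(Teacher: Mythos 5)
Your ``if'' direction is fine, but the ``only if'' direction --- the claim that every accumulation point of $\quo_k(L(M))$ must occur among the finitely many values $\gamma_k(u,v)$ with $|uv|\le n$ --- is not established, and the step you use to close it is genuinely broken. From the monotonicity in Lemma~\ref{mainl} you correctly conclude that the \emph{values} $\quo_k(u_iv_i^jw_i)$ lie in a bounded interval, but you then infer ``a uniform bound $B$ on the numerators and denominators'' of these elements. That inference is false: boundedness of a rational says nothing about the size of its numerator and denominator (the numbers $1/2,2/3,3/4,\ldots$ are bounded by $1$), and indeed infinitely many distinct rationals accumulating at $p/q$ \emph{must} have unbounded denominators. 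The appeal to the small-representation theorem does not rescue this, since its bound $k^{pqn}$ depends on the rational $p/q$ being represented and so is not uniform over the $\alpha_i$. Consequently no contradiction follows from $p/q\notin\mathcal{U}$, and the possibility of ``diagonal'' accumulation --- each $\alpha_i$ sitting in a pumped family whose limit $U_i$ is far from $p/q$ --- is not excluded. (Smaller issues: $\alpha_i=\quo_k(u_iv_iw_i)$, not $\quo_k(u_iw_i)$; finiteness of $\mathcal{U}$ requires the words to be free of leading $[0,0]$'s so that the first case in the definition of $U$, which depends on $w$, cannot arise; and deciding whether case (i) or (iii) holds for \emph{some} $w$ compatible with a given $(u,v)$ is itself a sub-decision-problem you do not address.)

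The paper avoids all of this with a short reduction: $p/q$ is an accumulation point of $S$ if and only if $p/q=\sup\bigl(S\cap I(-\infty,p/q)\bigr)$ or $p/q=\inf\bigl(S\cap I(p/q,\infty)\bigr)$; both intersections are again $k$-automatic sets of rationals by Lemma~\ref{lem1}(a), and their suprema/infima are computable by Theorem~2 of \cite{Shallit:2011}. If you want to pursue your direct pumping characterization, the missing content is essentially that cited theorem --- the hard analysis showing that extremal and limiting values are captured by the $\gamma_k(u,v)$ lives there --- so you would need to import or reprove it rather than rely on the numerator/denominator bound you attempted.
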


\begin{proof}
The number $\alpha$ is an accumulation point of a set of
real numbers $S$ if and only if at least one of the following
two conditions holds:
\begin{itemize}
\item[(i)] $\alpha = \sup (\, S \ \cap \ I(-\infty, \alpha) \, )$;
\item[(ii)] $\alpha = \inf (\, S \ \cap \ I(\alpha, \infty) \, )$.
\end{itemize}
By Lemma~\ref{lem1} we can compute a DFA accepting
$S' := S \ \cap \ I(-\infty, \alpha)$ 
(resp., $S \ \cap \ I(\alpha, \infty)$).
By \cite[Thm.\ 2]{Shallit:2011}
we can compute $\sup S'$ (resp., $\inf S'$).  
\end{proof}

\begin{theorem}
Suppose $S$ is a $k$-automatic set of integers accepted by
a finite automaton $M$.  There is an algorithm to decide, given $M$, 
whether there exists a finite set $D \subseteq \primes$ such that
$S \subseteq \pi(D)$.  Furthermore, if such a $D$
exists, we can also determine the sets $F$ and $U_f$ in
Theorem~\ref{dkf}.
\label{decide1}
\end{theorem}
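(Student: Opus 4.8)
The plan is to reduce the decision problem to testing finiteness of a single, effectively constructible regular language, using Theorem~\ref{dkf} as the one nontrivial ingredient. The crucial observation is the equivalence: there exists a finite $D \subseteq \primes$ with $S \subseteq \pi(D)$ \emph{if and only if} $0 \notin S$ and the set $F := \lbrace s/k^{\nu_k(s)} \, : \, s \in S \rbrace$ is finite. Since $\pi(D) \subseteq \Enn \setminus \lbrace 0 \rbrace$ by definition, the presence of $0$ in $S$ forces a negative answer, and whether $\epsilon \in (S)_k$ is trivially decidable; so assume $0 \notin S$. For the forward direction, if $S \subseteq \pi(D)$ with $D$ finite, then since $S$ is $k$-automatic, Theorem~\ref{dkf} shows $S$ is $k$-finite, and hence $F$ is finite by Lemma~\ref{kfinite}(c). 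For the converse, if $F$ is finite then each $s \in S$ factors as $s = f k^j$ with $f = s/k^{\nu_k(s)} \in F$ and $j = \nu_k(s) \geq 0$, so $\pd(s) \subseteq \pd(k) \cup \bigcup_{f \in F} \pd(f)$; the right-hand side is a finite set $D$, and $S \subseteq \pi(D)$.

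Granting this, I would proceed as follows. First, using Theorem~\ref{integers} and Corollary~\ref{deco} if $M$ is presented as a $(\Quep,k)$-automaton, replace $M$ by a DFA accepting $(S)_k$. Next, build a DFA $M_F$ for $(F)_k$: a word $x \in C_k$ lies in $(F)_k$ exactly when $x$ ends in a nonzero digit and $x 0^j \in (S)_k$ for some $j \geq 0$. The latter condition is recognized by $M$ with its accepting set enlarged to the set of states from which an accepting state is reachable along $0$-transitions only (computable by a graph search), and intersecting with the regular language of words ending in a nonzero digit yields $M_F$. Now $F$ is finite iff $L(M_F)$ is finite (the two are in canonical bijection), which is decidable by checking the trim part of $M_F$ for a cycle. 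This settles the first question.

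If $F$ is finite, I would enumerate $L(M_F)$ to obtain $F = \lbrace f_1, \ldots, f_t \rbrace$ explicitly, factor $k$ and each $f_i$ to output $D = \pd(k) \cup \bigcup_{1 \leq i \leq t} \pd(f_i)$, and recover each $U_{f_i} = \lbrace j \geq 0 \, : \, k^j f_i \in S \rbrace$ by running $M$ on $(f_i)_k$ to a state $q$ and reading off the acceptance pattern along the $0$-transition path out of $q$; since $k \nodiv f_i$ we have $(k^j f_i)_k = (f_i)_k 0^j$, and this path is eventually periodic with preperiod and period bounded by the number of states of $M$, so $U_{f_i}$ is obtained as an explicit ultimately periodic set.

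The whole argument is essentially bookkeeping once the equivalence above is in hand, and its only substantive ingredient is the forward direction, which rests entirely on Theorem~\ref{dkf} (a $k$-automatic subset of some $\pi(D)$ is $k$-finite). The remaining point requiring care --- and the mildest of the obstacles --- is verifying that $(F)_k$ is effectively regular, i.e.\ that ``there exists $j$ with $x 0^j \in (S)_k$'' can be tested by a finite automaton; this is the standard right-quotient-by-$0^*$ construction together with a last-letter condition, after which finiteness of a regular language and ultimate periodicity of the acceptance pattern along a fixed-letter path are routine decidability facts.
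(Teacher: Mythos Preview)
Your proposal is correct and follows the same approach as the paper: strip trailing zeroes from $(S)_k$, test finiteness of the resulting language (which is $(F)_k$), and then read off each $U_f$ from the $0$-transition path. The paper's proof is a two-line sketch that leaves the key equivalence implicit; you have spelled out that equivalence (finite $D$ with $S\subseteq\pi(D)$ $\iff$ $0\notin S$ and $F$ finite), correctly grounding the nontrivial forward direction in Theorem~\ref{dkf} via Lemma~\ref{kfinite}(c), and you also handle the edge case $0\in S$ that the paper omits.
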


\begin{proof} 
To determine if such a $D$ exists, it suffices to remove all trailing
zeroes from words in $(S)_k$ and see if the resulting language is finite.
If it is, we know $F$, and then it is a simple matter to compute the
$U_f$. 
\end{proof}

\begin{theorem}
There is an algorithm that, given a DFA $M_1$ accepting
$L_1 \subseteq (\Sigma_k^2)^*$, will determine if $\quo_k(L_1) \subseteq
\Enn$.  If so, the algorithm produces an automaton $M_2$ such that
$[L(M_2)]_k = \quo_k (L_1)$.  
\label{algo}
\end{theorem}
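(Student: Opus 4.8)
The plan is to make the proof of Theorem~\ref{integers} effective and to isolate the decidable conditions that control whether $\quo_k(L_1)\subseteq\Enn$. First I would normalize $L_1$. The set of words $w$ with $[\pi_2(w)]_k=0$ (equivalently, $\pi_2(w)\in 0^*$) is regular, so one can test whether $L_1$ contains such a word; if it does, $\quo_k(L_1)$ is undefined and the algorithm reports that $\quo_k(L_1)\not\subseteq\Enn$. Otherwise, strip every leading and every trailing $[0,0]$ from the words of $L_1$ (each operation preserves regularity and leaves $\quo_k$ unchanged), and then delete the words $w$ with $\pi_1(w)\in 0^*$ (these represent $0$), recording in a bit $b$ whether any such word was present. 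Let $L'$ be the resulting regular language and fix a DFA for it with $n$ states. Then $\quo_k(L_1)$ is $\quo_k(L')\cup\{0\}$ if $b=1$ and $\quo_k(L')$ otherwise; in particular $\quo_k(L_1)\subseteq\Enn$ iff $\quo_k(L')\subseteq\Enn$, so it suffices to decide this and, in the positive case, to produce an automaton for $\quo_k(L')$.

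Next I would split $L'$ by the magnitude of the represented quotient, exactly as in Theorem~\ref{integers}: using Lemma~\ref{lem1}, form the regular languages $L'_1=L'\cap L_{<k^{n+1}}$ and $L'_2=L'\cap L_{\ge k^{n+1}}$. Every element of $\quo_k(L'_1)$ lies in $I[0,k^{n+1})$, so $\quo_k(L'_1)\subseteq\Enn$ iff $\quo_k(L'_1)\subseteq\{0,1,\dots,k^{n+1}-1\}$, which is decidable by Lemma~\ref{lemma1}; and when it holds, $S_1:=\quo_k(L'_1)$ is a finite, explicitly computable set of integers (test each candidate via Lemma~\ref{lem1}). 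For $L'_2$ I would invoke the pumping argument of Theorem~\ref{integers}: if $\quo_k(L')\subseteq\Enn$, then the prime factors of every denominator occurring in $L'_2$ lie among the primes dividing $k$ and the primes $<k^n$, so $D_2:=[\pi_2(L'_2)]_k$ is contained in $\pi(D)$ for a finite set $D$ of primes; since $D_2$, being the image of the regular language $\pi_2(L'_2)$, is $k$-automatic, Theorem~\ref{dkf} shows it is $k$-finite, and since no denominator of $L'$ is divisible by $k$ (this again uses $\quo_k(L')\subseteq\Enn$ together with the absence of trailing $[0,0]$'s), $D_2$ must be finite. Now $D_2$ is the integer-set of the regular language obtained by deleting leading zeros from $\pi_2(L'_2)$, so finiteness of $D_2$ is decidable; if it fails, the algorithm reports $\quo_k(L')\not\subseteq\Enn$. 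If $D_2$ is finite, enumerate it, and for each $d\in D_2$ form the regular language $L'_{2,d}=\{z\in L'_2:\pi_2(z)\in 0^*(d)_k\}$ and test whether the $(\Enn,k)$-automatic set $P_d:=[\pi_1(L'_{2,d})]_k$ is contained in $d\,\Enn$ (the multiples of $d$); this is decidable since $(\Enn,k)$-automatic sets are effectively closed under complement and intersection and have decidable emptiness. I would then verify the equivalence: $\quo_k(L')\subseteq\Enn$ iff $\quo_k(L'_1)\subseteq\{0,\dots,k^{n+1}-1\}$, $D_2$ is finite, and $P_d\subseteq d\,\Enn$ for every $d\in D_2$ — the forward implication being the discussion above, and the converse being immediate, since for $z\in L'_{2,d}$ one has $\quo_k(z)=[\pi_1(z)]_k/d\in\Enn$.

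In the positive case I would construct $M_2$ as follows. We have $\quo_k(L')=S_1\cup\bigcup_{d\in D_2}\{\,p/d:p\in P_d\,\}$, and each set $\{\,p/d:p\in P_d\,\}=\{\,m\in\Enn:dm\in P_d\,\}$ is $(\Enn,k)$-automatic: a finite-state transducer computes $(m)_k\mapsto(dm)_k$ (multiplying by the constant $d$ needs only carries bounded by $d$), and composing it with an automaton for $(P_d)_k$ yields an automaton for $\{\,m:dm\in P_d\,\}$. Taking the effective union of these automata, an automaton for the finite set $S_1$, and, if $b=1$, an automaton for $\{0\}$, produces an $M_2$ with $[L(M_2)]_k=\quo_k(L_1)$.

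The main obstacle is the treatment of $L'_2$: correctness hinges on knowing that as soon as $\quo_k(L')\subseteq\Enn$, the denominators occurring among the ``large'' quotients come from a finite set, and establishing this amounts to re-running the pumping argument of Theorem~\ref{integers} and appealing to Theorem~\ref{dkf}. The complementary observation, that the ``small'' quotients automatically lie in the finite set $\{0,\dots,k^{n+1}-1\}$ once they are integers, is what makes $L'_1$ tractable via Lemma~\ref{lemma1}. The remaining ingredients --- the division-by-a-constant transducer and the effective Boolean closure and decision procedures for $(\Enn,k)$-automatic sets --- are routine.
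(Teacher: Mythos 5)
Your proposal is correct and follows essentially the same route as the paper: normalize away leading/trailing $[0,0]$'s and representations of $0$, split at $k^{n+1}$ via Lemma~\ref{lem1}, dispose of the small quotients with Lemma~\ref{lemma1}, and use the pumping argument of Theorem~\ref{integers} together with Theorem~\ref{dkf} to conclude that the denominators of the large quotients form a finite, effectively enumerable set, after which the divisibility checks and the construction of $M_2$ are routine. The only (harmless) organizational difference is that you test finiteness of the denominator set directly, whereas the paper tests the two sub-conditions (no denominator divisible by $k$; denominators supported on finitely many primes) that jointly imply it.
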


\begin{proof}
We start with the algorithm to determine if $\quo_k(L_1) \subseteq
\Enn$.

Modify $M_1$, if necessary, to accept only representations of nonzero
numbers, and to accept the remaining
words of $L_1$ stripped of
leading and trailing $[0,0]$'s.  Let $M_1$ have $n$ states.

1.  Create, 
using Lemma~\ref{lem1}, a 
DFA $M_3$ accepting a representation of the set
$T_1 := (\quo_k(L_1) \ \cap \ I(0,k^{n+1}) ) \setminus 
\lbrace 0, 1, \ldots, k^{n+1} - 1 \rbrace$.  If $T_1 \not= \emptyset$,
then answer ``no'' and stop.

2. Next, create a DFA $M_4$ accepting the
set $T_2 := (\quo_k(L_1) \ \cap \ I[k^{n+1}, \infty) )$.  
If any denominator ends in $0$, answer ``no'' and stop.

3.  Compute $\pi_2 (L(M_4))$ and,
using Theorem~\ref{decide1}, decide if the
integers so represented are factorable into a finite set of primes.
If not, answer ``no'' and stop.

4.  Otherwise, compute the decomposition in Corollary~\ref{deco}, obtaining 
the finite set of denominators in that decomposition.  Check
whether each denominator divides all of the corresponding
numerators.  If not, answer ``no'' and stop.  Otherwise, answer
``yes''.

To see that this works, note that if some non-integer $\alpha$ belongs
to $\quo_k (L_1)$, then either $\alpha < k^{n+1}$ or $\alpha \geq k^{n+1}$.
In the former case we have $T_1 \not=\emptyset$, so this will be detected
in step 1.  

Otherwise $\alpha \geq k^{n+1}$.  Then either the set of denominators
of $L_1$ do not factor into a finite set of primes (which will be
detected in step 3), or they do.  In the latter case the set of
denominators is $k$-finite, by Theorem~\ref{dkf} and so has a representation
in the form given by Lemma~\ref{kfinite} (b).  Now $k$ cannot divide both a
numerator and denominator, because we have removed trailing $[0,0]$s from
every representation.   So $k$ divides a numerator but not a denominator
if and only if this is detected in step 2.
If steps 2 and 3 succeed, then, there are only finitely many denominators.

Now $\quo_k (L_1) \subseteq \Enn$ if and only if the numerators $n$
corresponding to each of these finitely many denominators $d$ are
actually divisible by $d$.  We can (effectively)
form a partition of $L_1$ according
to each denominator.  Using a machine to test divisibility by $d$,
we can then intersect with each corresponding machine in the partition
to ensure each numerator is indeed divisible.   If so,
we can easily produce an $(\Enn, k)$-automaton accepting $\quo_k (L_1)$.
\end{proof}

\begin{corollary}
There is an algorithm that, given a DFA $M_1$ accepting
$L_1 \subseteq (\Sigma_k^2)^*$ and a DFA $M_2$ accepting
$L_2 \subseteq (\Sigma_k)^*$, will decide
\begin{itemize}
\item[(a)] if $\quo_k(L_1) \subseteq [L_2]_k$;
\item[(b)] if $\quo_k(L_1) = [L_2]_k$.
\end{itemize}
\end{corollary}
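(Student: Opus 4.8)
The plan is to reduce both questions to the decidability of inclusion (hence equality) for $(\Enn,k)$-automatic sets of integers, using Theorem~\ref{algo} as the bridge. The key preliminary observation is that $[L_2]_k \subseteq \Enn$ automatically, since $L_2 \subseteq \Sigma_k^*$; consequently, if $\quo_k(L_1)$ contains any non-integer, then the answer to (a) is immediately ``no'' and so is the answer to (b). The other ingredient is the classical fact that inclusion of regular languages is decidable (test $L \cap \overline{L'}$ for emptiness), so equality of regular languages is decidable as well.

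For part (a), first run the algorithm of Theorem~\ref{algo} on $M_1$ to decide whether $\quo_k(L_1) \subseteq \Enn$. (As in the proof of that theorem, we may first discard representations of $0$ and strip all leading and trailing $[0,0]$'s, so that $\quo_k$ is defined on every remaining word; the set of representations with nonzero denominator is itself regular, e.g.\ by Lemma~\ref{lem1}.) If the answer is ``no'', output ``no''. Otherwise Theorem~\ref{algo} also produces a DFA $M$ with $[L(M)]_k = \quo_k(L_1)$, exhibiting $\quo_k(L_1)$ as a concrete $(\Enn,k)$-automatic set (cf.\ Theorem~\ref{integers}). Now $\quo_k(L_1) \subseteq [L_2]_k$ is a containment of two $(\Enn,k)$-automatic sets of integers. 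Passing effectively to canonical representations --- that is, forming the regular languages obtained from $L(M)$ and from $L_2$ by deleting all leading zeroes, which are precisely $(\quo_k(L_1))_k$ and $([L_2]_k)_k$ respectively --- we see that $\quo_k(L_1) \subseteq [L_2]_k$ holds iff $(\quo_k(L_1))_k \subseteq ([L_2]_k)_k$, and this inclusion of regular languages is decidable.

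For part (b), note that $\quo_k(L_1) = [L_2]_k$ holds iff both $\quo_k(L_1) \subseteq [L_2]_k$ and $[L_2]_k \subseteq \quo_k(L_1)$. The first is exactly part (a); if it fails, output ``no''. If it holds, then (as above) we have a DFA $M$ with $[L(M)]_k = \quo_k(L_1)$, and the reverse inclusion $[L_2]_k \subseteq \quo_k(L_1) = [L(M)]_k$ is, again after passing to canonical representations, a decidable inclusion of regular languages; output ``yes'' precisely when both inclusions hold. The only genuine obstacle is correctly handling the case in which $\quo_k(L_1)$ is not a set of integers, which is disposed of by Theorem~\ref{algo}; once we are in the integer case, the multiplicity of rational representations is no longer an issue because canonical base-$k$ representations are unique, and everything reduces to routine regular-language manipulations. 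One should merely take care, when comparing $[L(M)]_k$ with $[L_2]_k$, to normalize away leading zeroes on both sides, which is harmless since the regular languages are effectively closed under this operation.
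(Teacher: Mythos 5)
Your proposal is correct and follows essentially the same route as the paper: invoke Theorem~\ref{algo} to decide whether $\quo_k(L_1)\subseteq\Enn$ and, if so, to extract a DFA $M$ with $[L(M)]_k=\quo_k(L_1)$, then reduce (a) and (b) to decidable inclusion/equality of regular languages. Your explicit normalization to canonical representations (stripping leading zeroes on both sides before comparing) is in fact a point the paper's own proof glosses over --- it tests $L(M)\setminus L_2$ directly, which could give a spurious ``no'' when the same integer has different-length representations in the two languages --- so your version is, if anything, the more careful one.
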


\begin{proof}
(a)  The algorithm is as follows:
using the algorithm in the proof of Theorem~\ref{algo}, first
determine if $\quo_k(L_1) \subseteq \Enn$.   If not, answer ``no''.
If so, using the algorithm in that proof,
we determine an automaton $M$ such that $\quo_k (L_1) = [L(M)]_k$.
Finally, using the usual cross-product construction, we create
an automaton $M'$ that accepts
$L(M) \setminus L_2$.  If $M'$ accepts anything, then answer ``no'';
otherwise answer ``yes''.

(b) Similar to the previous case.  In the last step, we create an
automaton $M'$ that accepts the symmetric difference
$(L(M) \setminus L_2) \ \bigcup \ (L_2 \setminus L(M))$.
\end{proof}

In particular, it is decidable if $\quo_k(L_1) = \Enn$.

\section{Subword complexity of denominators}

The {\it subword complexity} of a language $L \subseteq \Sigma^*$ is the function
$f_L: \Enn \rightarrow \Enn$ defined by
$f_L(n) = | \Sigma^n \ \cap \ L |$,
the number of distinct words
of length $n$ in $L$.  

The following is a natural question;  suppose $L$ is a language
such that $\quo_k (L) = \Quep$.  What is the smallest possible
subword complexity of the denominators ${\pi_2(L)}$?
If no further restrictions on $L$ are given, then
it is easy to see that $f_{\pi_2(L)}$ can grow arbitrarily slowly (say,
by enumerating the rational numbers and then finding arbitrarily large
representations for each one).  However, if $L$ is restricted to be
regular, then the best we can do is quadratic, as the following two
results show.

\begin{theorem}
If $L$ is a regular language such that $\quo_k(L) = \Quep$, then
$\pi_2(L)$ (resp., $\pi_1(L)$) is not of subword complexity 
$\, o(n^2) \, $.
\label{dd1}
\end{theorem}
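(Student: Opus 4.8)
The plan is to prove the contrapositive: if $\pi_2(L)$ has density $o(n^2)$, then $\quo_k(L)\neq\Quep$. The statement for $\pi_1(L)$ then follows by exchanging the two components of every word of $L$ (and adjoining $[0,1]$ to keep a representation of $0$), which yields a regular language still representing all of $\Quep$ but with numerator and denominator interchanged.

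The first step is to convert low density into rigid structure. A regular language of density $o(n^2)$ has density $O(n)$, and hence is a finite union of languages of the form $a\,y^{*}b\,z^{*}c$ with $a,b,c,y,z$ fixed words, by the classification of regular languages of polynomial density (the extension of the analysis of ``slender'' regular languages). From such a decomposition one extracts: there is a fixed integer $g_{0}$ and a fixed finite collection of triples $(u_{i},v_{i},w_{i})$ such that \emph{every word of $\pi_2(L)$ beginning with at least $g_{0}$ zeroes}, after its leading zeroes are deleted, evaluates to a number of the form $[\,u_{i}v_{i}^{\,l}w_{i}\,]_{k}=C_{i}k^{b_{i}l}+D_{i}$ for some $l\ge 0$, where $C_{i},D_{i}$ are rationals with $C_{i}k^{b_{i}l}+D_{i}\in\Enn$ for all $l$. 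Write $\mathcal B_{\infty}$ for this finite union of one-parameter families. (One can also obtain this without invoking the polynomial-density classification: take a DFA $A$ with $m$ states for $\pi_2(L)$; for $g\ge m$ the state reached from the start by $0^{g}$ lies on a fixed $0$-cycle $p_{1},\dots,p_{r}$, and $0^{g_{i}+tr}L(A,p_{i})\subseteq\pi_2(L)$ for all $t\ge 0$; if some $L(A,p_{i})$ had unbounded density, stacking $0$-blocks would force $\pi_2(L)$ to have density $\Omega(n^{2})$, so each $L(A,p_{i})$ is slender and $\mathcal B_{\infty}=\bigcup_{i\le r}[L(A,p_{i})]_{k}$ works, with $g_{0}=m$.)

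The second step records two simple facts and closes with arithmetic. (a) \emph{Large rationals force many leading zeroes}: if $w\in L$ and $\quo_k(w)=r\ge k^{t}$, then $[\pi_2(w)]_{k}=[\pi_1(w)]_{k}/r<k^{|w|}/k^{t}=k^{|w|-t}$, so the length-$|w|$ string $\pi_2(w)$ begins with at least $t$ zeroes; taking $t\ge g_{0}$ shows that every $r\ge k^{g_{0}}$ in $\quo_k(L)$ has every representation $w$ satisfying $[\pi_2(w)]_{k}\in\mathcal B_{\infty}$. (b) For a positive integer $q$, pick a prime $P>qk^{g_{0}}$; then $P/q\ge k^{g_{0}}$ is in lowest terms, so under the hypothesis $\quo_k(L)=\Quep$ it has a representation $w\in L$, and $q\,[\pi_1(w)]_{k}=P\,[\pi_2(w)]_{k}$ with $\gcd(P,q)=1$ forces $q\mid[\pi_2(w)]_{k}\in\mathcal B_{\infty}$. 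Hence $\mathcal B_{\infty}$ would contain a multiple of every positive integer --- which is impossible: a family with $D_{i}\neq 0$ (and, trivially, a family with $b_{i}=0$, a single value) contributes only values of bounded $\nu_{k}$, say $\le S_{0}$, while a family with $D_{i}=0$ and $b_{i}\ge 1$ contributes the values $C_{i}k^{b_{i}l}$, whose part coprime to $k$ is the fixed integer $C_{i}^{\circ}:=C_{i}/k^{\nu_{k}(C_{i})}$. So a number $k^{S_{0}+1}m$ with $\gcd(m,k)=1$ can divide an element of $\mathcal B_{\infty}$ only when $m$ divides one of the finitely many $C_{i}^{\circ}$; choosing $m$ to be a prime dividing neither $k$ nor any $C_{i}^{\circ}$ gives the contradiction.

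The delicate point is the structural reduction: identifying exactly which pieces of $\pi_2(L)$ can carry arbitrarily long zero-prefixes, and verifying that deleting those prefixes leaves precisely a one-parameter family of values $Ck^{bl}+D$, so that the final $\nu_{k}$ and coprime-part computation is valid. The leading-zero estimate, the choice of the auxiliary prime $P$, and the closing number-theoretic argument are then routine.
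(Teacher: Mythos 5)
Your overall strategy coincides with the paper's: both proofs first invoke the Szilard--Yu--Zhang--Shallit gap theorem to upgrade $o(n^2)$ to $O(n)$ and hence to a decomposition into finitely many doubly-starred pieces, both use the observation that a large quotient forces a long zero prefix on $\pi_2(w)$, and both finish by exhibiting a rational number whose (necessarily divisible-by-$q$) denominator cannot occur. Where you genuinely diverge is the endgame. The paper pushes the structural analysis one step further: it also considers rationals whose denominators are divisible by high powers of $k$, concludes that the relevant pieces have \emph{both} their leading and trailing starred factors in $0^*$, and therefore that the prime factors of all available denominators lie in a fixed finite set $\cal T$; a single new prime $r$ in the denominator then gives the contradiction. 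You stop after the leading-zero reduction, land on one-parameter families $C_ik^{b_il}+D_i$, and compensate with a finer arithmetic split ($D_i\neq 0$ gives bounded $\nu_k$; $D_i=0$ gives a fixed $k$-coprime part), killing everything with $q=k^{S_0+1}m$ for a fresh prime $m$. Both endgames are sound; yours trades the paper's trailing-zero analysis for a more delicate extraction of the one-parameter structure, which you correctly identify as the crux. (One small repair: for composite $k$, $C_i/k^{\nu_k(C_i)}$ need not be coprime to $k$; you should take the largest divisor of $C_i$ coprime to $k$ instead. The argument is unaffected.)

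One warning about your parenthetical ``DFA-only'' route: the claim that an $L(A,p_i)$ of unbounded density forces $\pi_2(L)$ to have density $\Omega(n^2)$ is false as stated. The translates $0^{g_i+tr}L(A,p_i)$ can overlap massively when words of $L(A,p_i)$ themselves begin with zeroes: for $\pi_2(L)=0^*1^*$ one has $L(A,p_i)=0^*1^*$ of linear (unbounded) density, yet $\pi_2(L)$ has density $n+1$. The stacking argument only becomes valid after restricting $L(A,p_i)$ to words not beginning with $0$ (so that distinct $t$ give disjoint sets of words); fortunately that restricted language is exactly what controls the value set $\mathcal B_\infty$, since leading zeroes do not affect $[\cdot]_k$, so the route can be repaired. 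Your primary route through the polynomial-density classification does go through, so the proof stands, but as written the alternative justification is broken at precisely the point you flagged as delicate.
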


\begin{proof}
We prove the result for $\pi_2$; an analogous proof works for $\pi_1$.

Suppose there exists a regular language $L$ with
$\pi_2(L) = o(n^2)$.  Then by a theorem
of Szilard et al.\ \cite{Szilard&Yu&Zhang&Shallit:1992}, we know
that the subword complexity of $\pi_2(L)$ must be $O(n)$.  By another theorem in
that same paper, we know that this implies that
we can write $\pi_2(L)$ as the finite union 
$$ \pi_2(L) = \bigcup_{1 \leq i \leq n} u_i v_i^* w_i x_i^* y_i $$
where the $u_i, v_i, w_i, x_i, y_i$ are (possibly empty) finite words.

Suppose every
$u_i v_i$, for $1 \leq i \leq n$,
contains a nonzero symbol.  Then for every word $z \in L$ we
would have $\quo_k (z) < k^M$, where $M = \max_{1 \leq i \leq n}
|u_i v_i|$, and hence we could not represent arbitrarily large rational
numbers, a contradiction.  It follows that there must be some nonempty subset
$\cal S$ of the indices $\lbrace 1,2,\ldots, n \rbrace$ 
such that $u_i v_i \in 0^*$ for all $i \in \cal S$.   

Similarly, suppose every $x_i y_i$, for $i \in \cal S$, contains
a nonzero symbol.  Then for every word $z \in L$ with
$\pi_2(z) \in \bigcup_{i \in \cal S} u_i v_i^* w_i x_i^* y_i$
we would have $\nu_k(\pi_2(z)) <
N$, where $N = \max_{1 \leq i \leq n} |x_i y_i|$.  But then we could
not represent all rational numbers of the form $(kp+1)/(q k^j)$, where
$j > N$ and $kp + 1 > k^{M+j} q$, a contradiction.  It follows that
there must be some nonempty subset ${\cal S}'$ of $\cal S$ such that
both $u_i v_i$ and $x_i y_i$ are in $0^*$ for all $i \in {\cal S}'$.  

From the above argument, all rational numbers 
of the form $(kp+1)/(q k^j)$ with
$j > N$ and $kp + 1 > k^{M+j} q$ must be represented
by $z \in L$ 
with $\pi_2(z) \in \bigcup_{i \in {\cal S}'} u_i v_i^* w_i x_i^* y_i$.  Since
$u_i v_i$ and $x_i y_i$ are in $0^*$ for each such term, it follows
that the set $\cal T$ of all prime factors of denominators of all these words
is finite (and consists of the prime factors of $k$ and $\pi_2 (w_i)$
for $i \in {\cal S}'$).  Choose any prime $r \not\in \cal T$  
and consider the rational number $r'/(r k^N)$, where $r'$ is any
prime with $r' > r k^{M+N}$.  Then this number has no representation,
a contradiction.
\end{proof}

\begin{theorem}
For each $k \geq 2$,
there exists a regular language $L$ such that $\quo_k (L) = \Quep$,
and $f_{\pi_2(L)} (n) = \Theta(n^2)$.
\label{dd2}
\end{theorem}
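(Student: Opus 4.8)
The plan is to take $L = L_2$, the language from Example~\ref{six}, and simply verify that its denominator density is quadratic. Recall $L_2 = \lbrace w \in (\Sigma_k^2)^* : \pi_2(w) \in 0^* 1^+ 0^* \rbrace$. Example~\ref{six} already establishes that $\quo_k(L_2) = \Quep$, so the only thing left is to bound $f_{\pi_2(L_2)}$.

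First I would note that, as a subset of $\Sigma_k^*$, we have $\pi_2(L_2) = 0^* 1^+ 0^*$: every word $0^i 1^j 0^l$ with $j \geq 1$ is realized as $\pi_2(w)$ for some $w \in L_2$ (take $\pi_1(w)$ to be any word of the same length, since $\pi_1$ is unconstrained in $L_2$), and conversely by definition every $\pi_2(w)$ with $w \in L_2$ has this shape. Next I would count the words of length $n$ in $0^* 1^+ 0^*$. Such a word is determined uniquely by a triple $(i, j, l)$ of non-negative integers with $j \geq 1$ and $i + j + l = n$; writing $j = j' + 1$ with $j' \geq 0$ this equals the number of non-negative solutions of $i + j' + l = n - 1$, namely $\binom{n+1}{2} = n(n+1)/2$ by stars and bars. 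Hence $f_{\pi_2(L_2)}(n) = n(n+1)/2 = O(n^2)$, which is the claimed bound (and, combined with Theorem~\ref{dd1}, shows it is optimal up to a constant, so that $L_2$ indeed achieves the minimum denominator density among regular languages representing $\Quep$, as promised in Example~\ref{six}).

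There is essentially no obstacle: the construction is exactly the one used in Example~\ref{six}, and the density count is elementary. The only point that deserves an explicit sentence is the identity $\pi_2(L_2) = 0^* 1^+ 0^*$ — that the projection of the regular language coincides with the language used to define it — which is immediate since the first coordinate is entirely free in $L_2$.
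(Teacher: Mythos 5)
Your proposal is correct and matches the paper's proof: both take $L=L_2$ from Example~\ref{six}, rely on that example for $\quo_k(L_2)=\Quep$, and count exactly $n(n+1)/2$ words of length $n$ in $0^*1^+0^*$ (the paper sums $1+2+\cdots+n$ where you use stars and bars, an immaterial difference). Your explicit remark that $\pi_2(L_2)=0^*1^+0^*$ because the first coordinate is unconstrained is a small but welcome extra precision.
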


\begin{proof}
Let
$$L_2 = 
\lbrace w \in (\Sigma_k^2)^* \ : \ \pi_2 (w) \in 0^* 1^+ 0^* \rbrace $$
be the language given above in Example~\ref{six}.  We claim that
there are exactly $n(n+1)/2$ words of
length $n$ in $\pi_2(L_2)$.  
To see this, count words of the form
$0^* 1^+ 0^*$ of length $n$.
There is $1$ word consisting of
all $1$'s, $2$ words with $n-1$ consecutive $1$'s, $3$ words
with $n-2$ consecutive $1$'s, and so forth, for a total
of $1 + 2 + \cdots + n = n(n+1)/2$ words.
\end{proof}

\section{Open problems}
\label{open}

     There are a number of open problems raised by this work.  The
most outstanding one is a generalization of Cobham's theorem \cite{Cobham:1969} to the setting of $k$-automatic sets of rationals.

     Let $r \geq 1$ be an integer.
We say a set $A \subseteq \Enn^r$ is {\it linear} if there
exist vectors $v_0, v_1, \ldots, v_i \in \Enn^r$ such that
$$ A = \lbrace v_0 + a_1 v_1 + \cdots + a_i v_i \ : \ a_1, a_2, \ldots,
a_i \in \Enn \rbrace.$$ 
We say a set is {\it semilinear} if it is the finite union of linear
sets.

      Given a subset $A \subseteq \Enn^2$,
we can define its set of quotients $q(A)$ to be $\lbrace p/q \ : \
[p,q] \in A \rbrace$.

\begin{conjecture}
$S$ is a set of rational numbers that is simultaneously
$k$- and $l$-automatic for multiplicatively independent integers
$k, l \geq 2$ if and only if
there exists a semilinear set $A \subseteq \Enn^2$ such that
$S = q(A)$.
\end{conjecture}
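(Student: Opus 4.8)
The plan is to prove the two directions separately. The backward direction is essentially classical. Suppose $S = q(A)$ for a semilinear set $A \subseteq \Enn^2$. Intersecting $A$ with the semilinear set $\Enn \times (\Enn \setminus \lbrace 0 \rbrace)$ removes exactly the pairs with zero denominator and does not change $q(A)$, so we may assume every pair in $A$ has nonzero second coordinate. By Cobham's theorem \cite{Cobham:1969} that every semilinear subset of $\Enn^r$ is recognizable in every base, for each $b \geq 2$ the language $L_b \subseteq (\Sigma_b^2)^*$ consisting of all base-$b$ representations of the pairs in $A$ is regular, and $\quo_b(L_b) = q(A) = S$ directly from the definitions. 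Hence $S$ is $b$-automatic for every $b \geq 2$; this half uses neither multiplicative independence nor anything special about $k$ and $l$.

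The forward direction is the substance, and the idea is to reduce it to the Cobham--Semenov theorem for subsets of $\Enn^2$. One cannot simply lift $S$ to the set of \emph{all} pairs $(p,q)$ with $p/q \in S$, nor to the set of \emph{reduced} such pairs: by Remark~\ref{rem1} neither divisibility nor coprimality is even context-free, so neither lift need be $k$-recognizable --- indeed for $S = \lbrace 1/n \, : \, n \geq 1 \rbrace$ the first lift is the non-context-free language $L_d$, while for $S = \Quep$ the reduced lift is not semilinear. The leeway the statement allows --- and the reason unreduced and repeated representations are permitted in $A$ --- is that we may choose \emph{any} convenient semilinear $A$ with $q(A) = S$. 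Accordingly the plan is: (1) establish a structure theorem expressing an arbitrary $k$-automatic set of rationals as a finite union of ``elementary'' pieces of a restricted shape, each built from a $k$-automatic set of \emph{integers} together with finitely many rational parameters (for instance pieces of the forms $\alpha + gE$, $\lbrace \alpha/n \, : \, n \in E \rbrace$, and bounded intervals), using the machinery developed here --- Theorem~\ref{integers}, Theorem~\ref{dkf}, Corollary~\ref{deco}, and a pumping/limit analysis in the spirit of Lemma~\ref{mainl}; (2) check that each elementary piece built from an integer parameter set $E$ equals $q(A_E)$ for an explicit semilinear $A_E$ \emph{whenever $E$ is ultimately periodic}; (3) observe that if $S$ is simultaneously $k$- and $l$-automatic then so is each integer parameter set occurring in its decomposition, whence by Theorem~\ref{integers} and classical Cobham \cite{Cobham:1969} those sets are ultimately periodic; (4) take the union of the finitely many semilinear sets $A_E$, since a finite union of semilinear sets is semilinear.

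The main obstacle is step (1). For $S \subseteq \Enn$, Theorem~\ref{integers} works because after deleting trailing $[0,0]$'s the set of denominators is \emph{finite}; a general $k$-automatic set of rationals has unbounded and delicately structured denominators, and it is not a priori clear what the right finite list of elementary shapes is. I expect one must analyse, via the pumping lemma, the joint effect of a cycle of a DFA over $\Sigma_k^2$ on numerator and denominator simultaneously: along any cycle either the quotient is eventually constant (case (ii) of Lemma~\ref{mainl}) or only the $k$-part of the denominator together with an additive or multiplicative integer parameter varies while the part of the denominator coprime to $k$ stays frozen; iterating this over all the cycles should then yield a normal form. A second, subtler point is that this reduction must be carried out \emph{without ever passing to reduced representations}, since coprimality is not regular --- the $\Quep$ example already shows the semilinear witness $A$ genuinely must be allowed to contain unreduced pairs --- so the structure theorem of step (1) should be stated and proved entirely in terms of possibly unreduced representations.

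The base cases of the induction are immediate and worth recording: if $S$ is finite, or more generally $S \subseteq \Enn$, the conjecture reduces to classical Cobham (applied to $S$ as a set of integers, via Theorem~\ref{integers}) together with the fact that an ultimately periodic $S \subseteq \Enn$ equals $q(\lbrace (n,1) \, : \, n \in S \rbrace)$ with $\lbrace (n,1) \, : \, n \in S \rbrace$ semilinear. A possibly cleaner alternative to step (1) would be to phrase $k$-automaticity of $\quo_k(L)$ in terms of definability of the underlying pair relation in the structure $(\Enn, +, V_k)$ (with $V_k(n)$ the largest power of $k$ dividing $n$) and to seek a Cobham--Semenov-style transfer directly at the level of quotients; the difficulty there is that the relation $pq' = p'q$ is not definable in $(\Enn,+,V_k)$, which is precisely why the ``all representations'' lift is unavailable.
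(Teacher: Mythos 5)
The statement you are addressing is an open conjecture in the paper; the authors prove neither direction, beyond remarking that one direction is clear, so there is no proof of the forward implication to compare yours against. Your backward direction is correct and coincides with the paper's one-line observation: from a semilinear $A$ one builds, for any base $b$, an automaton accepting base-$b$ representations of the pairs in $A$, whence $S = q(A)$ is $b$-automatic for every $b$. (The fact you invoke --- that every semilinear subset of $\Enn^r$ is recognizable in every base --- is the easy half of the Cobham--Semenov circle of results and is due essentially to B\"uchi; Cobham's 1969 theorem, which you cite for it, is the converse statement for $r=1$.)

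The forward direction, however, is a programme rather than a proof, and you say so yourself: step (1), the structure theorem decomposing an arbitrary $k$-automatic set of rationals into finitely many elementary pieces each parametrized by a $k$-automatic set of integers, is precisely the missing ingredient, and nothing in the paper supplies it. Theorem~\ref{integers} and Corollary~\ref{deco} control only the case $\quo_k(L) \subseteq \Enn$, where the argument hinges on the set of denominators being forced to be finite; for a general $k$-automatic set of rationals the denominators are unbounded, and no analogous normal form is established here or elsewhere in the paper. Your observation that any such decomposition must avoid passing to reduced representations (since divisibility and coprimality are not even context-free, by Remark~\ref{rem1}) correctly identifies an obstruction, but identifying it is not overcoming it; likewise the pumping/limit analysis you gesture at via Lemma~\ref{mainl} is only a heuristic for what the elementary shapes might be. Steps (2)--(4) are therefore conditional on an unproved lemma, and the proposal does not establish the conjecture --- which, to be clear, the paper itself leaves open.
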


One direction of this conjecture is clear, as given $A$ we can easily
build an automaton to accept the base-$k$ representation of the
set of rationals $q(A)$.  The converse, however, is
not so clear.

     We now turn to some decision problems whose status we have not
been able to resolve.  Which of the following problems, if any, are
recursively solvable?

Given a DFA accepting
$L \subseteq (\Sigma_k^2)^*$ representing a $k$-automatic
set of rationals $S$,

\begin{enumerate}

\item  does $S$ contain at least one integer?  (alternatively:
is there $(p,q)_k \in L$ such that $q \divides p$?)

\item does $S$ contain infinitely many integers?  

\item are there infinitely many $(p,q)_k \in L$ such that
$q \divides p$?

\item is there some rational number $p/q \in S$ having infinitely
many distinct representations in $L$?

\item are there infinitely many distinct rational numbers $p/q \in S$
having infinitely many distinct representations in $L$?

\item do all rational numbers $p/q \in S$ have infinitely many 
distinct representations in $L$?

\item do all the $(p,q)_k \in L$ have $\gcd(p,q) = 1$?

\item does any $(p,q)_k \in L$ have $\gcd(p,q) = 1$?

\item is $\max_{(p,q)_k \in L} \gcd(p,q)$ bounded?

\end{enumerate}

In the particular case where $S = \Quep$, we have the following conjecture.

\begin{conjecture}
If $L$ is a regular language with $\quo_k (L) = \Quep$, then 
$L$ contains infinitely many distinct representations for infinitely
many distinct rational numbers.
\end{conjecture}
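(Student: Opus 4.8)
The natural plan is to argue by contradiction, reducing the conjecture to a statement about regular languages in which \emph{every} rational number has only finitely many representations, and then to attack that statement by splitting on the density of the language.

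First I would set up the reduction. Suppose the conclusion fails, so that only finitely many rational numbers have infinitely many representations in $L$; call this finite set $E$. By Lemma~\ref{lem1}(a) each language $L_{=\alpha}$ is regular, so $L' := L \setminus \bigcup_{\alpha \in E} L_{=\alpha}$ is regular; since each word represents exactly one rational, $\quo_k(L') = \Quep \setminus E$, and now every rational has only finitely many representations in $L'$. Because $E$ is finite, $\quo_k(L')$ is still dense in the nonnegative reals. Hence it suffices to prove the following: \emph{if $L$ is regular and every rational has only finitely many representations in $L$, then $\overline{\quo_k(L)}$ is nowhere dense.} Applied to $L'$ this contradicts the density of $\quo_k(L')$.

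Under this hypothesis Lemma~\ref{mainl} becomes a strong tool: for any decomposition $x = uvw$ arising from pumping a word of $L$ (so $uv^iw \in L$ for all $i$, $|v| \ge 1$), case (ii) of Lemma~\ref{mainl} is impossible, since it would exhibit infinitely many representations $uv^iw$ of the single rational $\quo_k(uw)$; hence $\quo_k(uv^iw)$ is \emph{strictly} monotone and converges to $\gamma_k(u,v)$ (or to $U$ in the degenerate cases of that lemma). I would first treat the case that $f_L(n)$ is polynomially bounded: by the structure theorem of Szilard et al.\ \cite{Szilard&Yu&Zhang&Shallit:1992}, $L$ is then a finite union of languages of the form $T = w_0 x_1^* w_1 \cdots x_m^* w_m$, and by induction on $m$, using Lemma~\ref{mainl} at the outermost star, one shows that $\overline{\quo_k(T)}$ has finite Cantor--Bendixson rank (at most $m+1$): the base case $m=0$ is a single point, and for the inductive step, fixing the exponents on $x_2,\dots,x_m$ and pumping $x_1$ gives a strictly monotone sequence converging to a value of the $(m-1)$-star language $w_0 w_1 x_2^* \cdots x_m^* w_m$, so $\quo_k(T)$ is that $(m-1)$-star set with a family of monotone sequences adjoined, and (after a routine check that these sequences cluster as the corresponding $\gamma_k$-values do) the derived set is controlled. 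A set of finite Cantor--Bendixson rank is scattered, hence countable, hence nowhere dense; a finite union of such sets is again nowhere dense, so this case is done.

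The main obstacle is the exponential-density case, and this is where the full hypothesis $\quo_k(L) = \Quep$ --- rather than merely ``$\quo_k(L)$ dense'' --- must be used in an essential way: the Cantor--Bendixson bookkeeping above does not obviously go through, since the first derived set of $\quo_k(L)$ is, heuristically, again the $\quo_k$-image of a regular language over the same state set (accumulation points arise as values $\gamma_k(u,v)$, and as the prefix $u$ ranges over all paths to a fixed state these tend to the quotients of those prefixes), so the rank need not drop. What seems to be needed is a separate lemma: a regular language of exponential density whose quotient set is infinite must contain some rational with infinitely many representations. This is consistent with all examples --- e.g.\ $\{[1,0],[0,1]\}^*[1,1]^+$ has exponential density and dense quotient set, and one checks that $1/5$ (for instance) has infinitely many representations in it --- and the heart of such a lemma should be that the branching cycles forced by exponential density, together with the nonbranching parts of the automaton, must produce a quotient-preserving loop; a natural tool is to combine Lemma~\ref{mainl} with the decomposition of Corollary~\ref{deco} applied to the (non-regular but analyzable) fibers of $\quo_k$ over $\Enn$, or over a family such as $\{1/n : n \ge 1\}$, using that $\quo_k(L) = \Quep$ contains all of them. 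Settling this exponential-density regime is, I expect, exactly why the statement is posed as a conjecture.
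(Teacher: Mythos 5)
This statement is one of the paper's open \emph{conjectures}; the paper offers no proof of it, so there is nothing to match your argument against, and the only question is whether your proposal closes the problem. It does not, and not only because of the exponential-density case you explicitly leave open. The more serious difficulty is that the intermediate statement you reduce to --- ``if $L$ is regular and every rational has only finitely many representations in $L$, then $\overline{\quo_k(L)}$ is nowhere dense'' --- is false, as is the auxiliary lemma you propose for the exponential regime (``exponential density plus an infinite quotient set forces some rational to have infinitely many representations''). Take $k=2$ and
$$L \ = \ \lbrace (p, 2^j)_2 \ : \ p \text{ odd},\ j \geq 0 \rbrace,$$
i.e.\ the words whose second projection lies in $0^*10^*$ and whose first projection ends in the digit $1$. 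This is regular and of exponential density; since $p$ is odd, every pair $(p,2^j)$ is already in lowest terms, so each represented rational $p/2^j$ has exactly one representation in $L$; yet $\quo_2(L)$ is the set of dyadic rationals with odd numerator, which is dense in the nonnegative reals. So the hypothesis ``$\quo_k(L)$ is dense'' (which is all that survives your reduction from $\quo_k(L)=\Quep$ to $\quo_k(L')=\Quep\setminus E$) genuinely cannot do the work; any viable attack must exploit that \emph{every} rational, in particular every reduced fraction $p/q$ with $q$ having a prime factor outside any prescribed finite set, admits a representation --- a point you gesture at only in your last paragraph, but which already invalidates the target lemma of your first paragraph.

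The polynomial-density half is also only a sketch: the Cantor--Bendixson bookkeeping in the inductive step (``a routine check that these sequences cluster as the corresponding $\gamma_k$-values do'') is precisely where the content lies, and note that by Lemma~\ref{mainl} the limit of $\quo_k(uv^iw)$ depends only on $u$ and $v$, not on $w$, so the derived set is not literally the quotient set of the $(m-1)$-star language as you describe. That said, this part is at least plausibly repairable. The reduction framework (removing the finitely many exceptional fibers $E$ via Lemma~\ref{lem1}(a)) is fine as far as it goes. The overall verdict is that the proposal does not prove the conjecture, and its two load-bearing unproved claims are refuted by the example above.
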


We note that it is possible to construct a regular $L$ with
$\quo_k(L) = \Quep$ where there are infinitely many rational numbers
with exactly one representation.  For example, if $k = 2$, we can take the
language $L_3$ of Example~\ref{three} and add back a single representation for
each element of the set $\lbrace 2^n \ : \ n \in \Zee \rbrace$.

Here are some additional decision problems whose status (recursively
solvable or unsolvable) we have not been able to resolve:

Given DFA's $M_1$ and $M_2$ accepting languages $L_1, L_2 \subseteq
(\Sigma_k^2)^*$ representing sets $S_1, S_2 \subseteq \Quep$,

\begin{enumerate}

\item is $S_1 = S_2$?

\item is $S_1 \subseteq S_2$ ?

\item is $S_1 \ \cap \ S_2 \not= \emptyset$?

\end{enumerate}

We hope to address some of these questions in a future paper.

It has long been known that 
$\Th(\Enn, +, |)$, where $|$ denotes the divisibility relation,
is undecidable; see, for example, \cite[pp.\ 78--79]{Tarski:1953}.
We observe that the decidability of assertions involving even just
two or three quantifiers, divisibility, and automata,
would allow the solution of two classic
open problems from number theory.

\begin{example}
Consider the language
$L_7 \subseteq (\Sigma_2^2)^*$ defined by words with first
component representing a numerator $n$ of the form $2^i + 1$ for $i > 32$ and
denominator an odd number $d$  with $1 < d < n$.  
This is clearly regular.  Now consider the assertion
$$\exists p\  \forall q  \ (q \mid p) \implies (p,q)_2 \not\in L_7 . $$
This assertion is true if and only if there exists a Fermat prime
greater than $2^{32} + 1$.
\end{example}

\begin{example}
Consider the language
$L_8 \subseteq (\Sigma_2^2)^*$ defined by words with first
component representing a numerator $n$ of the form $2^i - 1$ for $i \geq 3$,
and denominator an odd number $d$ with $1 < d < n$.
This is easily seen to be regular.  
Now consider the assertion
$$ \forall t \ \exists p > t \ \forall q \ (q \mid p) \implies (p,q)_2 \not\in L_8 .$$
This assertion is true if and only if there are infinitely many Mersenne primes.
\end{example}

\section{Acknowledgments}

We thank \'Emilie Charlier, Luke Schaeffer,
Jason Bell, Narad Rampersad, Jean-Paul Allouche, and Thomas
Stoll for their helpful comments.

\end{document}